\newtheorem{theorem}{Theorem}[section]
\newtheorem{lemma}[theorem]{Lemma}
\newtheorem{proposition}[theorem]{Proposition}
\newtheorem{definition}[theorem]{Definition}
\newtheorem{remark}[theorem]{Remark}
\newtheorem{fact}[theorem]{Fact}
\newtheorem{example}[theorem]{Example}
\newtheorem*{conjecture*}{Conjecture}
\newtheoremstyle{nonindented}{1ex}{1ex}{}{}{\bfseries}{.}{.5em}{}
\newtheoremstyle{indented}{1ex}{1ex}{\itshape\addtolength{\leftskip}{0.6cm}\addtolength{\rightskip}{0.6cm}}{}{\bfseries}{.}{.5em}{}
\theoremstyle{nonindented}
\theoremstyle{indented}
\theoremstyle{plain}
\newcommand{\set}[1]{\left\{ #1 \right\}}
\renewcommand{\hat}{\widehat}
\renewcommand{\tilde}{\widetilde}
\renewcommand{\bar}{\overline}
\DeclareMathOperator{\poly}{poly}
\def\min{\qopname\relax n{min}}
\def\max{\qopname\relax n{max}}
\def\maxs{\qopname\relax n{max2}}
\def\argmax{\qopname\relax n{argmax}}
\def\Pr{\qopname\relax n{\mathbf{Pr}}}
\def\Ex{\qopname\relax n{\mathbf{E}}}
\newcommand{\RR}{\mathbb{R}}
\def\F{\mathcal{F}}
\def\I{\mathcal{I}}
\def\P{\mathcal{P}}
\def\S{\mathcal{S}}
\def\O{\mathcal{O}}
\def\eps{\epsilon}
\def\sse{\subseteq}
\newcommand{\mini}[1]{\mbox{minimize} & {#1} &\\}
\newcommand{\maxi}[1]{\mbox{maximize} & {#1 } & \\}
\newcommand{\st}{\mbox{subject to} }
\newcommand{\con}[1]{&#1 & \\}
\newcommand{\qcon}[2]{&#1, & \mbox{for } #2.  \\}
\newenvironment{lp}{\begin{equation}  \begin{array}{lll}}{\end{array}\end{equation}}
\newenvironment{lp*}{\begin{equation*}  \begin{array}{lll}}{\end{array}\end{equation*}}
\begin{document}
\title{On the Tractability of Public Persuasion with No Externalities} 
\author{Haifeng Xu  \\ 
	University of Virginia \\
		{\tt hx4ad@virginia.edu}}

\date{}

\begin{titlepage}
	\clearpage\maketitle
	\thispagestyle{empty}

\begin{abstract}
	
\emph{Persuasion} studies how a principal can influence agents' decisions  via strategic information revelation --- often described as a \emph{signaling scheme} ---  in order to yield the most desirable equilibrium outcome. 
A basic question that has attracted much recent attention is how to compute the optimal public signaling scheme, a.k.a., \emph{public persuasion}, which is motivated by various applications including auction design, routing, voting, marketing, queuing, etc. Unfortunately, most algorithmic studies  in this space exhibit quite negative results and are rifle with computational intractability.  Given such background, this paper seeks to understand when public persuasion is  tractable and how tractable it can be. We focus on
a fundamental multi-agent persuasion model introduced by Arieli and Babichenko \cite{Arieli2016}: many agents, \emph{no} inter-agent externalities and binary agent actions, and identify well-motivated circumstances under which efficient algorithms are possible.   
En route, we also develop new algorithmic techniques and demonstrate that they can be  applicable to other public persuasion problems or even beyond. 

We start by proving that optimal public persuasion in our model is fixed parameter tractable. Our main result here builds on an interesting connection to a basic question in  combinatorial geometry: \emph{how many cells can $n$ hyperplanes divide $\RR^d$ into?} We use this connection to show a new characterization of public persuasion, which then enables efficient algorithm design. Second, we relax agent incentives and show that  optimal public persuasion admits a bi-criteria PTAS  for the widely studied class of monotone submodular objectives, and this approximation  is tight. To prove this result, we  establish an intriguing ``noise stability'' property of submodular functions which strictly generalizes the key result of Cheraghchi \emph{et al.} \cite{CKKL12}, originally motivated by applications of learning submodular functions and differential privacy. Finally,  motivated by automated application of persuasion, we consider relaxing the equilibrium concept of the model to coarse correlated equilibrium. Here, using a sophisticated primal-dual analysis, we prove that optimal public persuasion admits an efficient algorithm \emph{if and only if } the combinatorial problem of maximizing the sender's objective minus any linear function can be solved efficiently, thus establishing their polynomial-time equivalence.

\end{abstract}

\end{titlepage}

\newpage

\section{Introduction}
The study of how beliefs influence agents' decisions has a rich history and forms the celebrated field of information economics. Along this line, there has been a recent surge of interest  in understanding how a principal can strategically shape agents' beliefs  in order to steer their collective decisions towards the most desirable equilibrium outcome. This task ---  often referred to as \emph{persuasion} or \emph{signaling} ---  is intrinsically an optimization problem, and  has attracted much  attention in algorithmic game theory  \cite{mixture_selection,Dughmi2016,babichenko2016,Dughmi2017algorithmic,Dughmi2018hardness}. Such interest in algorithms is partially driven by the potential automated application of persuasion in domains such as  auctions \cite{Emek12,Miltersen12}, traffic routing \cite{Bhaskar2016}, recommendation systems \cite{Mansour2016bayesian}, marketing \cite{babichenko2016,Ozan19}, customer queuing \cite{Lingenbrink2019},  voting \cite{Alonso14,mixture_selection}, and security \cite{Xu15,Rabinovich15}.

 A foundational model that illustrates the essence of persuasion is the Bayesian persuasion (BP) model of Kamenica and Gentzkow \cite{Kamenica2011}. Here there are two players, a  \emph{sender} (she) and a \emph{receiver} (he). The receiver is faced with selecting one of a number of actions, the payoffs of which are uncertain and depend on a \emph{state of nature} drawn from a prior distribution known to both players. However, the sender possesses an informational advantage and can observe the \emph{realized} state of nature. 
In order to improve her utility, the sender can commit to a policy --- often known as a \emph{signaling scheme} ---  to strategically reveal her information before the receiver selects his action. 
Optimal persuasion refers to the sender's task of designing the signaling scheme to maximize her utility.  

In many applications, the sender faces \emph{multiple receivers}  and can reveal information to each of them. 
In such settings, there are generally two basic signaling models. The first is  \emph{private  persuasion} where the sender can reveal different information to different receivers through a private communication channel. The second  model --- which is the focus of this paper --- is \emph{public persuasion} where the sender  reveals the same information to each receiver via a \emph{public signaling scheme}.  Though private persuasion  generally yields higher sender utility, the study of public persuasion has been the focus in much of previous literature (e.g.,   \cite{Emek12,Miltersen12,Alonso14,mixture_selection,Bhaskar2016,lingenbrink2018signaling,Dughmi2018hardness,Ozan19}). This is due to several reasons. First, in many settings, there are too many receivers and privately communicating with each receiver is either too costly or impractical (e.g., persuading a large population of voters). Second, private persuasion assumes that receivers do \emph{not} share their signals/information with each other, which does not hold in many applications. Third, sometimes revealing disparate information is undesirable due to concerns such as unfairness.

In a seminal work, Kamenica and Gentzkow characterize optimal public persuasion as the \emph{concavification} of the sender's objective function \cite{Kamenica2011}. Interestingly, despite this simple mathematical characterization, the corresponding algorithmic problem of efficiently computing the optimal public scheme appears quite difficult. 
This is due to at least two key dimensions of challenges in public persuasion, described as follows.   
\begin{enumerate}\vspace{-1mm}
	\item \emph{Externalities among receivers}, as evidenced by previous hardness results for perhaps the most basic setting in this space. Specifically, \cite{Bhaskar2016,rubinstein2015eth,Dughmi2018hardness} consider  a sender who looks to persuade two receivers playing a Bayesian \emph{zero-sum} game, and rule out any  Polynomial Time Approximation Scheme (PTAS) for optimal public persuasion assuming planted-clique hardness or the Exponential Time Hypothesis (ETH). A QPTAS was provided in \cite{mixture_selection}, which essentially matches the complexity lower bound. 
	\vspace{-1mm}
	
	\item \emph{Coordinating the decisions of many receivers}.  This is illustrated by the hardness results for another basic model introduced by Arieli and Babichenko \cite{Arieli2016}: many receivers with \emph{no inter-agent externalities} and binary receiver actions. Dughmi and Xu  \cite{Dughmi2017algorithmic} prove that in this basic setting it is NP-hard to obtain a PTAS for optimal public persuasion even for linear sender objectives. So far no tractable  settings or efficient approximate algorithms  are known for this model. 
\end{enumerate}

Given these discouraging messages, the main \emph{conceptual motivation} of this paper is to understand when public persuasion is tractable and how tractable it can be.  Concretely, we focus on tackling the \emph{second} challenge above --- that of coordinating the decisions of receivers ---  in the fundamental multi-agent persuasion model by Arieli and Babichenko  with \emph{no} inter-receiver externalities. Each receiver takes a binary action from $\{ 0 ,1  \}$,  so the sender's utility function  is a \emph{set function} over receivers. This model nicely serves our purpose because it ``disentangles'' the complexity of  coordinating receivers from the complexity due to agent externalities, and thus allows us to focus on the former. 
Practically, the model also finds application in domains such as marketing \cite{babichenko2016},  voting \cite{Arieli2016}, influence on networks \cite{Ozan19} and customer queuing \cite{Lingenbrink2019}, to name a few.  We remark that there is no lack of algorithmic study of this model, however previous works all primarily focused on \emph{private} persuasion \cite{Arieli2016,babichenko2016,Dughmi2017algorithmic}.  
In contrast, we design efficient and tight algorithms for \emph{public} persuasion in well-motivated circumstances.  En route, we also develop new algorithmic techniques and demonstrate that they are applicable to other public persuasion problems or even beyond.

 \subsection{Our Results and Techniques}
We first prove that public persuasion in our model  is \emph{fix-parameter tractable}.  Our main result here shows that  the optimal public signaling scheme can be computed in polynomial time for \emph{arbitrary sender objective} (a set function) --- even those which are intractable to optimize directly --- when the number of states of nature is a constant. This is a surprise to us since  the restriction to a small state space does \emph{not} appear to simplify the problem at the first glance. Indeed, Babichenko and Barman \cite{babichenko2016} prove that optimal private persuasion is APX-hard even in the case with only \emph{two} states of nature and monotone submodular sender objectives.  Our algorithm here is based on an interesting connection between public persuasion and a basic question in combinatorial geometry: \emph{how many cells can $n$ hyperplanes  divide the space of $\RR^d$ into?} We show that a constructive answer to this question gives rise to a new characterization of public schemes, which leads to the design of an efficient algorithm. To illustrate the power of this technique, we also show its applicability to another widely studied persuasion problem, yielding new algorithmic result for that model as well. 

Next, we consider public persuasion with \emph{relaxed receiver incentives}. Here our results concern the setting where the sender objective is state-independent. This important special case is also the focus of many previous works including the original model of Arieli and Babichenko \cite{Arieli2016}, and is realistic in various applications including voting \cite{Alonso14}, marketing \cite{babichenko2016,Ozan19} and customer queuing \cite{Lingenbrink2019} to name a few.       
We exhibit a PTAS for public persuasion with a \emph{bi-criteria} guarantee for  \emph{monotone submodular} sender objectives. This is  essentially the best possible, as we  show that there is no bi-criteria FPTAS neither single-criteria PTAS, unless P=NP.   
We also illustrate an interesting contrast between public and private persuasion by proving that  it becomes NP-hard   to obtain a bi-criteria PTAS for private persuasion in the same setting. Notably, our algorithm also works for \emph{non-monotone} submodular objectives, but with a slightly worse guarantee for the sender's expected utility.    
Our result here is built upon an intriguing property of submodular functions which we believe may be of independent interest. In particular, we prove that the evaluation of submodular functions is \emph{noise-stable}: to evaluate a submodular function $f: 2^{[n]} \to \RR_+$ at any set $S \subseteq [n]$, if one slightly perturbs set $S$ to generate a  random set $T$ by adding to $S$ or deleting from $S$  any $i \in [n]$ with probability at most $\epsilon$ in an \emph{arbitrarily correlated} manner, the expected function value $\Ex_{T} f(T)$ will not decrease much. Formally, we prove $\Ex_{T} f(T) \geq (1-2 \epsilon)f(S) $ for submodular functions whereas   $\Ex_{T} f(T)$ may decrease to $(1-n\epsilon)f(S)$ in general. This ``noise stability'' property of submodular functions strictly generalizes the key result of Cheraghchi \emph{et al.} \cite{CKKL12}, who also proved noise stability of submodular functions but under a weaker notion of noise with  only \emph{independent} random perturbations (see Section \ref{sec:bicriteria:discuss} for more discussions).    




Finally, we consider public persuasion under \emph{relaxed equilibrium} conditions.  Classical persuasion models assume that receivers play a Bayesian Nash equilibrium in any public signaling scheme, which is also a Bayes correlated equilibrium \cite{Bergemann16Bayes} in our model due to the absence of externalities. Motivated by automated applications of persuasion schemes implemented as software (e.g., recommendation systems), we relax this equilibrium concept to the coarse correlated equilibrium\footnote{Coarse correlated equilibrium for Bayesian games has been studied in other settings such as auctions \cite{Cai2014,Caragiannis2015bounding}, congestion games \cite{roughgarden2015intrinsic} and general Bayesian games \cite{Hartline2015no}, and was coined \emph{Bayesian coarse correlated equilibrium} by Hartline et al. \cite{Hartline2015no}. }  and assume that each receiver decides to either follow the signaling scheme (i.e., adopting the software) or act based to his prior belief (i.e., abandoning the software).   Under this relaxation, we prove that optimal public persuasion admits an efficient algorithm \emph{if and only if} the combinatorial problem of maximizing the sender's objective function minus any linear function (subject to no constraints) can be efficiently solved, establishing their polynomial-time equivalence. Our proof uses a sophisticated primal-dual analysis, generalizes previous techniques in \cite{Dughmi2017algorithmic} to get rid of their assumption on the monotonicity of the objective, and thus applies to a broader class of objective functions.  

\section{The Model and Preliminaries}
 \subsection{Basic Setup} 

We are a  \emph{sender} facing $n$ \emph{receivers}, denoted by set  $[n] = \{1,\cdots, n \}$.  Each receiver has two actions, denoted as action $0$ and $1$. The receiver's payoff depends only on his own action and a random \emph{state of nature} $\theta$ supported on $\Theta$. In particular, let  $u_i(\theta,1)$ and $u_{i}(\theta,0)$ denote receiver $i$'s utility for action $1$ and action $0$, respectively, in state $\theta$;  as shorthand, we use $u_i(\theta) = u_i(\theta,1) - u_{i}(\theta,0)$ to denote how much receiver $i$ prefers action $1$ over action $0$ given state of nature $\theta$. Let $u_i\in \RR^{\Theta}$ denote the vector containing $u_i(\theta)$ for all $\theta$ and call it the \emph{payoff vector} of receiver $i$.  
The sender's utility  is a function of all the receivers' actions and the state of nature $\theta$.  Let $f_{\theta}(S)$ denote the sender's utility  when the state of nature is $\theta$ and $S$ is the set of receivers who choose action $1$. The following two properties of set functions will be useful. 
\begin{equation*}
\begin{array}{l}
\text{Submodularity:} \quad  f: 2^{[n]} \to \RR \text{ is submodular if }\forall S, T \subseteq[n], f(S) + f(T) \geq f(S \cup T) + f(S \cap T) ; \\
\text{Monotonicity:} \quad \, \, \, f: 2^{[n]} \to \RR \text{ is monotone non-decreasing if }\forall T \subset S \subseteq[n], f(S ) \geq  f(T). 
\end{array}
\end{equation*} 

We remark that previous studies on this model all focus on \emph{monotone non-decreasing} sender objectives \cite{Arieli2016,babichenko2016,Dughmi2017algorithmic}. However,  we do not pose such restriction in this work.  

In Bayesian persuasion, it is assumed that $\theta$ is drawn from a common prior distribution $\lambda$, which is known to the sender and all receivers. However, the sender possesses an informational advantage, namely, access to the realized state of nature $\theta$. The sender can  \emph{commit} to a policy--- termed a \emph{signaling scheme} --- that  maps the realized $\theta$   to a \emph{signal} for the receivers. The signaling scheme may be randomized, and hence reveals noisy information regarding  $\theta$.
The order of events is as follows: (1) The sender commits to a signaling scheme $\pi$; (2) Nature draws $\theta \sim \lambda$; (3) Signals are drawn according to $\pi(\theta)$ and sent to receivers; (4) Each receive updates his belief about the state of nature and  selects their actions. Persuasion is the problem faced by the sender  who seeks to design a signaling scheme to maximize her utility.  

In the literature, two basic signaling models have been studied in   multi-receiver persuasion: (1) the sender sends a public signal $\sigma$ and every receiver learns the same information, which is referred to as \emph{public persuasion} \cite{Emek12,Bhaskar2016,Dughmi2018hardness}; (2) the sender can send different (possibly correlated) signals to different receivers privately, which is referred to as \emph{private persuasion} \cite{Arieli2016,babichenko2016,Dughmi2017algorithmic}. This work focuses on  \emph{public persuasion}, which is less studied for this model in the previous literature. 



\subsection{Public Persuasion and Equilibrium Concepts}
\label{prelim:public}
This subsection describes the model of public signaling, equilibrium concepts, and (limited) previous results on public persuasion.  Though we will not explicitly define private signaling schemes, they can be viewed as a generalization of public schemes, in which the sender just sends different signals to different receivers.  

A \emph{public signaling scheme} $\pi$ is a randomized map from $\Theta$ to the set of signals $\Sigma$. Let $\pi(\theta, \sigma)$ denote the probability of selecting signal $\sigma \in \Sigma$ at the state of nature $\theta$. Since $\pi$ is public knowledge, after receiving $\sigma$, each receiver can update his posterior belief about $\theta$ and then chooses the optimal action from $\{ 0,1 \}$ based on this posterior belief.  Throughout the paper, we assume ties are broken in favor of the sender.  

As shown in \cite{Kamenica2011,Arieli2016} via a revelation-principle style argument, there always exists an optimal public signaling scheme which is  \emph{direct} and \emph{persuasive}. By \emph{direct} we mean that signals correspond to a profile of actions $\mathbf{s} \in \{ 0 , 1\}^n$, where the $i$'th entry $s_i$ corresponds to an action recommendation to receiver $i$.  A direct scheme is \emph{persuasive} if its recommendation to each receiver is indeed a best response for him.\footnote{Persuasiveness has also been called \emph{incentive compatibility} or \emph{obedience} in prior work.} Equivalently, any $\mathbf{s} \in \{ 0 ,1 \}^n$ can be equivalently viewed as a subset $S \subseteq [n]$, containing all entries of value $1$. We will use these two notations interchangeably throughout the paper.   

In this paper, we focus (without loss) on designing direct signaling schemes. A direct  scheme can be captured by variables $ \{ \pi(\theta, S) \}_{ \theta \in \Theta, S \subseteq [n]}$, in which $\pi(\theta, S)$ is the probability of sending signal $S$ ---  i.e., publicly recommending action $1$ to receivers in set $S$ and action $0$ to receivers in $[n] \setminus S$ ---  conditioned on the state of nature $\theta$. Upon receiving signal $S$, each receiver  infers that the state of nature is $\theta$ with probability  $ \frac{1}{\sum_{\theta } \lambda_{\theta}  \pi(\theta, S)} \lambda_{\theta}  \pi(\theta, S), \forall \theta$.  Therefore, the persuasiveness for signal $S$ implies the following constraints: 
\begin{equation}
\begin{array}{c}
\frac{1}{\sum_{\theta } \lambda_{\theta}  \pi(\theta, S) }  \sum_{\theta \in \Theta}  \lambda_{\theta}  \pi(\theta, S) u_i(\theta, 1) \geq   \frac{1}{\sum_{\theta } \lambda_{\theta}  \pi(\theta, S) }  \sum_{\theta \in \Theta}  \lambda_{\theta}   \pi(\theta, S) u_i(\theta, 0), \qquad \forall i \in S \\
\frac{1}{\sum_{\theta } \lambda_{\theta}  \pi(\theta, S) }  \sum_{\theta \in \Theta}  \lambda_{\theta}  \pi(\theta, S) u_i(\theta, 1) \leq   \frac{1}{\sum_{\theta } \lambda_{\theta}  \pi(\theta, S) }  \sum_{\theta \in \Theta}  \lambda_{\theta}   \pi(\theta, S) u_i(\theta, 0), \qquad \forall i \not \in S \\
\end{array}
\end{equation}
or equivalently, $\sum_{\theta \in \Theta} \lambda_{\theta}  \pi(\theta, S)  u_i(\theta)\geq 0$ for $i \in S$ and  $\sum_{\theta \in \Theta} \lambda_{\theta}  \pi(\theta, S)  u_i(\theta)\leq 0$ for $i \not \in S$, where $u_i(\theta) = u_i(\theta, 1) - u_i(\theta, 0)$.  As a result,  we can encode  the sender's optimization problem of computing the optimal public scheme using the following linear program, the size of which is \emph{exponential} in $n$. 
\begin{lp}\label{lp:optPub}
	\maxi{\sum_{\theta \in \Theta} \lambda(\theta) \sum_{S\subseteq [n]} \pi(\theta,S)  f_{\theta}(S) }
	\st 
	\qcon{\sum_{\theta \in \Theta}\lambda(\theta) \pi(\theta,S) \cdot  u_i(\theta) \geq 0}{S \sse [n] \mbox{ and } i \in S}
	\qcon{\sum_{\theta \in \Theta}\lambda(\theta) \pi(\theta,S) \cdot  u_i(\theta) \leq 0}{S \sse [n] \mbox{ and } i \not \in S}
	\qcon{\sum_{S \subseteq [n]} \pi(\theta,S) = 1}{\theta \in \Theta}
	\qcon{ \pi(\theta,S) \geq 0}{\theta \in \Theta; S \subseteq [n]}
\end{lp}

To our knowledge, little algorithmic results are known previously regarding optimal public persuasion for this model, besides the following hardness results.
\begin{theorem}[Intractability of Public Persuasion \cite{Dughmi2017algorithmic}]
Consider public persuasion with $f_{\theta}(S) = |S|/n$ for any $\theta \in \Theta$. It is NP-hard to approximate the optimal sender utility to within any constant multiplicative factor. Moreover,
there is no additive PTAS for evaluating the optimal sender utility, unless P = NP.
\end{theorem}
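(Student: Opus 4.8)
The plan is to first recast the quantity whose approximation we want to show is hard. Since $f_\theta(S)=|S|/n$ is state-independent and additive across receivers, the expected sender utility of a direct public scheme equals $\tfrac1n\sum_{i\in[n]}\Pr[\,\text{receiver }i\text{ is recommended, and plays, action }1\,]$. A public signal induces a common posterior $p\in\Delta(\Theta)$, and receiver $i$'s best response to $p$ is action $1$ exactly when $\langle p,u_i\rangle\ge 0$, i.e.\ when $p$ lies in the closed halfspace $H_i=\{p:\langle p,u_i\rangle\ge 0\}$ (ties favor the sender). Hence, by the concavification characterization, the optimal public value is exactly $\hat g(\lambda)$, the concavification at the prior $\lambda$ of the piecewise-constant function $g(p)=\tfrac1n\,|\{i:p\in H_i\}|$; equivalently it is the optimum of the exponential-size covering LP that expresses $\lambda=\sum_j q_j p_j$ as a convex combination of posteriors so as to maximize $\sum_j q_j g(p_j)$. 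The combinatorial ``type'' of a posterior is the cell of the hyperplane arrangement $\{\langle p,u_i\rangle=0\}_{i\in[n]}$ containing it, so the sender is really choosing a fractional selection of arrangement cells that reconstructs $\lambda$; this selection problem is what I want to prove inapproximable.

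Second, I would reduce from a combinatorial problem with strong, perfect-completeness/vanishing-soundness hardness --- \textsc{Set Cover}, two-prover \textsc{Label Cover}, or \textsc{Max Independent Set} are the natural candidates. Elements/constraints of the source instance become states of nature (so $|\Theta|$ grows with the instance, consistent with tractability for constant $|\Theta|$), the prior $\lambda$ is essentially uniform over them, and sets/labels/vertices become receivers. The crux is a payoff gadget: each $u_i$ is built from one small positive weight together with large negative ``penalty'' entries, so that $H_i$ is satisfied only by posteriors essentially concentrated on the states receiver $i$ approves of; this pins a correspondence between the posteriors the sender can afford to use and feasible combinatorial objects. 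The averaging constraint $\sum_j q_j p_j=\lambda$ then forces the supports used by the sender to jointly cover $\Theta$, turning into a fractional covering/labeling requirement on the source instance. Completeness is the easy direction: a good combinatorial solution yields a scheme that uses the corresponding posteriors, weighted by their prior mass, and persuades a $(1-o(1))$-fraction of receivers.

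Third --- where the real work lies --- is soundness: on a \textsc{no}-instance every public scheme has value below the target. Public persuasion hands the sender an arbitrary, possibly heavily correlated distribution over posteriors constrained only to have mean $\lambda$, so a priori she might stitch together partial progress across many posteriors with no analogue in the source problem. To rule this out I would (i) classify each posterior of an optimal scheme by its arrangement cell / the object it encodes, (ii) use the rigidity forced by the penalty entries to show that a posterior persuading an $\alpha$-fraction of receivers yields a partial solution covering/satisfying a comparable fraction of the source instance, and (iii) feed the mean constraint into a charging argument so that the weighted average over the sender's posteriors is governed by the fractional optimum of the source problem, which is small by assumption. To upgrade ``a fixed additive gap'' to ``no additive PTAS and no constant-factor approximation'', I would either start from a source whose hardness gap is already super-constant (\textsc{Set Cover} gives $\Omega(\log)$; \textsc{Label Cover} after parallel repetition gives soundness $n^{-\Omega(1)}$), or amplify by taking products of the source instance over $\Theta$ so that the \textsc{yes}-value stays a constant while the \textsc{no}-value drops below any prescribed $\epsilon$.

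I expect the main obstacle to be exactly this soundness step: one must make the payoff gadget rigid enough that \emph{every} posterior carrying non-negligible weight genuinely corresponds to a feasible combinatorial object --- including ``fuzzy'' posteriors that the large-penalty trick does not perfectly kill --- while keeping the reduction polynomial-size and the gap intact under whatever amplification is used. Making this rigidity quantitative, and ensuring the mean-$\lambda$ constraint becomes a fractional-cover constraint whose optimum matches the inapproximable combinatorial optimum, is the heart of the argument.
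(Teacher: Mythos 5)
Your high-level plan is on the right track and identifies the same reduction source the paper uses. The theorem itself is cited from Dughmi and Xu \cite{Dughmi2017algorithmic}, but the paper's Appendix~B.1 proves a strengthening (Proposition~\ref{prop:sub:PTAS:best}) via the same reduction: Khot's NP-hardness of distinguishing graphs with a large $q$-colorable induced subgraph (each color class an independent set of size $\approx n/q$) from graphs in which every independent set has fewer than $n/q^{k+1}$ vertices. You also correctly anticipate the gadget shape --- states of nature are graph vertices, the prior is uniform, receiver $i$'s payoff vector has one small positive ``self'' entry, large negative entries on $i$'s neighbors, and mildly negative entries on non-neighbors --- and your completeness direction is exactly the paper's (send one signal per color class, plus a dummy signal for uncolored vertices).

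Where your sketch over-engineers, and where a direct attempt along your lines would likely stall, is the soundness step. You anticipate classifying posteriors by arrangement cell, handling ``fuzzy'' posteriors that the penalty trick does not cleanly kill, and running a charging argument that pushes the mean-$\lambda$ constraint into a fractional-cover bound. None of that is needed, and the last piece would be awkward to make rigorous. The gadget in the actual proof has a much stronger, purely \emph{local} property: for \emph{every} posterior $x \in \Delta_\Theta$ (not just well-separated ones), the persuaded set $\{\, i : \sum_\theta x_\theta u_i(\theta) \geq 0 \,\}$ (or its $\epsilon$-relaxed version $\{\, i : \sum_\theta x_\theta u_i(\theta) \geq -\tfrac{1}{4n} \,\}$) is already an independent set of the source graph; this follows from a short calculation showing two adjacent vertices cannot both clear the threshold, because a nonnegligible mass on either neighbor drags the other below it. Once you have this pointwise claim, soundness is immediate: on a no-instance every independent set has at most $n/q^{k+1}$ vertices, so every signal contributes at most $1/q^{k+1}$ to $|S|/n$, hence the expected sender utility is at most $1/q^{k+1}$ --- no averaging over posteriors, no charging, and no use of the mean constraint at all. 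Amplification is likewise unnecessary: Khot's result already gives completeness $\geq (1-\epsilon)^2/q$ and soundness $\leq 1/q^{k+1}$ for arbitrary constants $q,k,\epsilon$, so a fixed choice yields a constant additive gap (no additive PTAS) and letting $k$ grow yields an arbitrarily large multiplicative ratio (no constant-factor approximation).
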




\noindent {\bf Equilibrium Concepts.}  The above derivations all assumed that conditioning on any public signal receivers play a Bayesian Nash equilibrium, which is also a Bayes correlated equilibrium \cite{Bergemann16Bayes} in this setting because the receivers have no externalities.  In Section \ref{sec:bicriteria}, we will relax the incentive/persuasiveness constraints and assume that a receiver will take the recommended action so long as it is at most $\epsilon$ worse than the other action. Formally, we say a public scheme is $\epsilon$-\emph{persuasive} if the following hold for any signal $S \subseteq [n]$: 
\begin{equation}\label{eq:eps-persuasive}
\begin{array}{c}
\sum_{\theta \in \Theta}\lambda(\theta) \pi(\theta,S) \cdot  u_i(\theta) \geq -\epsilon, \, \,  \forall i \in S \quad \text{ and } \quad  \sum_{\theta \in \Theta}\lambda(\theta) \pi(\theta,S) \cdot  u_i(\theta) \leq \epsilon, \, \,  \forall  i \not \in S. 
\end{array}
\end{equation} 
Note that whenever $\epsilon$-persuasiveness is considered, we shall assume $u_i(\theta) \in [-1,1]$ by convention.

 In Section \ref{sec:relax}, we will relax the equilibrium concept to coarse correlated equilibrium (CCE). That is, a receiver will follow the scheme's recommendation so long as obedience is better off  than opting out of the signaling scheme and acting just according to his prior belief. 
This solution concept for Bayesian games is sometimes also called  \emph{Bayesian coarse correlated equilibrium}  \cite{Cai2014,Caragiannis2015bounding,Hartline2015no,roughgarden2015intrinsic}.  For distinction, we say the scheme is \emph{cce-persuasive} in this case, defined as follows.   

\begin{definition}[CCE-Persuasiveness]\label{def:cce}
	A signaling scheme $\pi$ is  \emph{cce-persuasive}  if for any receiver $i$, the expected utility of following $ \pi$ is at least $i$'s maximum  utility under the prior belief $\lambda$. Formally, 
	\begin{equation}\label{eq:cce-constraint}
	\begin{array}{l}
	\sum_{S: i \in S} \sum_{\theta \in \Theta}   \lambda_{\theta} \pi(\theta, S) u_i(\theta)     \geq \max\{ \sum_{\theta \in \Theta} \lambda_{\theta} u_i (\theta), 0  \}, \qquad \forall i \in [n]. 
	\end{array} 
	\end{equation}  
\end{definition}

Therefore, the optimal cce-persuasive signaling scheme can be computed via an exponentially large LP similar to  LP \eqref{lp:optPub}, but with cce-persuasiveness constraints described in Inequality \eqref{eq:cce-constraint}.  

We remark that the original cce-persuasiveness constraint should have been the following inequality:
$$
\begin{array}{l}
\sum_{\theta \in \Theta}   \lambda_{\theta} u_i(\theta, 1) \cdot x_{\theta,i }+  \sum_{\theta \in \Theta}   \lambda_{\theta} 
u_i(\theta, 0) \cdot (1-x_{\theta,i})    \geq \max\{ \sum_{\theta \in \Theta} \lambda_{\theta} u_i (\theta, 1), \sum_{\theta \in \Theta} \lambda_{\theta} u_i (\theta,0)  \}
\end{array}
$$
where: (1) $x_{\theta,i } = \sum_{S: i \in S} \pi(\theta, S)$ is the marginal probability of recommending action $1$ to receiver $i$ in state $\theta$; (2) the left hand side is the expected utility of receiver $i$ when following all the recommendations of $\pi$; (3) the right hand side is the maximum utility under prior belief $\lambda$.  It is easy to verify that we will arrive at Inequality \eqref{eq:cce-constraint} after subtracting $\sum_{\theta} \lambda_{\theta} u_i(\theta, 0)$ from both sides of the above inequality.  

\subsection{Input Models, Computation, and Approximation}\label{prelim:approx}
Throughout the paper, we assume \emph{value oracle} access to the sender's objectives, which are set function $f_{\theta}$'s. That is, we can query the value of $f_{\theta}(S)$ for any set $S$ using a unit of time. Other than the sender objectives, the prior distribution $\lambda$ and receivers' payoffs are all explicitly given. We will consider approximately optimal signaling schemes. 
 For convenience in stating our approximation guarantees, we always assume  $f_{\theta}$'s are \emph{non-negative} functions, i.e., $f_{\theta}(S) \geq 0$.  When a signaling scheme  yields expected sender utility at least $c$ fraction of the best possible, we say it is \emph{$c$-approximate}. Sometimes (though rarely) we also consider additive loss to the sender utility, and say the scheme is \emph{$\epsilon$-optimal} if its expected sender utility is at most $\epsilon$ less than the best possible.  
When a signaling scheme is both $\epsilon$-persuasive and $c$-approximate, we say it is a \emph{bi-criteria} approximation to emphasize its loss in both optimality and persuasiveness. 


\section{Fixed Parameter Tractability of Optimal Public Persuasion} \label{sec:constant}
In this section, we show that optimal public persuasion is \emph{fixed parameter tractable}. Note that when the number of receivers $n$ is a small constant, it is easy to see that public persuasion can be solved in polynomial time  because LP \eqref{lp:optPub} then has polynomial size. More challenging is the  setting where  the number of states of nature $|\Theta|$ is a constant. Indeed, the restriction to a small $|\Theta|$ does not appear to  simplify the problem at the first glance --- private persuasion  is proved to be APX-hard even when there are only \emph{two} states of nature and the sender's objective functions are  monotone submodular \cite{babichenko2016}.  
Surprisingly, we prove that, under mild non-degeneracy assumptions, the optimal public signaling scheme can be computed efficiently for \emph{arbitrary sender objectives} --- even those which are \emph{intractable} to optimize directly --- when $|\Theta|$ is a constant.  
This result illustrates an interesting contrast between public and private persuasion. 

The proof of our main result is based on a constructive version of a very basic question in combinatorial geometry: \emph{how many cells can $n$ hyperplanes  divide the space of $\RR^d$ into?}\footnote{Throughout, by ``cells'' we mean non-degenerated regions with non-zero volumes in $\RR^d$.}   The answer is $\O(n^d)$, which is polynomial in $n$ when $d$ is a constant.  We utilize this upper bound and, additionally, design an algorithm to efficiently identify all these cells (represented as intersection of half-spaces) for constant $d$. Interestingly, we show that these cells generated by hyperplanes give rise to a characterization of public signaling schemes, which is crucial for our algorithm design.  To illustrate the power of this approach, in Appendix  \ref{sec:constant:applictions} we also show how the same technique can be used to design a new polynomial-time algorithm for  another widely studied multi-agent persuasion problem ---  i.e., public signaling in second price auctions for revenue maximization \cite{Emek12,Miltersen12,mixture_selection} --- when the number of states there is a constant.  This also complements previous complexity results from \cite{Emek12} and completed the picture of the fixed parameter tractability of this problem.\footnote{Previously, it is only known that this problem is NP-hard when there are $n \geq 3$ bidders and admits a polynomial time when the number of bidders' value types  is a constant \cite{Emek12}.}   

We first observe that if without any assumption, public persuasion \emph{cannot} be ``easier'' than directly optimizing the sender's objective. Consider the following example. 

\begin{example}\label{ex:degeneracy}
	There are two states of nature $\theta_1, \theta_2$, each occurring with equal prior probability $1/2$. Receiver $i$'s payoff vector $u_i = ( u_i(\theta_1) , u_i(\theta_2))  =(1, -1)$ for any $i\in [n]$.  It is easy to verify that the optimal public persuasion reveals no information in this case and is equivalent to solving $\arg \max_{S \subseteq [n]} f_{\theta}(S)$. 
	\end{example} 

It turns out that the above difficulty is due to certain degeneracy in the receiver utilities. Specifically, we say that \emph{the receiver payoffs are non-degenerate} if for any subset $S$ including $|\Theta|$ receivers, their payoff vectors $\{ u_i \}_{i \in S}$ are \emph{linearly independent}. This is a minor requirement --- e.g., any payoffs perturbed by small random noise will be non-degenerate with probability $1$.  We prove the following theorem.



\begin{theorem}\label{thm:const-poly}
	Let $|\Theta| = d$ and suppose receiver payoffs are non-degenerate.  There is a $\poly(n^d)$ time algorithm that computes the optimal public signaling scheme for \emph{arbitrary} sender objective functions  $\{ f_{\theta} \}_{\theta \in \Theta}$.
\end{theorem}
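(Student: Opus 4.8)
The plan is to turn LP~\eqref{lp:optPub} --- which has exponentially many variables and constraints --- into a polynomial-size linear program whose description is read off from a hyperplane arrangement. Identify each receiver's payoff vector $u_i=(u_i(\theta))_{\theta\in\Theta}\in\RR^d$ with the normal of a central hyperplane $H_i=\{x\in\RR^d:\langle x,u_i\rangle=0\}$. The first step is the geometric primitive: the arrangement of $H_1,\dots,H_n$ has only $\O(n^d)$ full-dimensional cells (a central arrangement has even fewer, $\O(n^{d-1})$), and for constant $d$ these cells --- each recorded by its sign vector $s(C)\in\{+,-\}^n$, i.e.\ the sign of $\langle x,u_i\rangle$ on $C$ --- can be enumerated in $\poly(n^d)$ time; this is a constructive strengthening of the classical counting bound and is the ``combinatorial geometry'' input the approach relies on. For a cell $C$ set $S_C=\{i:s(C)_i=+\}\subseteq[n]$; note $C\mapsto S_C$ is injective.

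The heart of the proof is the following structural lemma, which is exactly where non-degeneracy is used. \emph{Under non-degeneracy, a set $S\subseteq[n]$ and a nonzero $x\in\RR^d_{\ge 0}$ satisfy the persuasiveness inequalities $\langle x,u_i\rangle\ge 0$ for $i\in S$ and $\langle x,u_i\rangle\le 0$ for $i\notin S$ if and only if there is a full-dimensional cell $C$ with $S_C=S$ and $x\in\overline{C}$.} The ``if'' direction is immediate from closedness of the weak inequalities. For ``only if'', let $Z=\{i:\langle x,u_i\rangle=0\}$; non-degeneracy forces $|Z|\le d-1$, since $d$ linearly independent vectors all orthogonal to $x\ne 0$ is impossible, and $\{u_i\}_{i\in Z}$ is linearly independent. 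Hence I can solve for $v\in\RR^d$ with $\langle v,u_i\rangle=+1$ for $i\in Z\cap S$ and $\langle v,u_i\rangle=-1$ for $i\in Z\setminus S$; then for all small $\eps>0$ the point $x+\eps v$ lies strictly on the ``$S$-side'' of every $H_i$, so the open cell with sign vector matching $S$ is nonempty and has $x=\lim_{\eps\to 0}(x+\eps v)$ in its closure. Example~\ref{ex:degeneracy} shows this fails without non-degeneracy: there a posterior can be weakly compatible with many sets $S$ that no full-dimensional cell realizes, and public persuasion collapses to direct optimization of $f_\theta$.

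Using the lemma, I would replace the variables $\pi(\theta,S)$ by, for each full-dimensional cell $C$, a vector $x^C\in\RR^d_{\ge 0}$ (to be read as the unnormalized posterior $x^C_\theta=\lambda(\theta)\pi(\theta,S_C)$ of a signal recommending $S_C$), and write
\begin{lp*}
	\maxi{\textstyle\sum_{C}\sum_{\theta\in\Theta} x^C_\theta\, f_\theta(S_C)}
	\st
	\qcon{\textstyle\sum_{C} x^C_\theta = \lambda(\theta)}{\theta\in\Theta}
	\qcon{\langle x^C, u_i\rangle \ge 0}{\text{every } C \text{ and } i\in S_C}
	\qcon{\langle x^C, u_i\rangle \le 0}{\text{every } C \text{ and } i\notin S_C}
	\qcon{x^C \ge 0}{\text{every } C}
\end{lp*}
which has $\O(d\,n^d)$ variables and $\O(n^{d+1})$ constraints, is feasible (the no-information solution $x^C=\lambda$ for the cell with $\lambda\in\overline C$) and bounded. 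The structural lemma gives a value- and feasibility-preserving correspondence with LP~\eqref{lp:optPub}: a cell-LP solution yields a public scheme by creating one signal $S_C$ per cell with $x^C\ne 0$; conversely a direct persuasive scheme only uses signals $S$ weakly compatible with their posterior $x^S$, and by the lemma each such $(S,x^S)$ with $x^S\ne 0$ has $S=S_C$ for the unique full-dimensional cell $C$ with $x^S\in\overline C$, so $x^S$ satisfies that cell's cone constraints. Hence the two programs have the same optimum, and an optimal public signaling scheme is recovered directly from an optimal cell-LP solution.

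Finally I would assemble the running time: enumerating the arrangement and its sign vectors costs $\poly(n^d)$; filling in the coefficients $f_\theta(S_C)$ costs $\O(d\,n^d)$ value-oracle queries --- and this is the \emph{only} place the objectives are touched, so they may be arbitrary, in particular hard to optimize directly; and solving the resulting polynomial-size LP costs $\poly(n^d)$. I expect the main obstacle to be the structural lemma: pinning down that non-degeneracy is precisely what keeps a compatible posterior from being ``trapped'' on a low-dimensional face that no full-dimensional cell meets with the matching sign pattern. Once that correspondence is in hand, the enumeration and the LP manipulation are routine.
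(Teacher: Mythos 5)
Your proof is correct and follows essentially the same approach as the paper's: form the hyperplane arrangement of the payoff vectors, use non-degeneracy to show that the realizable persuasive $(S,x)$ pairs correspond exactly to (closures of) full-dimensional cells (the paper's Lemma~\ref{lem:correspondence}), enumerate the $\O(n^d)$ cells, and solve a $\poly(n^d)$-size LP indexed by them. Your argument for the structural lemma --- solving a linear system over the tight constraints $Z$ to get a perturbation direction $v$ --- is a mild variant of the paper's argument, which instead bounds $|Z|\le d-1$ and counts the $2^{|Z|}$ cells meeting a small ball around $x$; both rest on exactly the same linear-independence consequence of non-degeneracy, and your cell-indexed change of variables $x^C_\theta=\lambda(\theta)\pi(\theta,S_C)$ is equivalent to the paper's restriction of LP~\eqref{lp:optPub} to cell labels.
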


\noindent{\bf Proof Sketch. } We provide a sketch here and defer the full proof to Appendix \ref{sec:const:app}.  The main difficulty in computing the optimal public scheme is that there are $2^n$  different public signals. Any efficient algorithm simply cannot search over all these  signals. Our key insight is that when $|\Theta| = d$ is small, most of these public signals actually will never arise in \emph{any} signaling scheme, assuming non-degeneracy. This turns out to be a consequence of  the problem of dividing $\RR^d$ by hyperplanes. 

Specifically, each public signal induces a posterior distribution $p \in \Delta_d$ over the states in  $\Theta$ where $p_{\theta}$ denotes the probability of $\theta \in \Theta$. The hyperplane $\sum_{\theta \in \Theta} u_i(\theta)p_{\theta}  =  0$ cut $\RR^d$  into two cells: $\sum_{\theta \in \Theta} u_i(\theta)p_{\theta}  > 0$ where receiver $i$ always prefers action $1$ and  $\sum_{\theta \in \Theta} u_i(\theta)p_{\theta}  < 0$ where receiver $i$ always prefers action $0$. With $n$ receivers,  $\RR^d$ will be cut by $n$ hyperplanes into $\mathcal{O}(n^d)$ cells and each cell is uniquely characterized by a $n$-dimensional \emph{binary vector} $\mathbf{s} \in \{ 0,1 \}^n$, in which the $i$'th entry $s_i$ indicates receiver $i$'s best response action.   We  call vector $\mathbf{s}$ the \emph{label} of that cell. Any $p$ from a cell of label $\mathbf{s}$ can induce $s_i$ as receiver $i$'s best response. In other words, each label $\mathbf{s}$  corresponds to a public signal that can possibly arise. 

We wish to show that these $\mathcal{O}(n^d)$ labels of cells are  precisely all the public signals that can possibly arise in public persuasion. As a result, we can then compute the optimal signaling scheme by restricting LP \eqref{lp:optPub}  to only these $\mathcal{O}(n^d)$ public signals.  However, there are two technical challenges for this approach. First, besides these labels, many other public signals actually can also arise in general. For instance, in Example \ref{ex:degeneracy}, we only have two cells but all the $2^n$ signals are valid due to (exponentially) many possible ways of tie breaking. This will be problematic when we search for the optimal signaling scheme. Fortunately, we prove in Lemma \ref{lem:correspondence} that under mild non-degeneracy assumption of receiver utilities, all the public signals that can possibly arise indeed correspond to the labels of all the $\mathcal{O}(n^d)$ cells.  Second, we need to know exactly what are these possible public signals in order to formulate the linear program. This is addressed by Lemma \ref{lem:cut-space} in which we design an algorithm to identify all these public signals, i.e., all the cells generated by $n$ hyperplanes. At a high level, the algorithm iteratively adds each hyperplane and identify all the newly generated cells at each iteration.  
\begin{lemma}\label{lem:correspondence}
	Suppose receiver payoffs are non-degenerate. Then all public signals that can possibly arise  are precisely all the labels of cells generated by the $n$ hyperplanes cutting $\RR^d$.  
\end{lemma}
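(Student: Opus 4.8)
The goal is to show a two-way correspondence: (1) every cell label $\mathbf{s} \in \{0,1\}^n$ of the $\O(n^d)$ cells is a public signal that can arise in some public scheme, and (2) conversely, every public signal that can arise in some public scheme is the label of one of these cells. Direction (1) is the easy one: given a cell with label $\mathbf{s}$, pick any interior point $p \in \Delta_d$ of that cell (a cell has nonzero volume, so its intersection with $\Delta_d$ — if nonempty — contains a relatively open piece, and we only care about cells that meet the simplex). For each receiver $i$, strict membership means $\sum_\theta u_i(\theta) p_\theta > 0$ iff $s_i = 1$ and $< 0$ iff $s_i = 0$; so $p$ is a posterior under which the unique best response profile is exactly $\mathbf{s}$. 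Then the scheme that, with the appropriate prior-consistent probability, sends a signal inducing posterior $p$ and recommends $\mathbf{s}$ is persuasive (in fact strictly), so $\mathbf{s}$ arises. One has to be slightly careful that such a $p$ is prior-feasible, i.e., that we can actually build a full signaling scheme in which $p$ is a posterior; but any $p$ in the relative interior of $\Delta_d$ that lies in the cell works, and if the cell only meets $\partial \Delta_d$ we can still take a posterior supported on the relevant faces — I would handle the boundary bookkeeping here, or simply restrict attention to cells meeting the relative interior and note degenerate posteriors correspond to "smaller" instances.

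\textbf{The main obstacle is direction (2)}, and this is exactly where non-degeneracy enters. Suppose $S \subseteq [n]$ is a public signal arising in some scheme $\pi$, i.e., there is a posterior $p$ (the normalized vector $\lambda_\theta \pi(\theta,S)$) satisfying $\sum_\theta u_i(\theta) p_\theta \geq 0$ for $i \in S$ and $\leq 0$ for $i \notin S$. If all these inequalities were strict, $p$ would lie in the interior of a cell with label exactly $S$ and we'd be done. The danger is \emph{ties}: some receivers $i$ may have $\sum_\theta u_i(\theta) p_\theta = 0$, i.e., $p$ lies on one or more of the hyperplanes $H_i = \{q : \langle u_i, q\rangle = 0\}$. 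For such a tied receiver, recommending either action is consistent with persuasiveness, so $S$ need not be any cell's label. The plan is to perturb $p$ slightly to break all ties in the direction favoring $S$: we want a nearby $p' \in \Delta_d$ with $\langle u_i, p'\rangle > 0$ for $i \in S$ and $< 0$ for $i \notin S$. Such a $p'$ lies strictly inside a cell whose label is precisely $S$, finishing the proof. The existence of this perturbation is where I'd use non-degeneracy: the set of tied receivers at $p$ has size at most $d$ (if $d+1$ receivers were all tied, their payoff vectors would be linearly dependent, contradicting non-degeneracy — wait, $d+1$ vectors in $\RR^d$ are always dependent, so more carefully: if more than... actually any $d$ of them being independent means the intersection $\bigcap_{i \in I} H_i$ over $|I| = d$ tied receivers is a single point; so if $d$ or more receivers are tied at $p$, then $p$ is that unique intersection point — and in fact at most $d-1$... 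I'll pin down the exact count in the writeup). The key structural fact: the set of tied hyperplanes through $p$ has linearly independent normals, so the cone of directions $v$ with $\langle u_i, v\rangle > 0$ for $i \in S \cap (\text{tied})$ and $< 0$ for $i \notin S, i \in (\text{tied})$ is nonempty and full-dimensional; moving a tiny step along such $v$ (projected to stay in $\Delta_d$) breaks all ties the right way while not flipping any strict inequality. That gives the desired $p'$.

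\textbf{Assembling.} I would structure the proof as: Lemma — (a) characterize cells of the hyperplane arrangement $\{H_i\}_{i=1}^n$ by their labels $\mathbf{s}$, with an interior point of each cell inducing $\mathbf{s}$ as the unique best-response profile; (b) show direction (1) by exhibiting, for each label, a persuasive scheme realizing it; (c) show direction (2) by the tie-breaking perturbation argument above, invoking non-degeneracy to control the tied set. The only genuinely delicate point is (c)'s perturbation; everything else is essentially unwinding definitions. A clean way to state the perturbation step: let $I$ be the set of receivers tied at $p$; by non-degeneracy $\{u_i\}_{i \in I}$ is linearly independent, so the linear map $q \mapsto (\langle u_i, q \rangle)_{i \in I}$ is surjective onto $\RR^I$, hence there is a direction $v$ in the tangent space of $\Delta_d$ at $p$ with $\langle u_i, v\rangle$ having the sign $+1$ if $i \in S$ and $-1$ if $i \notin S$ for every $i \in I$; for small enough $t > 0$, $p' := p + tv$ stays in $\Delta_d$ (since $p$ can be taken in the relative interior, or we first move $p$ into the relative interior without creating new ties), has the strict sign pattern for $i \in I$, and retains the strict sign pattern for $i \notin I$ by continuity. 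Then $p'$ sits in the open cell with label $S$. I expect the referee-facing subtlety will be justifying "we may take $p$ in the relative interior of $\Delta_d$" — this is where one observes that full-support posteriors are dense among attainable posteriors for any scheme, or alternatively treats a non-full-support posterior as living in a lower-dimensional face and running the same argument there; I would include a sentence handling this rather than sweeping it under the rug.
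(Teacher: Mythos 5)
Your proof takes essentially the same route as the paper's: use non-degeneracy to control the set of tied receivers at any posterior, then observe that the response vectors consistent with a given posterior are exactly the labels of cells incident to an arbitrarily small neighborhood of that posterior. The one structural difference is where the argument lives. The paper proves the lemma for $p$ ranging over all of $\RR^\Theta$ (no simplex constraint --- note the restated version in the appendix makes this explicit), and only afterward, when assembling the algorithm, filters out labels whose cells miss $\Delta_\Theta$. You instead work directly inside $\Delta_\Theta$, which pulls in both a tangent-space restriction and the boundary of the simplex; you flag the boundary concern, but not the tangent-space one.

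That tangent-space point is where your argument has a real gap. In step (c) you assert: since $\{u_i\}_{i\in I}$ is linearly independent, the map $q \mapsto (\langle u_i, q\rangle)_{i\in I}$ is surjective onto $\RR^I$, ``hence there is a direction $v$ in the tangent space of $\Delta_d$'' with the required sign pattern. The ``hence'' does not follow from what you wrote: surjectivity of the full map on $\RR^d$ does not imply surjectivity of its restriction to the $(d-1)$-dimensional tangent space $W = \{v : \mathbf{1}\cdot v = 0\}$, and non-degeneracy by itself does not preclude $\mathbf{1} \in \mathrm{span}\{u_i\}_{i\in I}$, which is exactly the case where the restricted map $\phi|_W$ drops rank and your desired sign vector may be unreachable. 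Fortunately this case cannot occur at a tied posterior: if $\mathbf{1} = \sum_{i\in I} c_i u_i$ then $1 = \mathbf{1}\cdot p = \sum_{i\in I} c_i \langle u_i, p\rangle = 0$ since every $i\in I$ is tied at $p$, a contradiction. Insert this one-line observation and your step (c) becomes sound. The paper sidesteps the issue entirely by never restricting to $W$: working over $\RR^\Theta$, non-degeneracy alone shows that exactly $k$ hyperplanes pass through any point lying in the relative interior of a $(d-k)$-dimensional flat of the arrangement, so a small ball around $p$ meets exactly $2^k$ cells whose labels enumerate the $2^k$ admissible tie-breakings at $p$ --- no perturbation direction needs to be constructed, and no simplex geometry enters until the algorithmic filtering step.
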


\begin{lemma}\label{lem:cut-space}[Cutting $\RR^d$ with $n$ Hyperplanes]
	Any $n$ hyperplanes divide $\RR^d$ into $\O(n^d)$ cells. Moreover, all these cells can be identified (represented as intersections of $n$ half-spaces) in $\poly(n^d)$ time. 
\end{lemma}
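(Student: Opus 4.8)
I would prove Lemma~\ref{lem:cut-space} by induction on $n$, the number of hyperplanes, maintaining at each step an explicit list of the current cells, each represented as an intersection of (at most $n$) half-spaces together with an interior point. The counting bound is the classical statement that $n$ hyperplanes in general position partition $\RR^d$ into at most $\sum_{k=0}^{d}\binom{n}{k}=\O(n^d)$ full-dimensional regions; since we only care about full-dimensional cells (the paper's footnote restricts attention to ``non-degenerated regions with non-zero volume''), a degenerate configuration only has fewer cells, so the bound holds unconditionally. I would recall the standard recurrence $C(n,d)=C(n-1,d)+C(n-1,d-1)$: when the $n$-th hyperplane $H_n$ is added, the number of new full-dimensional cells created equals the number of $(d-1)$-dimensional cells into which the previously-placed $n-1$ hyperplanes partition $H_n$ itself, and each such $(d-1)$-cell splits exactly one old $d$-cell into two.

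**The algorithmic step.** The constructive content is turning this recurrence into a procedure. Inductively suppose we have the list $\mathcal{L}_{n-1}$ of all full-dimensional cells of the arrangement of $H_1,\dots,H_{n-1}$, each given as a feasible system of strict linear inequalities (half-spaces) plus a witness interior point obtained, e.g., by solving an LP that maximizes the minimum slack. To add $H_n=\{x : a_n^\top x = b_n\}$: for each cell $P\in\mathcal{L}_{n-1}$, test via linear programming whether $P\cap\{a_n^\top x>b_n\}$ and $P\cap\{a_n^\top x<b_n\}$ are both full-dimensional (again by checking whether a strictly-feasible point with positive slack exists). If both are, $P$ is cut; replace $P$ by the two cells $P\cap\{a_n^\top x\gtrless b_n\}$, each still an intersection of at most $n$ half-spaces, and recompute interior points. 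If $H_n$ misses the interior of $P$, keep $P$ unchanged. Each iteration does $\O(|\mathcal{L}_{n-1}|)=\O(n^d)$ LP solves on systems of size $\le n\times d$, so the total running time is $\poly(n^d)$, as claimed. Correctness is immediate: the cells of the arrangement of $H_1,\dots,H_n$ are exactly the nonempty full-dimensional pieces obtained by intersecting each cell of the arrangement of $H_1,\dots,H_{n-1}$ with one of the two open half-spaces of $H_n$, which is precisely what the procedure enumerates.

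**The main obstacle.** The bound $\O(n^d)$ on the number of cells is textbook and the per-iteration work is routine LP, so the only real subtlety is bookkeeping the \emph{representations} cleanly — specifically ensuring that when a cell is not cut by $H_n$ we do not redundantly append the half-space $a_n^\top x\gtrless b_n$ (which would inflate descriptions), and that when it is cut we correctly identify which of the two sides each old cell falls on. Checking full-dimensionality of a polyhedron given by strict inequalities is the one place where a naive ``is it nonempty'' test is insufficient (a face could be nonempty but measure zero); I would handle this uniformly by maximizing the $\ell_\infty$ Chebyshev-type slack $\delta$ subject to $a_j^\top x \le b_j - \delta\lVert a_j\rVert$ (orienting each constraint appropriately) and declaring the region full-dimensional iff the optimum is $>0$, which simultaneously produces the required interior witness. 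A minor additional point worth a sentence: the total number of distinct half-spaces ever used is $2n$ (the two sides of each $H_i$), so every cell description has length $\le n$, keeping everything polynomial. I do not expect any genuinely hard step — this lemma is the ``engineering'' companion to the conceptual Lemma~\ref{lem:correspondence}.
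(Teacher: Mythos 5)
Your proposal follows the same incremental strategy as the paper's Algorithm~\ref{alg:labeling}: maintain the current list of cells, add hyperplanes one at a time, and use an LP feasibility test per (cell, hyperplane) pair to decide whether the new hyperplane splits the cell, for a total of $\poly(n^d)$ work since there are $\O(n^d)$ cells at every stage. One small refinement on your side worth noting: you explicitly replace the na\"{\i}ve ``is it nonempty'' test with a Chebyshev-slack LP to certify \emph{full-dimensionality}, which is a genuine tightening --- the paper's pseudocode checks feasibility of the non-strict systems $\texttt{IneqSet}\cup\{a_i^\top x \geq b_i\}$ and $\texttt{IneqSet}\cup\{a_i^\top x \leq b_i\}$, and a cell that merely touches $H_i$ on a face would make both feasible even though only one side is a true cell; your slack-maximization handles this cleanly. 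Conversely, the paper always extends every cell's label by the sign of the $i$-th constraint (even when the cell is not cut) so that at termination each label is a full $n$-bit response vector, which is needed downstream in Lemma~\ref{lem:correspondence}; your remark about not ``redundantly appending'' the half-space is fine for deciding whether to split, but you still want to record which side the uncut cell lies on.
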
  


\begin{remark}
Theorem \ref{thm:const-poly} can be generalized to the setting where each receiver has $k$ actions instead of only two.  In this case, each receiver corresponds to $k(k-1)/2$ hyperplanes since we need to compare each pair of actions. 
We can similarly characterize all the possible public signals by examining the cells obtained from dividing $\Delta_d$ by $n \times \frac{k(k-1)}{2}$ hyperplanes, and obtain a $\poly(n^dk^{2d})$ time algorithm. 
\end{remark}


The key property that enables efficient computation of public persuasion --- even when directly optimizing the sender's objective function is completely intractable ---  is  its ``\emph{limited expressiveness}'' due to the restriction to public signals. That is, when  $d$ is small, public schemes can only induce a small set of outcomes (i.e., profiles of receiver actions). This is fundamentally due to the fact that every receiver derives the same posterior belief from a public signal.  As a consequence,  the sender's utility only depends on her objective function at these limited number of  outcomes. This is why the structure of the sender's objective functions did not play any role in our algorithm for public persuasion. In contrast,  the  computational complexity of private persuasion --- even when there are only two states of nature ---  is  governed by the complexity of  optimizing the sender's objective, as shown in \cite{Dughmi2017algorithmic}.

\section{Public Persuasion with Relaxed Persuasiveness}\label{sec:bicriteria}
We now consider public persuasion with relaxed receiver incentives.  That is, instead of insisting on exactly persuasive schemes, we allow the scheme to be $\epsilon$-persuasive for some small $\epsilon$ as formally defined in Inequalities \eqref{eq:eps-persuasive}. 
Here, our result concerns  the setting where the sender's objective function  is not affected by the state of nature, i.e., $f_{\theta} = f$ for all $\theta$.  This important special case has been the focus of many previous works including the original model by Arieli and Babichenko \cite{Arieli2016}, and is realistic in applications including, but not limited to, voting \cite{Alonso14}, marketing \cite{babichenko2016,Ozan19} and customer queuing \cite{Lingenbrink2019}.        


We design a \emph{tight} approximate algorithm for computing an $\epsilon$-persuasive public signaling scheme for \emph{submodular} sender objectives, a natural class of set functions which are widely used to model agents' utilities.  The crux in the analysis of this algorithm  is to prove an intriguing ``noise stability'' property of submodular functions, which strengthens the key result of Cheraghchi \emph{et al.} \cite{CKKL12}. In particular,  \cite{CKKL12} proved the noise stability of submodular functions under certain noise model motivated by applications of learning submodular functions and differential privacy. Our analysis proves exactly the same noise stability guarantee as in \cite{CKKL12} but under a strictly stronger (i.e., more adversarial) noise model. We provide a more detailed comparison  in Section \ref{sec:bicriteria:discuss}. It is an interesting open question  to see whether this strengthened noise stability property can lead to stronger guarantees for the applications mentioned in \cite{CKKL12}.



We first state the performances of our algorithm for public persuasion. 

\begin{theorem}\label{thm:sub}
Consider public persuasion with $f_{\theta} = f$ for any $\theta$. If $f$  is non-negative and \emph{monotone submodular},  then  there is a $\poly(n, |\Theta|^{1/\epsilon})$ time algorithm that outputs a $(1- \epsilon)$-approximate and $\epsilon$-persuasive public signaling scheme for any $\epsilon \in (0,1)$. If $f$  is \emph{non-monotone submodular}, then there is a  $\poly(n, |\Theta|^{1/\epsilon})$ time algorithm that outputs a $\frac{1}{2}(1- 2\epsilon)$-approximate and $\epsilon$-persuasive public scheme for any $\epsilon \in (0,1)$.
\end{theorem}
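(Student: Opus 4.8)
\textbf{Proof proposal for Theorem \ref{thm:sub}.}

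The plan is to reduce public persuasion with a state-independent submodular objective to a problem over posterior beliefs and to exploit the ``limited expressiveness'' idea from Section \ref{sec:constant}, but now in an \emph{approximate} form: rather than demanding exact geometry, I would discretize the simplex $\Delta_{|\Theta|}$ into a grid of resolution $\epsilon$, which has $|\Theta|^{O(1/\epsilon)}$ cells, and restrict attention to public signals whose induced posterior lies on this grid. First I would recall the concavification viewpoint of Kamenica--Gentzkow: an optimal public scheme corresponds to decomposing the prior $\lambda$ as a convex combination $\sum_j \alpha_j p^{(j)}$ of posteriors, where each posterior $p^{(j)}$ is assigned the receiver-action profile $S(p^{(j)})$ that is a best response under $p^{(j)}$, and the sender collects $\sum_j \alpha_j f(S(p^{(j)}))$. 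So the optimization is a linear program over distributions on posteriors, and the objective value at a posterior $p$ is $f$ evaluated at the (essentially unique, by the structure of best responses) induced set. Rounding each posterior to the nearest grid point changes the induced best-response set by turning every receiver who is within $\epsilon$ of indifference from a ``$1$'' to a ``$0$'' or vice versa — and crucially this perturbation is arbitrary and correlated across receivers, exactly the setting of the noise-stability statement advertised in the introduction.

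The key steps, in order, would be: (1) Set up the posterior-decomposition LP and argue by Carath\'eodory that the optimal scheme uses at most $|\Theta|+1$ posteriors, so we may think of transporting each to a nearby grid point. (2) State and use the noise-stability lemma for submodular functions: if $T$ is obtained from $S$ by adding or deleting each element with probability at most $\epsilon$ in an arbitrarily correlated fashion, then $\Ex_T f(T) \ge (1-2\epsilon) f(S)$ in the monotone case. Here the random perturbation arises because snapping a posterior to the grid makes the identity of the ``borderline'' receivers (those within $O(\epsilon)$ of indifference) into a set we can only control in expectation; by randomizing the rounding direction and using that a receiver is borderline only in an $O(\epsilon)$-measure band, each coordinate flips with probability $O(\epsilon)$. (3) Convert the resulting grid-restricted solution into a genuine $\epsilon$-persuasive scheme: a receiver recommended action $1$ at a grid posterior within $\epsilon$ of the true posterior has expected gain at least $-\epsilon$ (after the convention $u_i(\theta)\in[-1,1]$), which is exactly the $\epsilon$-persuasiveness constraint \eqref{eq:eps-persuasive}. (4) Observe the restricted LP has $\poly(n,|\Theta|^{1/\epsilon})$ variables and constraints (the outcomes $f(S)$ accessed via the value oracle), hence is solvable in that time, and its optimum is at least $(1-\epsilon)$ — or, after absorbing constants more carefully, $(1-\epsilon)$ — times the true optimum. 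For the \emph{non-monotone} case, $f$ need not tolerate additions, so I would instead only \emph{delete} borderline elements (round posteriors so that uncertain receivers are demoted to action $0$), incurring a $(1-2\epsilon)$ factor from noise stability, and separately lose a factor $\tfrac12$ from the standard fact that the restricted feasible region still contains a point achieving half the unconstrained submodular optimum — e.g., by also allowing the scheme to ignore the signal and play a fixed $\tfrac12$-approximate set, or by a symmetrization/averaging argument over complementary recommendation profiles. Combining gives the claimed $\tfrac12(1-2\epsilon)$ bound.

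The main obstacle I anticipate is step (2)/(3): making precise that the grid-rounding operation really does induce a perturbation with per-coordinate flip probability $O(\epsilon)$ \emph{while simultaneously} guaranteeing $\epsilon$-persuasiveness of the recommendations at the rounded posteriors — these two requirements pull in opposite directions (coarser rounding hurts persuasiveness margins, finer rounding hurts running time), so the constants in the grid resolution, the persuasiveness slack, and the noise parameter must be chosen in a mutually consistent way, presumably by using a randomized rounding of each posterior coordinate within its grid cell so that a receiver whose indifference hyperplane passes through the cell is recommended its (true-posterior) best response with probability $1-O(\epsilon)$. A secondary subtlety is that the ``induced set'' $S(p)$ is only well-defined up to tie-breaking; I would handle this exactly as in Section \ref{sec:constant}, treating the simplex as partitioned by the $n$ indifference hyperplanes and working cell-by-cell, but now allowing the $\epsilon$-band around each hyperplane to be resolved either way. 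Finally, the non-monotone $\tfrac12$ loss needs an argument that the grid/deletion-restricted class of schemes still captures a $\tfrac12$-fraction of the optimum; the cleanest route is probably to note that the unconstrained maximum $\max_S f(S)$ is always achievable by a no-information scheme and that $\max_S f(S) \ge \tfrac12 \cdot \mathrm{OPT}$ for non-negative submodular $f$ combined with the noise-stable rounding of the genuinely informative part of the optimal scheme.
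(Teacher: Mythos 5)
Your high-level plan — reduce to a posterior-decomposition LP, discretize the set of posteriors, and control the resulting perturbation of the induced best-response set via noise stability — is essentially the same framework the paper uses (Proposition \ref{thm:bicriteria-approx} plus Theorem \ref{thm:submodular-stable}). But two of your specific steps have genuine gaps.

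\textbf{The rounding step is the crux, and deterministic grid-rounding does not work.} You propose to ``snap'' each posterior $p$ to the nearest point of a grid of resolution $\epsilon$, claiming this makes each receiver $i$'s margin $\sum_\theta p_\theta u_i(\theta)$ move by $O(\epsilon)$. That is false: moving each coordinate of $p$ by up to $O(\epsilon)$ can move $\sum_\theta p_\theta u_i(\theta)$ by up to $O(|\Theta|\epsilon)$, since $\|u_i\|_\infty \le 1$ but $\|\tilde p - p\|_1$ can be as large as $|\Theta|\cdot O(\epsilon)$. Making this $O(\epsilon)$ forces $K = \Omega(|\Theta|/\epsilon)$ in the $K$-uniform grid, which blows up the grid to $|\Theta|^{\Omega(|\Theta|/\epsilon)}$ and destroys the claimed running time. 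You gesture at ``randomizing the rounding direction,'' which is the right instinct, but the specific device you need is the one in the paper's proof of Proposition \ref{thm:bicriteria-approx}: replace $p$ by the empirical distribution $\tilde p$ of $K$ \emph{i.i.d.\ samples} of a state drawn from $p$. Then for each receiver $i$, $\sum_\theta \tilde p_\theta u_i(\theta)$ is a sample mean of $K$ i.i.d.\ variables in $[-1,1]$ with mean $\sum_\theta p_\theta u_i(\theta)$, so Hoeffding gives $|\sum_\theta \tilde p_\theta u_i(\theta) - \sum_\theta p_\theta u_i(\theta)| \le \epsilon$ with probability $1-\delta$ for $K = O(\ln(1/\delta)/\epsilon^2)$, \emph{independently of $|\Theta|$}. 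It is this concentration step — not a deterministic nearest-grid-point rounding — that simultaneously controls the flip probability of each receiver and gives a grid small enough to enumerate. Once this is in place, the noise-stability lemma applies exactly as you intend.

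\textbf{Your argument for the non-monotone $\tfrac12$ factor is wrong.} You say ``the unconstrained maximum $\max_S f(S)$ is always achievable by a no-information scheme.'' It is not: under a no-information scheme each receiver plays his best response to the prior $\lambda$, so the induced set is $S(\lambda)$, which need not come close to $\arg\max_S f(S)$; the sender cannot simply ``ignore the signal and play a set.'' The actual source of the $\tfrac12$ is different. After rounding, there is a set $\tilde C$ of ``borderline'' receivers whose margin at $\tilde p$ is within $\epsilon$ of zero, so the sender may freely recommend either action to them under $\epsilon$-persuasiveness. To certify the scheme's value one must (approximately) solve $\max_{\mathbf s_{\tilde C}} f(\mathbf s_{\tilde A}, \mathbf s_{\tilde B}, \mathbf s_{\tilde C})$. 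When $f$ is monotone this is trivially solved by setting $\mathbf s_{\tilde C} = \mathbf 1$; when $f$ is non-monotone submodular one can only get a $\tfrac12$-approximation (via deterministic unconstrained submodular maximization), and \emph{that} is where the $\tfrac12$ enters. This is what the paper packages as ``$\alpha$-approximability'' in Definition \ref{def:maximizability}. Your ``only delete borderline elements'' variant also does not obviously give a $\tfrac12$ against the true optimum. A smaller point: you invoke $\Ex_T f(T) \ge (1-2\epsilon) f(S)$ for the monotone case, but the paper's sharper bound is $(1-\epsilon)$ for monotone submodular and $(1-2\epsilon)$ for general submodular, which is what yields the stated $(1-\epsilon)$ vs.\ $\tfrac12(1-2\epsilon)$ approximation ratios.
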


\begin{remark} The multiplicative approximation guarantee of the sender utility in Theorem \ref{thm:sub} is a consequence of the conventional multiplicative guarantees for submodular maximization. As we shall see later, the approximation guarantee of the sender utility is closely  related to  the problem of directly maximizing the function $f$. Indeed, any additive approximation algorithm for maximizing $f$ can also be used to compute a public signaling scheme with additive guarantee for the sender utility. 
\end{remark}

The next two results illustrate: (1) The approximation guarantees in Theorem \ref{thm:sub} are essentially the best possible; (2) Similar results are not possible for private persuasion, showing an interesting contrast between public and private persuasion.  
\begin{proposition}[Tightness of Theorem \ref{thm:sub}]\label{prop:sub:PTAS:best} 
For any constant $c > 0$, there is no $\poly(n, |\Theta|)$ time algorithm that computes a $c$-approximate and $(\frac{1}{4n})$-persuasive  public scheme even when $f_{\theta} = f = |S|$, unless P = NP.    
\end{proposition}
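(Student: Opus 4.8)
The plan is to derive the statement from the intractability of \emph{exactly}-persuasive public persuasion established by Dughmi and Xu \cite{Dughmi2017algorithmic} (the hardness result stated above), by showing that the $(\tfrac{1}{4n})$-persuasiveness slack is useless on the hard instances. First, since $|S| = n\cdot(|S|/n)$, rescaling the objective by $n$ preserves every multiplicative approximation ratio, so that result already implies that with $f(S)=|S|$ it is NP-hard to distinguish ``yes'' instances, where the exactly-persuasive optimum $\mathrm{OPT}_0$ is at least some value $\alpha$, from ``no'' instances, where $\mathrm{OPT}_0<c\alpha$, for every constant $c\in(0,1)$. It therefore suffices to produce such gap instances on which, in addition, the optimum $\mathrm{OPT}_{1/(4n)}$ over $(\tfrac1{4n})$-persuasive public schemes equals $\mathrm{OPT}_0$: the expected sender utility $V$ returned by a hypothetical polynomial-time $c$-approximate $(\tfrac1{4n})$-persuasive algorithm then obeys $c\cdot\mathrm{OPT}_{1/(4n)}\le V\le\mathrm{OPT}_{1/(4n)}$, hence $V\ge c\alpha$ in the yes case and $V\le\mathrm{OPT}_0<c\alpha$ in the no case, so thresholding $V$ at $c\alpha$ decides the NP-hard problem, forcing P = NP.

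The crux is thus the equality $\mathrm{OPT}_{1/(4n)}=\mathrm{OPT}_0$ on suitably chosen hard instances, and I would obtain it by making receiver payoffs \emph{rigid}. The natural choice is to use payoff vectors $u_i\in\{-1,0,1\}^{\Theta}$ (well within the $[-1,1]$ convention) and to engineer the prior and the state set so that every signal used by an optimal scheme carries a posterior $p$ that is uniform on a support of size less than $4n$ --- for instance a reduction whose optimal signaling scheme is a partition scheme over at most $n$ states. For such a posterior, $\langle u_i,p\rangle$ is a rational with denominator less than $4n$, and no such rational lies in $[-\tfrac1{4n},0)$: if $a/b\in[-\tfrac1{4n},0)$ with $1\le b<4n$ then $a\le-1$, so $a/b\le-1/b<-\tfrac1{4n}$, a contradiction. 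Hence the set of receivers for whom action~$1$ is an exact best response at $p$ coincides with the set for whom it is only a $\tfrac1{4n}$-best response, so the relaxed slack flips no recommendation and the two optima agree. (Concretely, for a set-system reduction with $u_i(\theta)=+1$ for $\theta\in A_i$ and $-1$ otherwise, $|\Theta|\le n$, and uniform prior, a signal of probability $q$ with posterior uniform on $U$ is $\epsilon$-persuasive for receiver $i$ exactly when $|A_i\cap U|\ge\tfrac{|U|}{2}-\tfrac{\epsilon|\Theta|}{2q}$; for $\epsilon=\tfrac1{4n}$ and $q=|U|/|\Theta|$ the right-hand side rounds up to $\lceil|U|/2\rceil$, the same threshold as for $\epsilon=0$.)

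The main obstacle is precisely the phrase ``every signal used by an optimal scheme'': a $(\tfrac1{4n})$-persuasive scheme may use arbitrary posteriors, and for a very rare signal of probability $q$ the effective posterior-level slack $\epsilon/q$ is large, so one cannot simply assume coarse uniform-support posteriors. I see two routes to close this gap. Route (i): harden the reduction so that no posterior off the intended grid can make more than the ``background'' number of receivers choose action~$1$ --- e.g.\ by padding $\Theta$ with dummy states and adding receivers whose payoffs heavily penalize off-grid posteriors --- so that $\mathrm{OPT}_{1/(4n)}$ still stays below $c\alpha$ in the no case even with arbitrary posteriors allowed. Route (ii): prove a general rounding lemma stating that whenever all payoff vectors have entries in $\{-1,0,1\}$ and $\epsilon\le\tfrac1{4n}$, any $\epsilon$-persuasive public scheme can be converted into an exactly persuasive one with no loss of sender utility, by snapping each posterior to a nearby rational grid point while preserving Bayes-plausibility in aggregate. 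Once this robustness is in hand the reduction is complete; moreover the same $\epsilon=\tfrac1{4n}$ instances rule out a bi-criteria FPTAS for Theorem~\ref{thm:sub}, which would output a $(1-\tfrac1{4n})$-approximate, $\tfrac1{4n}$-persuasive scheme in polynomial time, while a single-criterion PTAS is already excluded by the intractability result quoted above.
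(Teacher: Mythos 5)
You correctly identify the central obstacle — that an $\epsilon$-persuasive scheme can place mass on arbitrary posteriors, not just ones on a nice rational grid, so one cannot simply argue $\mathrm{OPT}_{1/(4n)}=\mathrm{OPT}_0$ on the unmodified hard instances — but you then leave it open, offering two possible routes without carrying either out. That is a genuine gap: the entire content of the proposition is closing exactly that gap, and neither of your routes is worked out to the point where it does. Route (ii) in particular is dubious as stated: ``snapping each posterior to a nearby rational grid point while preserving Bayes-plausibility in aggregate'' is not a free operation, since perturbing the posteriors individually changes the implied prior, and fixing Bayes-plausibility by reweighting the signal probabilities can in turn break $\epsilon$-persuasiveness. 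You would need a nontrivial argument (and restrictions on the instance) to make such a rounding lemma true.

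Your Route (i) is the right direction in spirit, and it is essentially what the paper does, but the paper's construction is more direct than ``padding with dummy states.'' Rather than trying to show that off-grid posteriors are useless, the paper builds the $\tfrac{1}{4n}$ slack directly into the receiver payoffs so that the independent-set argument holds for \emph{every} posterior, not just the intended ones. Concretely, starting from the Dughmi--Xu reduction from a graph-coloring gap problem (receivers and states both indexed by vertices, uniform prior), the payoffs are shifted: $u_i(\theta) = \tfrac{1}{2} - \tfrac{1}{4n}$ if $i=\theta$, $u_i(\theta) = -1 - \tfrac{1}{4n}$ if $(i,\theta)\in E$, and $u_i(\theta) = -\tfrac{1}{2n}$ otherwise. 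One then checks by a direct calculation that for any posterior $x\in\Delta_\Theta$ and any adjacent $i,j$ with $x_i\geq x_j$, receiver $j$'s expected utility is strictly less than $-\tfrac{1}{4n}$, so the set $\{i : \langle u_i,x\rangle \geq -\tfrac{1}{4n}\}$ is always an independent set. This makes the soundness bound immune to the choice of posteriors and to the relaxed persuasiveness, which is precisely the robustness you were reaching for. The completeness side then proceeds as usual, using one signal per color class. So: right problem diagnosis, right overall strategy, but the construction that closes the gap is missing, and without it the proposition is not proved.
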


\begin{proposition}[The Contrast to Private Persuasion]\label{prop:private-PTAS}
Suppose $f_{\theta} = f$ for any $\theta \in \Theta$ and $f$ is monotone submodular. Unless P=NP, there is no $\poly(n, |\Theta|^{1/\epsilon})$ time algorithm that computes an $(1 - \epsilon)$-approximate and $\epsilon$-persuasive private signaling scheme even when there are only \emph{two} states of nature.  
\end{proposition}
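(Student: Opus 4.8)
The plan is to reduce from an NP-hard coverage-type optimization problem with a constant multiplicative gap, in the spirit of the APX-hardness of two-state private persuasion with monotone submodular objectives \cite{babichenko2016}, but with a gadget engineered so that the hardness is robust to relaxing exact persuasiveness to $\epsilon$-persuasiveness. Concretely, I would take a constant-degree, constant-gap-hard coverage instance --- a ground set $U$ and sets $C_1,\dots,C_n$ with each element contained in $O(1)$ sets, for which it is NP-hard to distinguish a ``yes'' value of the coverage optimum from a ``no'' value smaller by a fixed constant factor --- and build a two-state persuasion instance: states $\theta_1,\theta_2$ with prior $\lambda=(1/2,1/2)$, receivers $[n]$, payoff vectors $u_i\in[-1,1]^{\{\theta_1,\theta_2\}}$ of constant magnitude, and sender objective the (monotone submodular) coverage function $f$. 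The payoffs are chosen so that the per-receiver private-persuasiveness constraints cap each marginal recommendation probability $x_{\theta,i}=\Pr_{\pi_\theta}[\text{rec }1\text{ to }i]$ in a way that encodes the combinatorial constraint of the hard instance. Via the natural per-state decomposition of private persuasion --- the optimal sender utility is $\sum_\theta\lambda_\theta\, f^{+}(x_\theta)$ maximized over persuasive marginals $x$, where $f^{+}$ is the concave closure of $f$ --- together with the correlation-gap relationship between $f^{+}$ and $f$, the optimal exactly-persuasive value should track the combinatorial optimum up to a constant factor, so that the yes/no gap is a constant fraction of $f([n])$.

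The crux, and what goes beyond the known APX-hardness, is showing that the $\epsilon$-persuasiveness relaxation does not let the sender jump past this gap. I would prove that any $\epsilon$-persuasive private scheme can be turned into an exactly persuasive one losing only $O(\epsilon)\, f([n])$ in sender utility: for each receiver $i$ whose action-$1$ recommendation constraint is $\epsilon$-violated (value in $[-\epsilon,0)$), withdraw action-$1$ recommendations to $i$ in the state where $u_i$ is negative until the constraint is restored; since $\lambda_\theta|u_i(\theta)|=\Omega(1)$ by construction, this needs only $O(\epsilon)$ probability mass, and by submodularity the utility drop is at most $O(\epsilon)\,f(\{i\})$, which sums to $O(\epsilon)\sum_i f(\{i\})=O(\epsilon)\,f([n])$ by the bounded-degree assumption; an $\epsilon$-violated action-$0$ constraint is repaired even more cheaply by switching the relevant mass to action-$1$ in the state where $u_i$ is positive, which by monotonicity only raises $f$. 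Choosing $\epsilon$ a small enough constant makes $O(\epsilon)\,f([n])$ smaller than the gap, so on no-instances the $\epsilon$-persuasive optimum stays below $(1-\epsilon)$ times the exactly-persuasive optimum on yes-instances; hence a $\poly(n)$-time $(1-\epsilon)$-approximate, $\epsilon$-persuasive private scheme would decide the gap problem, and since $\poly(n,|\Theta|^{1/\epsilon})=\poly(n)$ for constant $\epsilon$ this is ruled out unless $\mathrm{P}=\mathrm{NP}$ --- a fortiori so is a bi-criteria PTAS, which is the contrast with Theorem~\ref{thm:sub}.

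I expect the main obstacle to be making the combinatorial gadget genuinely robust. The $\epsilon$-slack perturbs each per-receiver constraint by $O(\epsilon)$, which can perturb a global budget-type constraint by $\Theta(\epsilon n)$; for this to be absorbed without destroying the gap, the hard coverage instance must have its ``budget'' --- and its gap --- be a constant fraction of $n$, and must have bounded degree so that no single receiver carries more than an $O(1/n)$ share of $f([n])$. Pinning down a constant-gap-hard coverage problem in exactly this regime, and verifying that the per-state decomposition with the chosen payoffs really does make the sender's optimum track it with the right constant in both directions, are the places where the argument needs care; a secondary point is that the contrast with Theorem~\ref{thm:sub} is crispest for $\epsilon$ as small as $\Theta(1/\log n)$ (which still keeps $\poly(n,|\Theta|^{1/\epsilon})$ polynomial for constant $|\Theta|$), and the same conversion argument covers that case provided the gap instance tolerates an additive $O(n/\log n)$ relaxation of its budget.
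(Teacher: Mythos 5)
Your high-level plan shares the paper's scaffolding: private persuasion with state-independent $f$ decomposes per state into concave-closure values, and the goal is to show that $\epsilon$-slack in persuasiveness gains only $O(\epsilon)$ in utility. But the route you take to that last step is genuinely different from the paper's and, I think, harder to close. The paper's gadget gives every receiver the \emph{identical} two-state payoff vector $(a,-1)$ for a constant $a$ coming from the hardness source, so that the optimal $\epsilon$-persuasive value is \emph{exactly} $\tfrac{1}{2}\bigl[f([n])+f^+(a+2\epsilon)\bigr]$; the reduction is then black-box from the Khot et al./Babichenko--Barman NP-hardness of additively approximating $f^+(a)$ for monotone submodular $f$ (Lemma~\ref{lem:submodular-PTAS-hard}), and the entire ``slack doesn't help'' step is the self-contained smoothness inequality $(1+2\epsilon/a)\,f^+(a)\geq f^+(a+2\epsilon)$ (Lemma~\ref{lem:bound-closure}), proved by shrinking a feasible distribution to budget $a-2\epsilon$ and invoking concavity of $x\mapsto f^+(x)$. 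No assumption on $f$ beyond monotone submodularity is needed.

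Your route instead reduces from a constant-gap \emph{integral} coverage instance with bounded element frequency and controls $\epsilon$-slack via a per-receiver local repair. The repair step itself is plausible --- withdraw $O(\epsilon)$ marginal mass from each violated receiver, lose at most $O(\epsilon)f(\{i\})$ per receiver by submodularity, and sum to $O(\epsilon)f([n])$ under bounded frequency --- and is a reasonable substitute for Lemma~\ref{lem:bound-closure}, though it depends on a bounded-degree assumption the paper's lemma does not need (without it $\sum_i f(\{i\})$ can be $\Omega(n)\,f([n])$). The more serious gap, which you flag but do not resolve, is relating the sender's LP optimum (a concave-closure quantity) to the \emph{integral} coverage optimum within a factor tighter than your yes/no gap \emph{in both directions}. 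You appeal to ``the correlation-gap relationship,'' but the correlation gap for coverage is the constant $1-1/e$, which is the same order as the inapproximability threshold for max coverage; it is not automatic --- and I suspect quite delicate --- that the arithmetic closes rather than the relaxation slack eating the combinatorial gap. The paper sidesteps this entirely by making the hard quantity \emph{be} $f^+$ rather than an integral optimum, so there is nothing to lose between the LP the sender solves and the object that is hard to approximate. The natural fix for your approach is to reduce from $f^+$-hardness directly, at which point you would likely converge onto Lemma~\ref{lem:bound-closure} or something equivalent.
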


Proposition \ref{prop:sub:PTAS:best} shows that there is no efficient algorithm that can achieve a constant approximation to the sender utility under the strengthened $\epsilon$-persuasiveness with $\epsilon = \poly{ (\frac{1}{n}) }$ instead of a constant $\epsilon$.   This proposition strengthened a hardness result in \cite{Dughmi2017algorithmic} by employing a tighter analysis for their reduction. The proof of Proposition \ref{prop:private-PTAS} leverages the APX-hardness of approximating the concave closure of monotone submodular functions due to \cite{babichenko2016} and ``converts'' it (through reductions)  to the NP-hardness of obtaining bi-criteria approximations.  The detailed proofs are  deferred to Appendix \ref{sec:bicriteria:PTAS:best} and \ref{sec:bicriteria:private:hard}, respectively.  

The rest of this section is devoted to the proof of Theorem \ref{thm:sub}. The starting point of our proof is a general framework developed in Cheng et al. \cite{mixture_selection}, which is useful for  designing approximately optimal signaling schemes.  However, our setting is  more challenging because the sender's objective  in our problem is a submodular function, which is APX-hard to maximize directly even when  persuasion or incentive is \emph{not} present, whereas in all the settings considered in \cite{mixture_selection}, directly maximizing their objective functions is an easy task. To prove our result, we first adapt and generalize the framework of \cite{mixture_selection} to incorporate certain notion of ``approximability'' of the objective function, defined as follows. Note that in this definition, we used the alternative notion of $\mathbf{s} \in \{ 0,1 \}^n$ to denote a subset, which appears more convenient. 

\begin{definition}[$\alpha$-Approximability]\label{def:maximizability} 
	A set function $f: \{ 0 , 1\}^n \to \RR_+$ is $\alpha$-approximable for some $\alpha \in [0,1]$ if there is a polynomial time algorithm such that for any $T \subseteq [n]$ and any given variable values $\mathbf{s}^0_{-T} \in \{ 0,1 \}^{-T}$ for variables in set $-T = [n] \setminus T$,  the algorithm finds a $\mathbf{s}^*_{T} $ such that 
	\begin{equation}\label{eq:alpha-approx}
	f(\mathbf{s}^*_{T}, \mathbf{s}^0_{-T})  \geq \alpha \cdot \max_{  \mathbf{s}_{T} \in \{ 0,1 \}^{T} } f(\mathbf{s}_{T}, \mathbf{s}^0_{-T}). 
	\end{equation}
	We will use such an algorithm frequently and will call it an $\alpha$-\emph{approximate subroutine} for $f$.  
\end{definition}

The request of $\alpha$-approximability is natural since otherwise it is even intractable to evaluate the sender's objective value by finding the optimal tie breaking for receivers in set $T$, which however is crucial for designing optimal signaling schemes. Observe that any monotone function is $1$-approximable since  $\mathbf{s}_T = \mathbf{1}$ is optimal for $\max_{  \mathbf{s}_{T} \in \{ 0,1 \}^{T} } f(\mathbf{s}_{T}, \mathbf{s}^0_{-T})$. Non-monotone submodular functions are $1/2$-approximable since unconstrained submodular maximization  admits a polynomial-time deterministic $1/2$-approximation \cite{buchbinder2018deterministic}.

Another key notion we need is the \emph{noise stability} of the sender objective. As mentioned previously, noise stability of set functions is not a new concept. Our definition here is a \emph{strictly stronger} version of the definition in \cite{CKKL12,Feldman2017tight}, and can be  viewed as a discrete variant of the noise stability defined in \cite{mixture_selection} for \emph{continuous} functions. We start by defining what ``noise'' mean in discrete cases.     

\begin{definition}[$\epsilon$-Noisy Distribution]\label{def:noisy-dist}
Let $S \subseteq [n] $ be any subset and $\mathbf{p}$ be a distribution over  $2^{[n]}$. For any
$\epsilon \in (0,1)$, we say $\mathbf{p}$ is an $\epsilon$\emph{-noisy distribution around $S$} if the following holds simultaneously: (1) for any $i \in S$: $\Pr_{T \sim \mathbf{p}}[i\in T] \geq 1- \epsilon$; (2) for any $i \in [n]\setminus S$: $\Pr_{T \sim \mathbf{p}}[i\in T] \leq \epsilon$.  
\end{definition}

In other words, $\mathbf{p}$ is an $\epsilon$-noisy distribution around $S \subseteq [n]$ if the sampled subset $T \sim \mathbf{p}$  ``almost'' equals  $S$, except that $T$ may exclude any $i \in S$ and include any $i \not \in S$, each with marginal probability at most $\epsilon$. Therefore, we can view $\Ex_{T \sim \mathbf{p}}f(T) $ as a \emph{noisy evaluation} of function $f$ at set $S$, during which $S$ is slightly perturbed.  Our definition of \emph{noise stability} lower bounds how much the expected function value would \emph{decrease} under   arbitrary noisy evaluation of $f$ at any subset $S$.

\begin{definition}[Noise Stability]\label{def:noise-stability}
A set function $f: 2^{[n]} \to \RR_+$ is  $\beta$-\emph{noise-stable}, or $\beta$-\emph{stable} for short, if for any $S \subseteq [n]$, any $\epsilon \in (0,1)$, and any $\epsilon$-noisy distribution $\mathbf{p}$ around $S$, we have:
\begin{equation}\label{eq:def-stable}
\Ex_{T \sim \mathbf{p}} f(T) \geq (1 - \beta \epsilon)f(S). 
\end{equation} 
	\end{definition}


It turns out that the approximability and noise stability of $f$ implies an efficient bi-criteria approximation for public persuasion, as stated in the following proposition whose proof is deferred to Appendix \ref{sec:bicriteria:app-framework}.  
 
\begin{proposition}\label{thm:bicriteria-approx}
Suppose $f_{\theta} = f$ for all $\theta$	 and $f$ is $\alpha$-approximable and $\beta$-stable,  then there is a $\poly(n, |\Theta|^{1/\epsilon})$ time algorithm that outputs an  $\alpha(1 - \beta\epsilon)$-approximate and $\epsilon$-persuasive public signaling scheme.  
\end{proposition}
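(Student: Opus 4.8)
**Proof proposal for Proposition 4.9 ($\alpha$-approximability + $\beta$-stability $\Rightarrow$ bi-criteria approximation).**

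The plan is to reduce public persuasion to a continuous optimization problem over posterior distributions, following the mixture-selection framework of Cheng et al.~\cite{mixture_selection}, and then discretize. Concretely, I would first recall the concavification view: an optimal public scheme is a distribution over posteriors $p \in \Delta_{|\Theta|}$ whose average is the prior $\lambda$, and at each posterior $p$ the induced signal is the best-response set $S(p) = \{ i : \sum_\theta u_i(\theta) p_\theta \geq 0 \}$ (ties to the sender). The sender's value at posterior $p$ is then $F(p) := f(S(p))$, so the optimal scheme value equals the concavification of $F$ evaluated at $\lambda$. The first step is to show it suffices to take the scheme to be a convex combination of at most $|\Theta|$ posteriors (Carath\'eodory on the epigraph), so we only need to search over a bounded-size support.

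The second step is the discretization. I would restrict attention to posteriors whose coordinates are multiples of $\epsilon / |\Theta|$ (or a comparable grid with $\poly(|\Theta|^{1/\epsilon})$ many points), as in \cite{mixture_selection}. For any target posterior $p$ appearing in the optimal scheme, rounding $p$ to a nearby grid point $\tilde p$ changes each coordinate by at most $O(\epsilon/|\Theta|)$; the key point is to control how much $F$ can change. Here is where $\beta$-stability enters: rounding $p$ to $\tilde p$ can flip the best-response bit of receiver $i$ only if $|\sum_\theta u_i(\theta) p_\theta|$ is small, and more importantly, averaging over a randomized rounding of $p$ to grid points, each receiver's inclusion bit flips with marginal probability $O(\epsilon)$ relative to $S(p)$ — but in a correlated way across receivers. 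This is exactly an $\epsilon$-noisy distribution around $S(p)$, so Definition~4.7 gives that the expected value of $f$ over the rounded sets is at least $(1-\beta\epsilon) f(S(p))$. Summing over the (at most $|\Theta|$) posteriors in the optimal scheme, we get a grid-supported scheme with sender value at least $(1-\beta\epsilon)\cdot \mathrm{OPT}$, at the cost of a slight violation of the persuasiveness constraints — and the violation is exactly $\epsilon$ because each receiver's aggregate incentive quantity moves by at most $\epsilon$, giving $\epsilon$-persuasiveness.

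The third step is algorithmic: even over the grid, we cannot enumerate the $2^n$ possible signal sets, but we do not need to. For each grid posterior $\tilde p$ the set of receivers with a \emph{strict} best response is determined, and only the receivers at the boundary (with $\sum_\theta u_i(\theta)\tilde p_\theta = 0$, or within the $\epsilon$-slack) need a tie-breaking decision; we invoke the $\alpha$-approximate subroutine of Definition~4.4 with $T$ equal to this ambiguous set and $\mathbf{s}^0_{-T}$ the forced bits, obtaining a set achieving an $\alpha$-fraction of the best achievable value at that posterior. Plugging these $\alpha$-approximate values into a polynomial-size LP (variables: weight on each grid posterior, subject to the average being $\lambda$ and the relaxed incentive constraints) and solving it yields an $\alpha(1-\beta\epsilon)$-approximate, $\epsilon$-persuasive scheme. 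The LP is feasible because the rounded optimal scheme is a feasible point with value $\geq (1-\beta\epsilon)\mathrm{OPT}$, and combining with the $\alpha$ loss from tie-breaking gives the claimed bound.

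The main obstacle I expect is the second step — making precise that the randomized rounding of a target posterior induces an honest-to-goodness $\epsilon$-noisy distribution around $S(p)$ in the sense of Definition~4.6, i.e., that \emph{every} receiver's inclusion bit (not just the boundary ones) flips with marginal probability at most $\epsilon$, and then correctly accounting for the interaction between this rounding and the tie-breaking subroutine (the subroutine is applied \emph{after} rounding, so one must argue the value bound $(1-\beta\epsilon)f(S(p))$ survives the subsequent $\alpha$-approximate re-optimization of the ambiguous coordinates, rather than being clobbered). The persuasiveness bookkeeping — verifying the $\epsilon$ slack is not blown up to $O(\epsilon \cdot \text{something})$ — is the other place where care is needed, but it is essentially the same calculation as in \cite{mixture_selection}.
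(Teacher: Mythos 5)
Your high-level plan matches the paper's proof exactly --- discretize the posterior space, use $\beta$-stability to control the loss from discretization, and use the $\alpha$-approximate subroutine to handle the tie-breaking at each discretized posterior --- and you correctly flag the two places where the argument is genuinely delicate. But you leave both of those places unresolved, and one of them hides a real pitfall.

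\textbf{The discretization mechanism.} You oscillate between ``round $p$ to the nearest grid point with spacing $\epsilon/|\Theta|$'' and ``averaging over a randomized rounding.'' The deterministic version does not work: to bound $\bigl|\sum_\theta (\tilde p_\theta - p_\theta) u_i(\theta)\bigr|$ by $\epsilon$ via coordinatewise proximity you need grid spacing about $\epsilon/|\Theta|$, and the resulting grid has roughly $(|\Theta|/\epsilon)^{|\Theta|}$ points, which is not $\poly(n, |\Theta|^{1/\epsilon})$. The paper instead represents each posterior $p$ as the expectation of the \emph{empirical distribution $\tilde p$ of $K$ i.i.d.\ samples from $p$}, with $K = O(\log(1/\delta)/\epsilon^2)$. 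Hoeffding then gives $\Pr\bigl[\,|\sum_\theta \tilde p_\theta u_i(\theta) - \sum_\theta p_\theta u_i(\theta)| > \epsilon \,\bigr] \le \delta$ \emph{for each receiver separately}. The point is that you only need concentration of the $n$ linear functionals $\langle u_i, \cdot\rangle$, not of the posterior coordinates, and that is what lets $K$ be a constant depending only on $\epsilon,\delta$ and the grid of $K$-uniform distributions be of size $O(|\Theta|^K)$.

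\textbf{The auxiliary set and the interaction with the subroutine.} You correctly note that the ``inclusion bit'' of receiver $i$ in the direct best response to $\tilde p$ can flip relative to $S(p)$, and that this must form an $\epsilon$-noisy distribution. But the direct best response set $S(\tilde p)$ does \emph{not} do the job: for tied receivers the tie can break the ``wrong'' way with probability far exceeding $\epsilon$, so you cannot apply Definition~\ref{def:noise-stability} to $S(\tilde p)$. The paper resolves this by explicitly constructing a coupled random set $\mathbf{y}$ that agrees with $S(p)$ at coordinate $i$ \emph{unless the perturbed functional $\sum_\theta \tilde p_\theta u_i(\theta)$ strictly exits the $\epsilon$-slack strip on the opposite side}, in which case $y_i$ is forced to flip. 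By the concentration bound each coordinate flips with probability at most $\delta$, so $\mathbf{y}$ \emph{is} a $\delta$-noisy distribution around $S(p)$ and $\beta$-stability gives $\Ex[f(\mathbf{y})] \ge (1-\beta\delta)f(S(p))$. This also answers your second worry: $\mathbf{y}$ agrees with the forced bits on $\tilde A, \tilde B$ and is therefore a feasible tie-breaking for the ambiguous set $\tilde C$, so the $\alpha$-approximate subroutine at $\tilde p$ returns value at least $\alpha\, f(\mathbf{y})$; the two bounds compose to $\alpha(1-\beta\delta)U(p)$ without any clobbering. Without constructing $\mathbf{y}$, the argument as you have written it cannot be completed.
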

Note that  Proposition \ref{thm:bicriteria-approx}   also holds  if both the $\alpha$-approximability and $\beta$-stability are defined in the additive manner. That is, if  $
f(\mathbf{s}^*_{T}, \mathbf{s}^0_{-T})  \geq  \max_{  \mathbf{s}_{T} \in \{ 0,1 \}^{T} } f(\mathbf{s}_{T}, \mathbf{s}^0_{-T}) - \alpha$ 
and $
\Ex_{T \sim \mathbf{p}} f(T) \geq f(S) - \beta \epsilon$, then 
there is a $\poly(n, |\Theta|^{1/\epsilon})$ time algorithm that computes an $(\alpha + \beta \epsilon)$-optimal (in additives sense)  and $\epsilon$-persuasive public scheme.

As mentioned previously, $\alpha$-approximability is easily satisfied by submodular functions, monotone ($1$-approximable) or non-monotone ($1/2$-approximable). So central to the proof of Theorem \ref{thm:sub} is to prove the $\beta$-noise-stability of submodular functions for small $\beta$, which is also the main technical novelty of our proof.   
Since this property of submodularity is also interesting in its own, we state it as a theorem. Note that Theorem \ref{thm:submodular-stable} and Proposition  \ref{thm:bicriteria-approx} yield a proof to Theorem \ref{thm:sub}. 

\begin{theorem}\label{thm:submodular-stable}
	Any submodular function $f: 2^{[n]} \to\RR_+$ is $2$-noise-stable. If $f$ is also monotonically non-decreasing, then $f$ is $1$-noise-stable. Moreover, the bound of $1$-noise-stability for monotone submodular functions is \emph{tight} even for non-negative linear functions.
\end{theorem}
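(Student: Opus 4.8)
Every submodular $f: 2^{[n]} \to \RR_+$ is $2$-noise-stable; if $f$ is additionally monotone non-decreasing it is $1$-noise-stable; and the $1$-stability bound is tight even for non-negative linear functions.

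\textbf{Proof proposal.} The plan is to route everything through the \emph{Lov\'asz extension} $\widehat{f}\colon [0,1]^n \to \RR$ of $f$, defined by $\widehat{f}(x) = \int_0^1 f(\{i : x_i \geq \lambda\})\, d\lambda$. I will use two standard facts: $\widehat{f}(\one_R) = f(R)$ for every $R \subseteq [n]$, and $\widehat{f}$ is \emph{convex} precisely because $f$ is submodular (Lov\'asz). The first step is a ``fractional'' version of the theorem: for any random set $T$ with marginals $x_i := \Pr[i \in T]$, Jensen's inequality gives $\Ex_T f(T) = \Ex_T \widehat{f}(\one_T) \geq \widehat{f}(x)$. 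It then remains to bound $\widehat{f}$ at the two fractional points produced by the noise model. Concretely, I claim: (A) if $x_i \geq 1-\epsilon$ for all $i \in S$ and $x_i = 0$ for $i \notin S$, then $\widehat{f}(x) \geq (1-\epsilon) f(S)$; and (B) if $x_i = 1$ for all $i \in S$ and $x_i \leq \epsilon$ for all $i \notin S$, then $\widehat{f}(x) \geq (1-\epsilon) f(S)$. Both follow directly from the integral formula: in case (A) the level set $\{i : x_i \geq \lambda\}$ equals $S$ for every $\lambda \in (0, 1-\epsilon]$ and is a subset of $S$ for $\lambda \in (1-\epsilon, 1]$, so $\widehat{f}(x) = (1-\epsilon) f(S) + \int_{1-\epsilon}^1 f(\{i : x_i \geq \lambda\})\, d\lambda \geq (1-\epsilon) f(S)$ using $f \geq 0$; case (B) is symmetric, the ``slack'' integral now running over $\lambda \in (0, \epsilon]$.

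Given (A) and (B), both parts of the theorem fall out. Let $\mathbf{p}$ be any $\epsilon$-noisy distribution around $S$ and $T \sim \mathbf{p}$. For the \emph{monotone} case, monotonicity gives $f(T) \geq f(S \cap T)$; the random set $S \cap T$ has marginal $\Pr[i \in S \cap T]$ equal to $\Pr[i \in T] \geq 1-\epsilon$ for $i \in S$ and equal to $0$ for $i \notin S$, so the fractional step together with (A) yields $\Ex_T f(T) \geq \Ex_T f(S \cap T) \geq (1-\epsilon) f(S)$, i.e.\ $f$ is $1$-stable. (For monotone $f$ one can also avoid $\widehat{f}$ altogether: fixing an order $a_1,\dots,a_k$ of $S$, submodularity gives $f(S) - f(S\cap T) \leq \sum_{j : a_j \notin T}\big(f(\{a_1,\dots,a_j\}) - f(\{a_1,\dots,a_{j-1}\})\big)$, each term nonnegative by monotonicity, and taking expectations bounds the right-hand side by $\epsilon\big(f(S)-f(\emptyset)\big) \leq \epsilon f(S)$.) For the \emph{general submodular} case, apply submodularity to the pair $S,T$: $f(S) + f(T) \geq f(S \cup T) + f(S \cap T)$, hence $\Ex_T f(T) \geq \Ex_T f(S \cup T) + \Ex_T f(S \cap T) - f(S)$. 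Now $S \cap T$ satisfies the hypothesis of (A), while $S \cup T$ has marginals $1$ on $S$ and $\Pr[i \in T] \leq \epsilon$ off $S$, so it satisfies the hypothesis of (B); thus each of the two expectations is at least $(1-\epsilon) f(S)$, giving $\Ex_T f(T) \geq 2(1-\epsilon) f(S) - f(S) = (1 - 2\epsilon) f(S)$, i.e.\ $f$ is $2$-stable.

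For tightness, take $f(S) = |S|$ (nonnegative and linear, hence monotone submodular) on $[n]$ and let $\mathbf{p}$ place probability $\epsilon$ on $\emptyset$ and $1-\epsilon$ on $[n]$. Then $\Pr_{T \sim \mathbf{p}}[i \in T] = 1-\epsilon$ for every $i$, so $\mathbf{p}$ is $\epsilon$-noisy around $[n]$, while $\Ex_T f(T) = (1-\epsilon)\, n = (1-\epsilon) f([n])$; hence no constant smaller than $1$ can replace $\beta=1$ in \eqref{eq:def-stable}, even for linear $f$.

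The step I expect to be the crux is precisely the handling of the \emph{arbitrary correlation} allowed in $\mathbf{p}$. A naive element-by-element accounting (fix an ordering, charge the loss to individual marginals) is clean when all prefix marginals are nonnegative --- i.e.\ for monotone $f$ --- but for non-monotone submodular $f$ no ordering need have this property, and an adversary can correlate the perturbations so as to concentrate the loss onto the negative marginals, which is exactly why a single-ordering argument cannot reach the factor $2$. Passing to the convex relaxation $\widehat{f}$ and invoking Jensen is what neutralizes the correlation, reducing the whole question to evaluating $\widehat{f}$ at two explicit fractional points, which is immediate. (A minor point to check along the way is the measure-theoretic bookkeeping in the integral formula at the finitely many $\lambda$-values where a level set jumps; these contribute nothing.)
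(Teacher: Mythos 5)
Your proof is correct, and it takes a genuinely different route from the paper's, even though both ultimately rest on the Lov\'asz extension. The paper writes the worst-case noisy evaluation $\min_{\mathbf{p}} \Ex_{T\sim\mathbf{p}} f(T)$ as an exponential LP, argues that the LP has an optimal \emph{chain-distribution} solution (which is exactly the Lov\'asz-extension-as-minimum-over-consistent-distributions fact), sorts the marginals $x_i^*$ in descending order, observes that for $\epsilon<1/2$ the constraints force $x_i^* > 1/2$ on $S$ and $x_i^* < 1/2$ off $S$ so that $S$ must be a prefix set $T_k$ of the sorted order, and then reads off $p^*(T_k)=x_k^*-x_{k+1}^*\geq 1-2\epsilon$ and drops the remaining terms using $f\geq 0$. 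Your argument substitutes the single-use submodularity inequality $f(S)+f(T)\geq f(S\cup T)+f(S\cap T)$ for that sorting step: it reduces the bound to two evaluations of $\widehat{f}$ at ``clean'' fractional points (one lying on the face $x_i=0$ for $i\notin S$, the other on the face $x_i=1$ for $i\in S$), each of which is immediately $\geq(1-\epsilon)f(S)$ from the integral formula, and the factor $2$ emerges transparently as $(1-\epsilon)+(1-\epsilon)-1$. This is a nicer modularization: you never need to argue that the noisy marginals sort $S$ into a prefix, the two claims (A) and (B) are reusable and need only one side of the noise constraint each, and your monotone case specializes cleanly by replacing the submodular inequality with $f(T)\geq f(S\cap T)$. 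The paper's route is a hair shorter and exhibits the worst-case noisy distribution explicitly (as a chain), which is useful for seeing that $2$ is attained in the non-monotone case; your route is more conceptual and gives a directly verifiable, measure-theoretic statement about $\widehat{f}$. Your tightness example is the same one the paper gestures at without spelling out.
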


\begin{proof}
	We will first consider general submodular functions without the monotonicity assumption. At the end of the proof, we illustrate how monotonicity allows us to tighten the analysis to achieve a better bound.  
	
	According to Definition \ref{def:noise-stability}, we need to prove that for any submodular function $f: 2^{[n]} \to \RR_+$, any $S \subseteq [n]$ and any $\epsilon$-noisy distribution $\mathbf{p}$ around $S$, we have $\Ex_{T \sim \mathbf{p}} f(T) \geq (1 - 2\epsilon)f(S)$. Equivalently, we can rephrase this requirement using the language of optimization, as follows:   among all possible $\epsilon$-noisy distributions around $S$, the minimum value of $\Ex_{T \sim \mathbf{p}} f(T)$ is at least $(1 - 2\epsilon)f(S)$. 
	
	Key to the proof is to formulate the noise stability requirement  as a problem of lower bounding the optimal objective of certain (exponentially large) linear program. We start by characterizing the set of all $\epsilon$-noisy distributions around $S$. Let $ \mathbf{p} = \{ p(T) \}_{T \subseteq [n]}$ denote a distribution over $2^{[n]}$. By Definition \ref{def:noisy-dist} $\mathbf{p}$  is an $\epsilon$-noisy distribution around $S$ if and only if the following constraints hold: (1) $\sum_{T: i \in T} p(T) \leq \epsilon$ for all $i \not \in S$; (2) $\sum_{T: i \in T} p(T) \geq 1 - \epsilon$ for all $i \in S$. Crucially, these are all linear constraints on $\mathbf{p} $. As a result, the minimum value of $\Ex_{T \sim \mathbf{p}} f(T)$ among all possible $\epsilon$-noisy distributions around $S$ corresponds precisely to the optimal objective value of the following  linear problem with variables $\{ p(T) \}_{T \subseteq [n]}$. The noise stability requirement can be equivalently stated as proving the optimal objective of LP \eqref{lp:worst-noise} is at least $(1-2\epsilon)f(S) $.  
	\begin{lp}\label{lp:worst-noise}
		\mini{\sum_{T \subseteq [n]} p(T) f(T)}
		\st
		\qcon{ \sum_{T: i \in T} p(T) \leq \epsilon }{i \notin S}
		\qcon{ \sum_{T: i \in T} p(T)\geq 1- \epsilon }{i \in S}
		\con{ \sum_{T \subseteq [n]} p(T) =1 }
		\qcon{p(T) \geq 0}{T \subseteq [n]}
	\end{lp}
	
	
	Next, we argue that  there always exists an optimal solution $\mathbf{p}^* =  \{ p^*(T)\}_{T \subseteq [n]}$ to LP \eqref{lp:worst-noise} which is a \emph{chain distribution}. That is, for any $T, T'$ in the support of $\mathbf{p}^*$, either $T \subset T'$ or $T' \subset T$. Let $\mathbf{p}^* $ be any optimal solution and $x_i^* = \sum_{T: i \in T} p^*(T)$ be the marginal probability that $i$ belongs to the random set $T \sim \mathbf{p}^*$. Then $\mathbf{p}^* $ can be viewed as the distribution over $2^{[n]}$  that minimizes the expected value of $f(T)$, i.e., $\sum_{T \subseteq [n]} p(T) f(T)$, among all distributions with marginal $\mathbf{x}^*$. The optimal objective is precisely the value of the \emph{Lov\'{a}sz extension} \cite{lovasz1983submodular} of a subsmodular function at point $\mathbf{x}^*$. It is widely known that the minimum objective can be achieved by a chain distribution (see, e.g., \cite{lovasz1983submodular,dughmi2009submodular,Arieli2016}).

	With the aforementioned property of $\mathbf{p}^*$, we are ready to lower bound the optimal objective of LP \eqref{lp:worst-noise}.  Let $\mathbf{p}^*$  be any optimal solution  that is a chain distribution and $x_i^* = \sum_{T: i \in T} p^*(T)$ be the corresponding marginal probability of $i$ belonging to the random set $T \sim \mathbf{p}^*$. Without loss of generality, assume that the elements are sorted in \emph{descending} order in term of $\{ x_i^* \}_{i \in [n]}$. That is, $x_{i}^* \geq x_{i+1}^*$ for all $i = 1,\cdots, n-1$. For convenience, let $x^*_{0} = 1$ and $x^*_{n+1} = 0$, which will be useful as notations.  
	
	With the above representation and the fact that $\mathbf{p}^*$ is a chain distribution, we know that $\mathbf{p}^*$   must be the following distribution:  $p^*(T_i) = x^*_{i} -x^*_{i+1}  (\geq 0)$, for all $i = 0,1,\cdots,n$,  where $T_i = \{ 1,...,i \}$ and $T_0$ is the empty set $\emptyset$.  We only consider $\epsilon < 1/2$ since otherwise $\Ex_{T \sim \mathbf{p}} f(T) \geq (1- 2 \epsilon)f(S)$ trivially holds. 
	The constraints in LP \eqref{lp:worst-noise} imply that for any $ i \in S$, we must have $x^*_i \geq 1-\epsilon >1/2$ and for any $i \not \in S$ we have $x^*_i \leq \epsilon <1/2$. Since the elements are  sorted in descending order of $x_i^*$,  the elements in $S$ must be exactly the first $|S| = k$ elements in the  sequence $1,\cdots, n$. In other words,  $S = \{ 1,...,k \} = T_{k} $. Moreover, $p^*(T_{k}) = x^*_{k} - x^*_{k+1} \geq 1-\epsilon - \epsilon $ since $x^*_{k} \geq 1-\epsilon $ and $ x^*_{k+1} \leq \epsilon$. As a result, we can lower bound the optimal objective value of LP \eqref{lp:worst-noise}, as follows:
	\begin{eqnarray*}
		\sum_{T \subseteq[n]} p^*(T) f(T) &=&  \sum_{i = 0}^n p^*(T_i) f(T_i) \\
		&=& p^*(T_{k}) f(T_{k}) +   \sum_{i \not = k } p^*(T_i) f(T_i) \\
		& \geq & (1-2\epsilon) f(S).
	\end{eqnarray*}
	Since $S$ and $\epsilon$ are chosen arbitrarily, this shows that any submodular function $f: 2^{[n]} \to \RR_+$  is  $2$-noise-stable.

	The analysis for monotone submodular functions is similar, but utilizes an additional property of minimizing a monotone non-decreasing  function. That is,  when $f$ is monotone, LP \eqref{lp:worst-noise} always admits an optimal solution such that $x^*_i = \sum_{T: i \in T} p^*(T)= 0$  for any $ i \not \in S$ due to monotonicity. Therefore, in the last step of the above analysis, we have $p^*(S) = x^*_k - x^*_{k+1} \geq 1 - \epsilon$ since  $x^*_{k+1} = 0$. This implies  $\sum_{T \subseteq[n]} p^*(T) f(T) \geq (1-\epsilon)f(S)$ and thus $f$ is $1$-stable. It is straightforward to verify that the bound of $1$-noise-stability is tight for, e.g., $f(S) = |S|$. 
	
\end{proof}

\subsection{Additional Discussions on Noise Stability and Its Connection to Previous Literature}\label{sec:bicriteria:discuss}

Here we provide some additional discussions on  noise stability  and its connection to previous literature. It is not difficult to see that any set function is $n$-noise-stable;  Moreover, supermodular functions are \emph{not} $\beta$-stable for any $\beta < n$ (see Appendix \ref{append2:0} for a proof).  Previously, only a few simple functions of concrete forms, e.g.,  $f(S) = |S|$ and $f(\mathbf{x}) = \maxs_{i \in [n]} x_i$ ($\maxs$ outputs the second largest value among $x_i$'s), are shown to  have small noise stability  \cite{mixture_selection}. To our knowledge, submodular functions are the first non-trivial yet rich class of  functions which are shown to all have small noise stability (under our notion).


Our definition of noise stability \emph{strictly} strengthens a previous definition by Cheraghchi et al. \cite{CKKL12}. The key difference between our definition and that of \cite{CKKL12} lies at the type of noise that is allowed. They consider the following noisy distribution, rephrased in our terminology for easier comparisons.  
	\begin{definition}[Independent-Noise Distribution, adapted from \cite{CKKL12}]\label{def:noisy-dist-JKKL}
		Let $S \subseteq [n] $ be any subset and $\mathbf{p}^I$ be any \emph{independent distribution} over $2^{[n]}$ (i.e., $S' \sim \mathbf{p}^I$ includes each $i \in [n]$ independently with probability $p^I_i$).  We say distribution $\mathbf{p}$ over  $2^{[n]}$ is  a $(\rho, \mathbf{p}^I)$\emph{-independent-noise distribution around} $S$  if its set $T \sim \mathbf{p}$ is generated via the following process: for any $i \in [n]$, with probability $\rho$  include $i$ to $T$ if and only if $ i \in S$, and with probability $1 - \rho$ include $i$ to $T$ independently with probability $\mathbf{p}^I_i$. 
		\end{definition} 	
		
That is, an independent-noise distribution around $S$ simply adds independent noise to $S$ with probability $\rho$. Observe that any $(\rho, \mathbf{p}^I)$-independent-noise distribution around $S$ is an $\epsilon$-noisy distribution in the sense of our Definition \ref{def:noise-stability} for $\epsilon = (1-\rho)(1  - p_{\min})$ where $p_{\min} = \min \{ p_i^{I}, 1 - p_i^{I} \}_{i \in [n]}$ is the minimum marginal probability of $\mathbf{p}^I$ --- in fact this is true even when $\mathbf{p}^I$ is an \emph{arbitrary} distribution, not necessarily independent. This is because each $i \in [n]$ is perturbed with probability at most $(1 - \rho)(1-p_{\min})$ in any $(\rho, \mathbf{p}^I)$-independent-noise distribution. Therefore, our notion of noise stability is \emph{strictly stronger}.    

		\begin{theorem}[Main Result of \cite{CKKL12}]\label{thm:CKKL}
			For any submodular function $f: 2^{[n]} \to \RR_+$ and any $(\rho, \mathbf{p}^I)$-independent-noise distribution $\mathbf{p}$, we have 
			\begin{equation*}
			\Ex_{T \sim \mathbf{p}} f(T) \geq (2 \rho - 1 + 2 p_{\min} (1-\rho))f(S).
			\end{equation*}
			\end{theorem}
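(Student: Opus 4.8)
The plan is to obtain Theorem~\ref{thm:CKKL} as an immediate corollary of the $2$-noise-stability of submodular functions established in Theorem~\ref{thm:submodular-stable}, via the parameter reduction already indicated in the surrounding discussion. The only real content is a change of variables, so I do not expect a genuine obstacle; the one thing to be careful about is the parameter translation and the degenerate regime where the claimed bound is vacuous.

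First I would verify that every $(\rho,\mathbf{p}^I)$-independent-noise distribution $\mathbf{p}$ around $S$ is an $\epsilon$-noisy distribution around $S$ in the sense of Definition~\ref{def:noisy-dist}, with $\epsilon = (1-\rho)(1-p_{\min})$. This is a one-line marginal computation: for $i\in S$, $\Pr_{T\sim\mathbf{p}}[i\in T] = \rho + (1-\rho)p_i^I \ge \rho + (1-\rho)p_{\min} = 1-\epsilon$, and for $i\notin S$, $\Pr_{T\sim\mathbf{p}}[i\in T] = (1-\rho)p_i^I \le (1-\rho)(1-p_{\min}) = \epsilon$, where both inequalities use $p_{\min} = \min_i\{p_i^I, 1-p_i^I\}$. (Note this step does not even use that $\mathbf{p}^I$ is a product distribution.) Then I would invoke Theorem~\ref{thm:submodular-stable}: since $f$ is submodular it is $2$-stable, hence $\Ex_{T\sim\mathbf{p}} f(T) \ge (1-2\epsilon)f(S)$, and substituting $\epsilon=(1-\rho)(1-p_{\min})$ gives exactly
\begin{equation*}
1 - 2(1-\rho)(1-p_{\min}) = 2\rho - 1 + 2p_{\min}(1-\rho),
\end{equation*}
which is the claimed constant. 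The degenerate regime is $2\rho-1+2p_{\min}(1-\rho)\le 0$, equivalently $\epsilon\ge\tfrac12$, where the statement is vacuous because $f\ge 0$, and this is consistent with the $\epsilon<\tfrac12$ restriction used inside the proof of Theorem~\ref{thm:submodular-stable}.

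If one prefers a self-contained argument that does not route through Theorem~\ref{thm:submodular-stable}, I would replay that theorem's proof for this special case: fix $S$ and, using the bound above, restrict attention to $\epsilon$-noisy distributions with $\epsilon=(1-\rho)(1-p_{\min})<\tfrac12$; encode the minimum of $\Ex_{T}f(T)$ over such distributions as LP~\eqref{lp:worst-noise}; use the Lov\'asz-extension fact that an optimal solution may be taken to be a chain distribution supported on the nested sets $T_i=\{1,\dots,i\}$ after sorting elements by their marginals $x_i^*$; observe that on $S$ we have $x_i^*\ge 1-\epsilon>\tfrac12$ and off $S$ we have $x_i^*\le\epsilon<\tfrac12$, so the sorted order places exactly the elements of $S$ first and $S=T_k$ for $k=|S|$; conclude $p^*(S)=x_k^*-x_{k+1}^*\ge (1-\epsilon)-\epsilon = 1-2\epsilon$, and therefore $\sum_T p^*(T)f(T)\ge (1-2\epsilon)f(S)$ by nonnegativity of $f$. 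This reproduces the same constant. In either approach the substantive work has already been done in Theorem~\ref{thm:submodular-stable}; the remaining steps are the marginal bound and routine arithmetic.
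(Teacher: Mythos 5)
Your proposal is correct and matches the paper's own argument exactly: the paper also derives Theorem~\ref{thm:CKKL} as an immediate corollary of Theorem~\ref{thm:submodular-stable} by observing that any $(\rho,\mathbf{p}^I)$-independent-noise distribution is an $\epsilon$-noisy distribution with $\epsilon=(1-\rho)(1-p_{\min})$ and that $1-2\epsilon = 2\rho-1+2p_{\min}(1-\rho)$. Your marginal computation, the observation that the independence of $\mathbf{p}^I$ is not needed, and the handling of the vacuous regime are all consistent with the paper.
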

Now we can see that Theorem \ref{thm:CKKL} follows from Theorem \ref{thm:submodular-stable} by observing $(2 \rho - 1 + 2 p_{\min} (1-\rho)) = 1 - 2 \epsilon$ for $\epsilon = (1 - \rho)(1-p_{\min})$. Thus, Theorem \ref{thm:submodular-stable} shows that submodular functions are noise-stable in a stronger sense  than shown previously but with exactly the same guarantee. We remark that the proof in  \cite{CKKL12} relies crucially on their independent noise assumption and does not appear applicable to our (stronger) notion of noise stability.   Our proof takes a completely different route and is more involved.  


\section{Public Persuasion under Relaxed Equilibrium Conditions}\label{sec:relax}
In this section, we consider public persuasion under relaxed equilibrium conditions. Classically in persuasion,  persuasiveness constraints require that  any recommended action to each receiver must indeed be his best response and thus it is in each receiver's best interest to  follow a direct scheme's recommendations. Here, we relax this constraint and assume that each receiver would follow the scheme's recommendations so long as the overall expected utility from the scheme is at least his best utility when acting based on his prior belief. This is motivated by applications where the persuasion scheme can be implemented as a software (e.g., recommendation systems like Google Maps, Yelp, etc.). Receivers have the options of either adopting the software (i.e., following the scheme's recommendations) or abandoning it (i.e., behaving according to their prior beliefs). As formally described in Definition \ref{def:cce}, we call such a scheme \emph{cce-persuasive} due to its relation to coarse correlated equilibrium.  

Our main result of this section  shows that the complexity of computing the optimal cce-persuasive public scheme is governed by the complexity of  directly maximizing the sender's objective function minus an arbitrary linear function. As a corollary, it implies that computing the optimal cce-persuasive public scheme is tractable for supermodular or anonymous sender's objective, and NP-hard for submodular objectives.  To describe our result, let $\F$ denote any collection of set functions. As a mild assumption, we assume $\F$ always contain the \emph{trivial function} $f \equiv 0$, i.e., $f$ always equals $0$ for any input.  Let $\I(\F)$ denote the class of all persuasion instances  in which the sender's  objective function $f_{\theta} \in \F$ for all $\theta \in \Theta$ but receiver payoffs can be arbitrary. We prove the following result.  

\begin{theorem}\label{thm: pub-max-equiv} 
	Let $\F$ be any collection of set functions that includes the trivial function $f\equiv 0$.  There is a polynomial time algorithm that computes the optimal cce-persuasive public scheme for any instance in $\I(\F)$ \emph{if and only if} there is a polynomial time algorithm for maximizing $f(S)-\sum_{i \in S} w_i $ subject to  no constraints for any $f \in \F$  (with value oracle access) and any set of weights $w_i \in  \RR$.  
\end{theorem}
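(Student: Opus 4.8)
The plan is to prove both directions of the equivalence via a primal-dual analysis of the exponentially large LP for optimal cce-persuasive public persuasion. The key structural observation is that the cce-persuasiveness constraints in Definition~\ref{def:cce} are ``global'' in the receiver index $i$ (one constraint per receiver, summing over all signals $S$ containing $i$), rather than ``local'' to each signal $S$ as in the exact-persuasiveness LP~\eqref{lp:optPub}. This means that when we dualize, the dual variable associated with receiver $i$'s cce-constraint, call it $w_i$, couples across all signals, and the separation problem for the dual decomposes, state by state, into: for each $\theta$, find $S$ maximizing $f_\theta(S) - \sum_{i \in S} w_i$ (up to reweighting by $\lambda(\theta)$ and the prior-belief terms $u_i(\theta)$, which are constants once $\theta$ is fixed). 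This is exactly the combinatorial problem in the statement.

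For the ``if'' direction: assuming an oracle for $\max_S\big(f(S) - \sum_{i\in S} w_i\big)$ for every $f \in \F$, I would write the dual of the LP maximizing $\sum_\theta \lambda(\theta)\sum_S \pi(\theta,S) f_\theta(S)$ subject to the cce-constraints~\eqref{eq:cce-constraint} and the normalization $\sum_S \pi(\theta,S) = 1$. The dual has one nonnegative variable $w_i$ per receiver and one free variable $z_\theta$ per state; the dual constraint indexed by a signal-state pair $(\theta, S)$ reads $z_\theta \geq \lambda(\theta) f_\theta(S) + \sum_{i \in S}\lambda(\theta) u_i(\theta) w_i$ (signs and the $\max\{\cdot,0\}$ term on the RHS of the cce-constraint handled carefully; the trivial function $f\equiv 0 \in \F$ is used to ensure feasibility/boundedness as in~\cite{Dughmi2017algorithmic}). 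Checking dual feasibility for a fixed $\theta$ amounts to maximizing $\lambda(\theta) f_\theta(S) + \sum_{i\in S}\lambda(\theta)u_i(\theta)w_i$ over $S$ — a single call to the oracle with weights $w'_i = -\lambda(\theta) u_i(\theta) w_i$. So the dual admits a polynomial-time separation oracle, hence the dual (and by LP duality the primal optimal value) can be computed in polynomial time via the ellipsoid method; standard rounding of the ellipsoid-generated constraints into a polynomial-size LP then recovers an optimal primal scheme supported on polynomially many signals. The one wrinkle is that the RHS of~\eqref{eq:cce-constraint} contains $\max\{\sum_\theta \lambda_\theta u_i(\theta), 0\}$, which is a fixed constant per receiver, so it just contributes a constant to the dual objective and causes no difficulty.

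For the ``only if'' direction: given an instance $f \in \F$ and weights $w_i \in \RR$, I want to construct a persuasion instance in $\I(\F)$ whose optimal cce-persuasive public scheme reveals the value of $\max_S\big(f(S) - \sum_{i\in S} w_i\big)$. The natural construction uses two states of nature and tunes the receiver payoff vectors $u_i(\theta)$ so that the cce-constraint for receiver $i$ effectively ``charges'' $w_i$ for including $i$ in a signal: make one state $\theta_1$ where $f_{\theta_1} = f$ and another ``dummy'' state $\theta_2$ where $f_{\theta_2} \equiv 0$ (this is why we need the trivial function in $\F$), with prior probabilities and payoffs chosen so that the only freedom the sender has is which set $S$ to recommend in state $\theta_1$, and the cce-constraint binds in a way that subtracts exactly $\sum_{i \in S} w_i$ from the objective. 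Positive versus negative $w_i$ is handled by routing the ``cost'' through whichever state makes the sign work out, using that $u_i(\theta)$ may be positive or negative. Then the LP's optimal value is an affine function of $\max_S\big(f(S) - \sum_{i \in S} w_i\big)$, so a polynomial-time algorithm for the former yields one for the latter.

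\textbf{Main obstacle.} The hard part will be the careful bookkeeping in the reduction (``only if'' direction): designing the auxiliary payoffs so that the cce-constraints, which are inequalities with a $\max\{\cdot, 0\}$ on the right, translate \emph{exactly} into the penalty $\sum_{i \in S} w_i$ for arbitrary real (possibly negative) weights, while keeping the instance genuinely in $\I(\F)$ and the reduction polynomial-time. On the ``if'' side, the subtlety is ensuring the primal-dual/ellipsoid argument goes through despite the objective being a general (possibly non-monotone) set function — this is precisely where the claimed generalization over~\cite{Dughmi2017algorithmic} lies, and it requires verifying that dropping monotonicity does not break the boundedness of the dual (again leaning on $f \equiv 0 \in \F$ and on $u_i(\theta)$ being bounded).
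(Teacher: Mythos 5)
Your ``if'' direction is essentially the paper's Lemma~\ref{lem:comb-to-persuasion}: dualize the cce-persuasion LP, obtain one free variable per state and one nonnegative variable per receiver, and observe that the separation problem for the exponentially many dual constraints is, for each fixed $\theta$, an instance of $\max_S \big[f_\theta(S)-\sum_{i\in S}w_i\big]$ with a reweighted $w$. That part is fine, including the role of $f\equiv 0$ for boundedness.

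The ``only if'' direction has a genuine gap. You propose to build a \emph{single} two-state persuasion instance from $(f,\bar w)$ so that ``the LP's optimal value is an affine function of $\max_S\big(f(S)-\sum_{i\in S}\bar w_i\big)$.'' That equality does not hold. The cce-persuasion LP optimizes $\sum_S p(S)f(S)$ over \emph{distributions} $p$ subject to marginal constraints of the form $\sum_{S\ni i}p(S)\gtrless \beta_i$; its optimal value is therefore a concave-closure-type quantity, not a max over a single deterministic $S$. Moreover, the cce-constraint enters the LP as a \emph{constraint}, not as a penalty $-\sum_{i\in S}\bar w_i$ subtracted in the objective; turning it into a penalty is a Lagrangian-dualization step, after which one minimizes over the multipliers rather than fixing them to $\bar w$. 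So the LP value you obtain encodes $\min_w\max_S[\cdot]$, not $\max_S[f(S)-\sum_{i\in S}\bar w_i]$ for your given $\bar w$. What actually closes the loop in the paper is a chain of reductions: Lemma~\ref{lem:WM_to_dual} reduces maximizing $\bar f$ to a separation oracle for the feasible region of the auxiliary LP~\eqref{lp:WM_Dual_0}, which by the Gr\"otschel--Lov\'asz--Schrijver equivalence reduces to \emph{solving} LP~\eqref{lp:WM_Dual_0} for all objective coefficients $(\beta,\alpha)$; Lemma~\ref{lem:WM_dual_simplification} handles $\alpha\le 0$ directly; and only for $\alpha>0$ does Lemma~\ref{lem:VM_dual} pass to the dual LP~\eqref{lp:WM_Primal} and realize \emph{that} LP as a cce-persuasion instance (two states, $f_{\theta_0}\equiv 0$, $f_{\theta_1}=f$, with payoffs set from the normalized $\beta_i\in[0,1]$ according to the sign of $\bar w_i$). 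This means the persuasion solver is invoked as a subroutine inside an ellipsoid-style loop, not once, and the intermediate LP family is essential to get $\max_S$ rather than a concave closure. Your plan would need to be rebuilt along these lines; as stated, the direct single-instance reduction cannot produce $\max_S \bar f(S)$.
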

\noindent {\bf Proof Sketch. } We provide a sketch here and defer the full proof to Appendix \ref{sec:relax:app}. The proof is divided into two parts.   First, we show that computing the optimal cce-persuasive public scheme reduces in polynomial time to the combinatorial problem of maximizing set function $\bar{f}(S) = f(S) - \sum_{i \in S} w_i$ for any $f \in \F$ and weights $\{ w_i \}_{i=1}^n$, subject to no constraints. This  can be shown by formulating the persuasion problem as an exponentially large linear program and then examining its dual program. It turns out that the set function optimization problem can be used to construct a separation oracle for the dual.

More involved is the reverse direction. That is, given any efficient algorithm for solving any persuasion instance from $\I(\F)$, we need to design an efficient algorithm for the set function maximization problem.  Given any $f \in \F $ and a set of weights $\{ \bar{w}_i \}_{ i \in [n]}$, we look for an efficient algorithm to maximize $\bar{f}(S) = f(S) - \sum_{i \in S} \bar{w}_i$ over $S \subseteq [n]$.  Let $S_+ = \{ i: \bar{w}_i \geq 0 \}$ and $ S_- = \{ i: \bar{w}_i < 0 \}$ denote the set of indexes of non-negative and negative $\bar{w}_i$'s, respectively.  The first key step is to reduce the maximization of $\bar{f}(S)$ to solving the following carefully constructed linear program with  variables $\{ w_i\}_{ i \in [n]}$ and $v$. Our reduction utilizes   the equivalence between separation and linear optimization \cite{GLSbook}, and then binary search for the smallest feasible $v$ which must equal $\max_{S} [f(S) - \sum_{i \in S} w_i]$. 

\begin{lp}\label{lp:WM_Dual_0}
	\mini{\sum_{ i \in  [n]} \beta_i \cdot w_i  + \alpha \cdot v} 
	\st
	\qcon{ v + \sum_{i \in S} w_i  \geq f(S) }{ S \subseteq [n]}  
	\qcon{w_i \geq 0}{i \in S_+}
	\qcon{w_i \leq 0 }{i \in S_-} 
\end{lp}

Then we want to show that solving the persuasion problem can help us to solve the \emph{dual program} of LP \eqref{lp:WM_Dual_0} (note: LP \eqref{lp:WM_Dual_0}  itself is a minimization problem, which is not appropriate). Unfortunately, this is \emph{not} true when linear coefficient $\alpha \leq 0$. However, we can prove that LP \eqref{lp:WM_Dual_0} can be directly (and efficiently) solved when $\alpha \leq 0$ by analyzing the structure of the program. Therefore, we only need to worry about the case with $\alpha > 0$, for which we manged to construct a persuasion instance such that its optimal cce-persuasive public scheme can be used to derive an optimal solution to the dual of LP \eqref{lp:WM_Dual_0}. 

 We remark that  our proof of Theorem \ref{thm: pub-max-equiv} builds on and generalizes a duality-based analysis employed in \cite{Dughmi2017algorithmic},  which proved a similar equivalence theorem but for optimal private persuasion. However, the analysis in \cite{Dughmi2017algorithmic}  relies crucially on the assumption that the sender's objective functions are \emph{monotone non-decreasing}. Our result  gets rid of the monotonicity assumption completely and thus applies to a broader class of objective functions. This, however, also necessitates a more involved proof. Our argument relies on a more carefully constrained linear program  as well as the  analysis of its properties  in order to avoid the dependence on the objective function's monotonicity. In the full proof, we highlight the key differences from the proof of \cite{Dughmi2017algorithmic}.

\subsection*{Acknowledgement} The author would like to thank anonymous reviewers for helpful comments.

\bibliographystyle{ACM-Reference-Format}
\bibliography{refer}
\newpage 
\appendix
\section{Omissions from Section \ref{sec:constant} }

\subsection{Proof of Theorem \ref{thm:const-poly}}\label{sec:const:app}

	Recall that in (direct) public schemes, each signal can be viewed as a subset  $S \subseteq [n]$ of receivers who are incentivized to take action $1$ while the remaining receivers are incentivized to take action $0$. For convenience, we sometimes also denote a subset $S$ as a binary vector $\mathbf{s} \in \{ 0, 1\}^n$, satisfying $s_i = 1$ if and only if $i \in S$.  The main difficulty in designing optimal public scheme is that there are possibly $2^n$ different public signals and  any efficient algorithm simply cannot search over all these exponentially many signals. Our key insight is that for a  persuasion instance with small $|\Theta| = d$, most of these public signals actually will never arise in \emph{any} public signaling scheme, assuming non-degenerate receiver payoffs. In fact, for any non-degenerate persuasion instance,  there are only $O(n^d)$ signals that can possibly arise in public schemes and, moreover, these public signals can be efficiently identified in $\poly(n^d)$ time. This turns out to be a consequence of  the problem of dividing $\RR^d$ by hyperplanes. 
	
	Specifically, each public signal induces a posterior distribution $p \in \Delta_d$ over the states in  $\Theta$.  We use $p_{\theta}$ to denote the probability of $\theta \in \Theta$. Note that, any posterior distribution that induces receiver $i$ to take action $1$ [resp. $0$] must satisfy $\langle u_i, p \rangle = \sum_{\theta \in \Theta} u_i(\theta) p_{\theta}  \geq 0$ [resp.  $\sum_{\theta \in \Theta} u_i(\theta) p_{\theta}  \leq 0$]. In other words, the hyperplane $\sum_{\theta \in \Theta} u_i(\theta)p_{\theta}  =  0$ cut the simplex $\Delta_d$ of posterior distributions into at most two cells, each corresponding to receiver $i$'s best response of action $1$ and $0$, respectively.  With $n$ receivers,  the simplex will be cut by $n$ hyperplanes and each cell --- the intersection of $n$ half-spaces restricted to the simplex $\Delta_d$--- is uniquely characterized by a $n$-dimensional \emph{binary vector} $\mathbf{s} \in \{ 0,1 \}^n$, in which the $i$'th entry $s_i$ indicates receiver $i$'s best response action (action $1$ for ``$\geq$'' and action $0$ for ``$\leq$'').   We  call vector $\mathbf{s}$ the \emph{label} of that cell. Since  $\mathbf{s}$ also specifies each receiver's best response. We will also refer to it as ``response vector'' and will use the term ``label'' and ``response vector'' interchangeably.  Note that cells may intersect at some boundary, which will correspond to multiple labels due to ties.

	
%
	
	Each cell has a unique label, however the reverse may not be true due to a subtle issue. That is,  there may exist response vectors that can arise but do not correspond to any of the generated cells.  Example \ref{ex:degeneracy} provides such an instance --- there are only $2$ cells but $2^n$ possible public signals. It turns out that this difficulty is due to the degeneracy of receiver payoffs. Assuming non-degenerate receiver payoffs, we can characterize all the response vectors that can  possibly arise and upper bound the total number of them, as described in the following two lemmas. For convenience, in Lemma \ref{lem:correspondence_app} and \ref{lem:cut-space_app}, we will describe the characterizations by \emph{relaxing the domain of $p$
		from $\Delta_{\Theta}$ to $\RR^{\Theta}$}. At the end of the proof, we show how to adapt the characterization for  $ p \in \Delta_{\Theta}$. 
	
	\begin{lemma}[Restating Lemma \ref{lem:correspondence}]\label{lem:correspondence_app}
		Suppose receiver payoffs are non-degenerate and $p $ takes values in $\RR^{\Theta}$.  Then all response vectors that can possibly arise  are precisely all the labels of cells generated by the $n$ hyperplanes in $\RR^{\Theta}$. 
	\end{lemma}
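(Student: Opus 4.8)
\textbf{Proof proposal for Lemma \ref{lem:correspondence_app}.}
The plan is to prove the two directions of the claimed correspondence separately: (i) every label of a generated cell is a response vector that can arise, and (ii) every response vector that can arise is the label of some generated cell. Direction (i) is the easy one: given a cell $C$ with label $\mathbf{s}$, pick any interior point $p \in C$ (recall we only count non-degenerate cells, which have nonempty interior). By definition of the label, $\langle u_i, p\rangle > 0$ for $i$ with $s_i = 1$ and $\langle u_i, p\rangle < 0$ for $i$ with $s_i = 0$; hence the posterior $p$ makes action $1$ the \emph{strict} best response exactly for the receivers in $\{i : s_i = 1\}$, so $\mathbf{s}$ arises as a public signal whenever some scheme induces the posterior $p$. (One still has to note that any $p$ in the relative interior of $\Delta_\Theta$ is inducible as a posterior of some signal under a scheme that splits the prior appropriately; since we have relaxed $p$ to range over $\RR^\Theta$ here, the final paragraph of the full proof handles the passage back to $\Delta_\Theta$, and for the lemma as stated we simply work with the sign pattern.)

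The substance is direction (ii). Suppose $\mathbf{s}$ is a response vector that arises: there is a posterior $p^0$ such that for every $i$, $s_i$ is a best response at $p^0$, i.e. $\langle u_i, p^0\rangle \ge 0$ when $s_i = 1$ and $\langle u_i, p^0\rangle \le 0$ when $s_i = 0$ (ties allowed, broken toward the sender). If all these inequalities are strict, $p^0$ lies in the interior of a cell whose label is exactly $\mathbf{s}$, and we are done. The problem is the boundary case: some receivers are \emph{indifferent} at $p^0$, i.e. $\langle u_i, p^0\rangle = 0$ for $i$ in some index set $I$. The idea is to perturb $p^0$ slightly to $p^0 + \delta w$ so as to push every indifferent receiver to the side dictated by $s_i$ while not flipping any receiver who was strict. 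The strict receivers are stable under a small enough perturbation, so the only requirement on the direction $w$ is: $\langle u_i, w\rangle > 0$ for $i \in I$ with $s_i = 1$, and $\langle u_i, w\rangle < 0$ for $i \in I$ with $s_i = 0$. This is a system of $|I|$ strict linear inequalities in $w \in \RR^\Theta$; it is solvable precisely when the only nonnegative combination of the vectors $\{\sigma_i u_i\}_{i \in I}$ (with $\sigma_i = +1$ or $-1$ according to $s_i$) summing to zero is the trivial one — equivalently, by a Gordan/Stiemke-type alternative, when no such "certificate of infeasibility" exists.

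Here is where \textbf{non-degeneracy enters, and this is the main obstacle to get right}. If $|I| \le d = |\Theta|$, then by the non-degeneracy hypothesis the vectors $\{u_i\}_{i \in I}$ are linearly independent, so the system $\langle u_i, w\rangle = c_i$ has a solution for \emph{any} right-hand side $\{c_i\}$; choosing $c_i = \sigma_i$ gives the desired $w$, and we are done. The worry is therefore $|I| > d$: can more than $d$ receivers be simultaneously indifferent at a single posterior $p^0$? I claim non-degeneracy rules this out for the posteriors that actually arise. Indeed, if $i_1, \dots, i_{d+1}$ were all indifferent at the same $p^0 \ne 0$, then $p^0$ would be a nonzero vector orthogonal to all of $u_{i_1}, \dots, u_{i_{d+1}}$; but any $d$ of these, say $u_{i_1}, \dots, u_{i_d}$, are linearly independent by non-degeneracy and hence span a hyperplane whose orthogonal complement is one-dimensional, and $u_{i_{d+1}}$ must then lie in that span (otherwise $u_{i_1},\dots,u_{i_{d+1}}$ together with the construction would force $p^0 = 0$) — wait, more carefully: the orthogonal complement of $\mathrm{span}\{u_{i_1},\dots,u_{i_d}\}$ is $\{0\}$ when those $d$ vectors span all of $\RR^\Theta$, which they do, being $d$ linearly independent vectors in a $d$-dimensional space. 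Hence $p^0 = 0$, contradiction. So at most $d$ receivers are indifferent at any nonzero posterior, we are always in the case $|I| \le d$, and the perturbation argument goes through; the perturbed point $p^0 + \delta w$ lies in the interior of a cell with label $\mathbf{s}$, completing direction (ii) and the lemma. The remaining care is purely bookkeeping: choosing $\delta$ small enough (finitely many strict inequalities), and verifying that the perturbation does not accidentally create a new indifference or flip, which is immediate since each strict receiver has $|\langle u_i, p^0\rangle|$ bounded away from zero.
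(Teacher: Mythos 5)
The proposal is correct, and it takes a genuinely different route from the paper's proof. The paper stratifies $\RR^d$ by the linear subspaces cut out by intersections of the hyperplanes $\{\langle u_i, p\rangle = 0\}$: it shows that under non-degeneracy a codimension-$k$ intersection subspace $V$ is contained in exactly $k$ of the $n$ hyperplanes, and then argues via a small-ball neighborhood around any $p$ in the relative interior of $V$ that the $2^k$ response vectors inducible at $p$ (via tie-breaking) are precisely the labels of the $2^k$ full-dimensional cells whose boundary contains $V$. Your proof instead fixes a response vector $\mathbf{s}$ arising at some $p^0$ and perturbs $p^0$ directly: you first show (using non-degeneracy) that at most $d$ receivers can be simultaneously indifferent at $p^0 \neq 0$, so the payoff vectors $\{u_i\}_{i\in I}$ of the indifferent receivers are linearly independent, hence the linear system $\langle u_i, w\rangle = \sigma_i$, $i\in I$, is solvable, and $p^0 + \delta w$ lands in the interior of a cell labeled $\mathbf{s}$ for $\delta$ small. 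Both arguments hinge on non-degeneracy to bound the number of coincident hyperplanes, but the mechanism differs: the paper compares the dimension of an orthogonal complement against the number of hyperplanes through $V$, whereas you solve a small square linear system --- which is cleaner and skips the stratification bookkeeping. The paper's version does produce a slightly richer structural picture (which cells share a face with a given $V$), which is convenient for the companion Lemma~\ref{lem:cut-space} enumerating the cells. Two small remarks: your argument in fact shows at most $d-1$ simultaneous indifferences at nonzero $p^0$ (any $d$ indifferent receivers would already force $p^0 = 0$), though the weaker bound $|I|\le d$ is all you use; and both proofs share the implicit hypothesis $p^0\ne 0$, which is justified because genuine posteriors lie in $\Delta_\Theta$ and the relaxation to $\RR^\Theta$ is only an analytical convenience reconciled at the end.
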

	\begin{proof}
			Let $\bar{\Sigma} \subseteq \{ 0,1 \}^n$ denote the set of all response vectors that can possibly arise, and $\mathcal{S}\subseteq \{ 0,1 \}^n$ denote the set of labels of all the cells generated by cutting $\RR^{\Theta}$ with $n$ hyperplanes: $ \sum_{\theta \in \Theta} u_i(\theta)p_{\theta}  =  0$, denoted as $l_i$,  for $  i = 1, \cdots, n$. We will prove $\bar{\Sigma} = \mathcal{S}$.

			Note that by definition we have $\mathcal{S} \subseteq \bar{\Sigma}$ since any $\mathbf{s} \in \S$ is the label of a cell, i.e., a possible response vector. We prove the reverse $\bar{\Sigma} \subseteq  \mathcal{S} $. In particular, we show that for any $p \in \RR^{\Theta}$, any possible response vector it induces is contained in $\mathcal{S}$. This follows a discussion about which subspace $p$ sits in. If $p \in \RR^{\Theta}$ is in the \emph{interior} of some cell, then the only response vector it can induce is the label of that region. 
			
			Let $|\Theta| = d$. More generally, if $p $ is in the \emph{interior} of some $(d-k)$-dimensional subspace $V$ generated as the intersection of our hyperplanes, we claim that $V$ must be the intersection of exactly $k$ hyperplanes and thus $p$ will only be on these $k$ hyperplanes. Since $V$ is a $(d-k)$-dimensional subspace, it is the intersection of at least $k$ hyperplanes. Without loss of generality, let $l_1,...,l_m$ be these hyperplanes for some $m \geq k$. We show that $m$ must equal $k$. By definition, there exists ``origin'' $x_0 \in \RR^{\Theta}$ and linearly independent directions $x_1,\cdots, x_{d-k} \in \RR^{\Theta}$ such that any $ p \in V$ can be expressed as $x_0 + \sum_{j=1}^{d-k} c_j \cdot x_{j}$ for some coefficients $\{ c_j\}_{j=1}^{d-k}$.  Since $V$ is the intersection of hyperplanes $l_1,\cdots, l_m$, we know that $\langle u_i , x_j \rangle = 0$ for any $i = 1, \cdots, m$ and $j = 1, \cdots, d-k$. Since $x_j$'s are linearly independent, this means the space spanned by vectors $u_1,\cdots, u_m$ have dimension at most $d - (d-k) = k$. By our assumption of non-degenerate receiver payoffs, we know that  $u_1,\cdots, u_m$  are linearly independent and thus the space they span has dimension at least $m$. As a result, we have $m \leq k$, implying $m = k$. 
			
			Therefore,  any $p $ in the \emph{interior} of some $(d-k)$-dimensional subspace  will be on exactly $k$ hyperplanes, which we denote as $l_1,\cdots, l_k$ for referral convenience. As a result, the number of response vectors it can induce is $2^k$ due to ties on the $k$ receivers corresponding to these $k$ hyperplanes. Note that $p$ is not on hyperplane $l_{k+1}, \cdots, l_n$, thus there exists a small ball in $\RR^{\Theta}$ centered at $p$ which does not intersect $l_{k+1}, \cdots, l_n$ neither. However, this ball will intersect half spaces $ \sum_{\theta \in \Theta} u_i(\theta)p_{\theta}  > 0$ and $ \sum_{\theta \in \Theta} u_i(\theta)p_{\theta}  < 0$  for $i = 1, \cdots, k$, and thus intersect with exactly $2^k$  cells. The labels of these cells are precisely the $2^k$ response vectors that $p$ can induce. 
			
			By varying $k = 0, \cdots, d$, we know that the response vectors induced by any point $p \in \RR^{\Theta}$  must be contained in $\S$. We thus have $\bar{\Sigma} \subseteq \S$, concluding the proof.  
	\end{proof}

	Next, we prove that $\RR^{\Theta}$ will be divided into $\O(n^d)$ cells by $n$ hyperplanes and moreover, each of these cells can be explicitly identified as the intersection of $n$ half-spaces. 
	\begin{lemma}\label{lem:cut-space_app}[Restating Lemma \ref{lem:cut-space}]
		Any $n$ hyperplanes divide $\RR^d$ into $\O(n^d)$ cells. Moreover, all these cells can be identified (represented as intersections of $n$ half-spaces) in $\poly(n^d)$ time. 
	\end{lemma}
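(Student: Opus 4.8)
\textbf{Proof proposal for Lemma~\ref{lem:cut-space_app}.} The plan is to prove the counting bound by the standard induction on the number of hyperplanes, and to extract the algorithm by making that induction constructive. Let $C(n,d)$ denote the maximum number of full-dimensional cells that $n$ hyperplanes can carve out of $\RR^d$. First I would establish the recurrence $C(n,d) \le C(n-1,d) + C(n-1,d-1)$: when we add the $n$-th hyperplane $H_n$ to an arrangement of $n-1$ hyperplanes, each newly created cell arises by splitting exactly one old cell into two, and an old cell is split precisely when $H_n$ passes through its interior; the number of such split cells equals the number of full-dimensional regions that the induced arrangement of $n-1$ hyperplanes (the traces of $H_1,\dots,H_{n-1}$ on $H_n \cong \RR^{d-1}$) carves out of $H_n$, which is at most $C(n-1,d-1)$. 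With base cases $C(n,0)=1$ and $C(0,d)=1$, unrolling the recurrence gives $C(n,d) \le \sum_{k=0}^{d} \binom{n}{k} = \O(n^d)$ for constant $d$.

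For the algorithmic part, I would mirror the same incremental structure. Maintain a list $\mathcal{R}_j$ of cells of the arrangement of the first $j$ hyperplanes, each cell stored explicitly as a consistent sign vector $\mathbf{s} \in \{0,1\}^j$ (equivalently, as the intersection of $j$ half-spaces). To pass from $\mathcal{R}_{j-1}$ to $\mathcal{R}_j$: for each cell $R \in \mathcal{R}_{j-1}$, decide via a linear-programming feasibility test whether the open half-space $\langle u_j, p\rangle > 0$ meets $R$, and likewise whether $\langle u_j, p\rangle < 0$ meets $R$ (both tests are LPs of size $\poly(j,d)$, solvable in polynomial time); if $H_j$ cuts through the interior of $R$, replace $R$ by its two children (append a $1$ and a $0$ respectively to its sign vector), otherwise keep $R$ with the forced sign appended. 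The number of LP calls in round $j$ is $O(|\mathcal{R}_{j-1}|) = \O(j^d)$, and summing over $j = 1,\dots,n$ gives a total of $\poly(n^d)$ calls, hence $\poly(n^d)$ running time; the size of the output list never exceeds $C(n,d) = \O(n^d)$ by the counting bound just proved, so the procedure stays within the claimed budget throughout.

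I would then note the minor bookkeeping point that, since we only care about cells with nonzero volume, the LP feasibility tests should be for the \emph{open} half-spaces (strict inequalities), which is handled in the usual way by testing feasibility of $\langle u_j,p\rangle \ge \delta$ for a symbolic/sufficiently small $\delta>0$, or by a standard interior-point/relative-interior check; degenerate lower-dimensional faces are simply never added to $\mathcal{R}_j$. This is also where the relaxation of the domain from $\Delta_\Theta$ to $\RR^\Theta$ matters only superficially — the identical argument runs with the extra half-spaces $p_\theta \ge 0$ and the hyperplane $\sum_\theta p_\theta = 1$ included, changing $C(n,d)$ only by an $O(1)$ factor in the exponent's constant.

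The main obstacle I anticipate is not the counting bound, which is textbook, but making sure the constructive step genuinely produces \emph{every} full-dimensional cell and no spurious ones: one must verify that a sign vector $\mathbf{s}$ survives the incremental process if and only if $\bigcap_{i : s_i = 1}\{\langle u_i,p\rangle > 0\} \cap \bigcap_{i : s_i = 0}\{\langle u_i,p\rangle < 0\}$ is nonempty and full-dimensional. This follows because a child sign vector is appended only when the corresponding open region is LP-feasible, and full-dimensionality is preserved since intersecting a full-dimensional open polyhedron with an open half-space that meets it remains full-dimensional; conversely every full-dimensional cell of the final arrangement has a well-defined strict sign against each $H_i$, so its sign vector is produced. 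Getting this equivalence airtight — together with the interaction with the non-degeneracy hypothesis used in Lemma~\ref{lem:correspondence_app} — is the one place that needs care.
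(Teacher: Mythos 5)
Your proposal matches the paper's proof essentially exactly: both establish the $\sum_{k\le d}\binom{n}{k}=\O(n^d)$ bound via the standard hyperplane-arrangement induction, and both extract all cells by the same incremental labeling algorithm that, for each existing cell and each newly added hyperplane, runs an LP feasibility test on both sides and appends a $0$, a $1$, or both to the sign vector. Your extra care about testing the \emph{open} half-spaces (so that degenerate, lower-dimensional faces are never added) is a small refinement of the paper's check, which as written uses closed inequalities and relies on a looser "feasibility can be checked efficiently" remark; the overall approach and complexity analysis are identical.
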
  
	\begin{proof}
	
 The study about how many cells the space of $\RR^d$ will be divided into by $n$ hyperplanes dates back to 1820s, first studied by J. Steiner  \cite{steiner1826einige}. Since then, this question and its many variants  attracted much attention in the mathematical literature (see, e.g., \cite{buck1943partition,alexanderson1981arrangements} and references therein).   
	The following fact  can be easily proved by establishing an induction relation.\footnote{An interesting side note: this problem was also portrayed in the film \emph{Let Us Teach Guessing} about the famous mathematician  George P{\'o}lya to illustrate induction.} 
	
	
	\begin{fact}\label{fact:NumRegions}
		$n$ hyperplanes divide the space of $\RR^d$ into  at most  $\sum_{i=0}^d \binom{n}{i} = \O (n^d)$ cells.
	\end{fact}
	
	To prove Theorem \ref{thm:const-poly}, we  need to know not only how many cells  there  are but also what are these cells, i.e., which linear inequalities generate each cell (i.e., a polyhedron).  
	Here, we describe an algorithm to compute, for each cell, the linear inequalities that generate the cell. Recall that each cell can be uniquely determined by a binary vector, which we call its \emph{label}.  
	
%
	Details are in Algorithm \ref{alg:labeling}. At a high level, our algorithm starts with the whole space $\RR^d$ and gradually adds each hyperplane. For each cell, we check whether the added hyperplane cuts the cell into two cells in which case  we create labels for newly generated cells (Step \ref{step:1} to \ref{step:2} )  or the cell is strictly on one side of the hyperplane in which case we augment the label of that cell by taking into account the newly added hyperplane (Step \ref{step:3} to \ref{step:4}). The correctness of this algorithm follows from its definition and the fact that feasibility of a cell determined by linear inequalities can be checked efficiently. 
	
	\end{proof}

	\begin{algorithm}[t]
		\SetAlgoNoLine
		\KwIn{$n$ hyperplanes in $\RR^d$: $\mathbf{a}_i \cdot \mathbf{x} = b_i$ for $i = 1,\cdots, n$}
		\KwOut{A list of the labels of all the cells generated by these hyperplanes.}
		\vspace{2mm}
		Initialization: $\texttt{LabelSet} = \{  [\, ] \}$, meaning starting with the only cell $\RR^d$ of label length $0$ \; 
		\For{$ i = 1, \cdots, n$}{
			\For{each cell $\mathbf{s} \in $ \texttt{LabelSet}}{
				\tcc{Extract linear inequalities, stored in \texttt{IneqSet},  that determine cell $\mathbf{s}$}
				\texttt{IneqSet}$=\{ \}$\;
				For $j \in [i-1]$, add inequality $\mathbf{a}_j \cdot \mathbf{x} \geq b_j$ to \texttt{IneqSet} if $s_j = 1$; otherwise, add $\mathbf{a}_j \cdot \mathbf{x} \leq b_j$ to \texttt{IneqSet}\;
				\tcc{Compute the new cells generated by cutting $\mathbf{s}$ with hyperplane $\mathbf{a}_i \cdot \mathbf{x} = b_i$}  
				\uIf{\label{step:1} both \texttt{IneqSet}$\cup \{ \mathbf{a}_i \cdot \mathbf{x} \geq b_i\}$ and \texttt{IneqSet}$\cup \{ \mathbf{a}_i \cdot \mathbf{x} \leq b_i\}$ are feasible}{
					Remove $\mathbf{s}$ from \texttt{LabelSet} \;
					Add  $[\mathbf{s};1]$ and $[\mathbf{s};0]$ to \texttt{LabelSet};  \label{step:2} 
				}
				\uElseIf{\label{step:3} only \texttt{IneqSet}$\cup \{ \mathbf{a}_i \cdot \mathbf{x} \geq b_i\}$ is feasible}{
					Substitute $\mathbf{s}$ in \texttt{LabelSet} by $[\mathbf{s};1]$;
				}
				\Else{ Substitute $\mathbf{s}$ in \texttt{LabelSet} by $[\mathbf{s};0]$; \label{step:4} } 
			}
		} 
		Return \texttt{LabelSet}. 
		\caption{Labeling Algorithm for Region Identification}
		\label{alg:labeling}
	\end{algorithm}
	Lemma \ref{lem:correspondence} and \ref{lem:cut-space} together characterize all the vectors of follower best responses, i.e., all the public signals,  that can possibly arise. The only discrepancy here is that we relaxed the domain of $p$ to $\RR^{\Theta}$. If we restrict $p \in \Delta_{\Theta}$, some of these public signals will be further eliminated. This can be done efficiently by examining these public signals one by one. We end up with all the possible public signals that can arise for $p \in \Delta_{\Theta}$, and there are $\O (n^d)$  of them.  
	
	Now let $\bar{\Sigma}$ denote the set of all these public signals that can possibly arise in the persuasion instance. We can then compute the optimal public signaling scheme simply by solving LP \eqref{lp:optPub} but restrict it to signal space $\bar{\Sigma}$ instead of the whole space $\Sigma = 2^{[n]}$.  This LP can be solved in $\poly(n^d)$ time.

\subsection{Additional Applications of the  Technique}\label{sec:constant:applictions}
In this subsection, we show another application of our technique to the problem of \emph{signaling in second price auctions} to maximize the auctioneer's revenue. This problem has been studied in several previous works, partially driven by its wide application in online advertising. The general model, first studied by \cite{Emek12}, considers an auctioneer (sender) facing $n$ bidders (receivers). The problem is generally described by a tuple  $ \langle \lambda, q, \{ V_{t} \}_{t \in [T]} \rangle $, where $V_t \in \RR^{n \times \Theta}, \forall t \in [T]$ and $V_t(i,\theta)$ is bidder $i$'s value at the state $\theta$. The auctioneer's uncertainty regarding bidder values is casted by valuation type $t \in T$. That is, the auctioneer only knows that value matrix is $V_t$ with probability $q_t$. The public persuasion problem of the auctioneer is to design a public signaling scheme to maximize her expected revenue. 

Emek et al. \cite{Emek12} prove that  this problem is NP-hard even when there are $n = 3$ bidders but admits a polynomial time when the number of value types $|T|$ is a constant. Invoking Lemma \ref{lem:cut-space}, we complete the picture of the fix parameter tractability of  this problem by showing that a polynomial time algorithm exits when $|\Theta|$ is a constant. Note that Cheng et al. \cite{mixture_selection} designs an additive PTAS for this problem, which however remains a PTAS even when $|\Theta|$ is a constant. 
\begin{proposition}\label{prop:auction-poly}
	The optimal public signaling scheme for the above second-price auction model can be computed in $\poly(n, |T|)$ time when $|\Theta| $ is a constant. 
\end{proposition}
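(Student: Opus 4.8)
\textbf{Proof proposal for Proposition~\ref{prop:auction-poly}.}
The plan is to mimic the argument for Theorem~\ref{thm:const-poly}: show that when $|\Theta| = d$ is constant, only polynomially many public signals can ever arise, enumerate them explicitly, and then solve the resulting polynomial-size linear program. Concretely, a public signal again induces a posterior distribution $p \in \Delta_d$ over states of nature. The key observation is that, in a second-price auction, both the identity of the winner and the revenue (the second-highest bid) at a given posterior $p$ are determined entirely by the \emph{ordering} of the expected bids $\langle V_t(i,\cdot), p\rangle$ across bidders $i \in [n]$, for each fixed value type $t \in [T]$. (Since bidders best-respond truthfully in a second-price auction, the relevant quantity is each bidder's expected value $\sum_\theta p_\theta V_t(i,\theta)$; how the auctioneer's revenue depends on the posterior is governed by which of these linear functionals is largest, second-largest, etc.)

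First I would record that for each value type $t$ and each pair of bidders $i,j$, the equation $\sum_\theta p_\theta (V_t(i,\theta) - V_t(j,\theta)) = 0$ defines a hyperplane in $\RR^\Theta \cong \RR^d$. There are at most $|T|\binom{n}{2}$ such hyperplanes. By Lemma~\ref{lem:cut-space}, these hyperplanes partition $\RR^d$ into $\O\big((|T| n^2)^d\big) = \poly(n,|T|)$ cells (for constant $d$), and all these cells, represented as intersections of half-spaces, can be identified in $\poly(n,|T|)$ time via Algorithm~\ref{alg:labeling}. Within the interior of any single cell, the full ordering of expected bids is fixed for every value type $t$, hence so is the auctioneer's conditional expected revenue as a (fixed, easily computable) function of $p$. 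Restricting to $p \in \Delta_d$ (and handling boundary/tie cells, of which there are still only polynomially many) only prunes the list further, exactly as in the proof of Theorem~\ref{thm:const-poly}. Thus the set $\bar\Sigma$ of public signals that can possibly arise has size $\poly(n,|T|)$ and is explicitly computable.

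Second, I would write down the analogue of LP~\eqref{lp:optPub} but with the signal space restricted to $\bar\Sigma$: variables $\pi(\theta,\sigma)$ for $\theta \in \Theta$, $\sigma \in \bar\Sigma$, the constraint $\sum_{\sigma \in \bar\Sigma}\pi(\theta,\sigma) = 1$ for each $\theta$, and an objective that sums, over $\theta$ and $\sigma$, the quantity $\lambda(\theta)\pi(\theta,\sigma)$ times the auctioneer's revenue in state $\theta$ under the signal $\sigma$ (which is well-defined because the winner and price in each state are fixed throughout the cell associated with $\sigma$). The persuasiveness/obedience in a second-price auction is automatic since truthful bidding is a dominant strategy regardless of beliefs, so no incentive constraints are needed beyond those already baked into the cell structure; this LP has $\poly(n,|T|)$ variables and constraints and is solvable in polynomial time. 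Finally one checks, as in Lemma~\ref{lem:correspondence}, that this restricted LP loses nothing: any optimal public scheme can be assumed to use only signals whose posteriors lie in the identified cells, since the auctioneer's revenue as a function of the posterior is piecewise-linear with pieces exactly these cells, so merging signals within a common cell (replacing them by their posterior-average) does not decrease the objective.

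The main obstacle I anticipate is the boundary/degeneracy bookkeeping: unlike in Theorem~\ref{thm:const-poly} there is no clean ``non-degeneracy'' hypothesis assumed here, so ties among bids (a posterior lying on one of the hyperplanes, where two bidders are tied for the top) must be handled carefully — one must argue that including the lower-dimensional tie cells, together with a consistent tie-breaking rule favoring the auctioneer, still yields only polynomially many candidate signals and that no revenue is lost by excluding posteriors that sit on intersections of many hyperplanes. This is the part of the argument that requires care rather than new ideas; the counting and the LP formulation are routine consequences of Lemma~\ref{lem:cut-space}.
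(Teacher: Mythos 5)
Your proposal matches the paper's own proof: both reduce the problem to the arrangement of the $\binom{n}{2}|T|$ pairwise comparison hyperplanes $\{p : \langle V_t(i,\cdot)-V_t(j,\cdot),\,p\rangle=0\}$, invoke Lemma~\ref{lem:cut-space}/Algorithm~\ref{alg:labeling} to enumerate the $\O((n^2|T|)^{|\Theta|})$ cells (each fixing, for every $t$, a full ranking of bidders' posterior expected values and hence the winner and price), and solve a polynomial-size LP over these cells as the signal space with linear consistency constraints forcing each signal's posterior to lie in its cell. You are also correct that bidder obedience is vacuous here since truthful bidding is dominant, so the only LP constraints beyond normalization are the cell-membership inequalities. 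The one place your write-up is slightly more explicit than the paper is the ``merge signals within a cell'' justification that the restricted LP loses nothing (the paper simply asserts the correspondence of cells to outcomes), and you correctly flag the boundary/tie-breaking bookkeeping as the only delicate point; that care is warranted, though it costs nothing since any posterior on a boundary can be assigned to the adjacent cell maximizing revenue, and the LP implicitly performs this optimization.
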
 

\begin{proof}
Similar to the proof of Theorem \ref{thm:const-poly}, our idea is also to argue that the number of ``outcomes'' that can possibly arise is polynomial in $n, |T|$. The key to our proof is a properly chosen definition of ``outcome'' in this setting.  Recall that in second-price auctions, the auctioneer will rank the bidders' values and any ranking can be equivalently viewed as a permutation $\pi$ where $\pi(i) \in [n]$ denotes the bidder whose value is the $i$'th largest. 
Given any posterior $p \in \Delta_{\Theta}$, we define an outcome $o = \{ \pi_t \}_{t \in T} $ of $p$ as a set of permutations  where $\pi_t$ denotes the ranking of bidders based on their expected values under the posterior distribution for value type $t$. Let $\Pi$ denote the set of all permutations over $[n]$. A naive counting argument reveals that there are at most $(n!)^{|T|}$ outcomes that can possibly arise.  However, invoking Lemma \ref{lem:cut-space}, we show that only polynomially many outcomes can arise when $|\Theta|$ is a constant, and moreover all these outcomes can be identified efficiently.  

Let $p \in \Delta_{\Theta}$ denote a generic posterior distribution.  We define the \emph{comparison hyperplane} for bidder $i,j$ under value type $t$, denoted as $l(i,j;t)$, as follows
$$l(i,j;t): \quad \sum_{\theta} V_t(i,\theta) p_{\theta} = \sum_{\theta} V_t(j,\theta) p_{\theta}.$$ 
Note that $l(i,j;t)$ is a hyperplane with vector variables $p$ and there are $\frac{n(n-1)}{2}|T|$ comparison hyperplanes. Now consider $\Delta_{\Theta}$ cut by these comparisons hyperplanes. By Lemma \ref{lem:cut-space}, they generate  $\O \bigg( (\frac{n(n-1)}{2}|T|)^{|\Theta|}\bigg)$ regions, which can be identified efficiently. Moreover, we argue that each region corresponds to a unique outcome $o$. In particular, the interior of any region is a set of posterior distributions. Depending on the side of $l(i,j;t)$  the region is on, we can tell whether bidder $i$ has higher or lower value than bidder $j$ at value type $t$. Aggregating this information across $i,j \in [n]$ and $t \in T$, we can extract an outcome $o = \{ \pi_t \}_{t \in T} $ for this region. Moreover, it is easy to see that different regions will have different outcomes, and let $O$ denote the set of all outcomes that can possibly arise. 

After identifying $O$ and viewing it as the set of needed signals, we can compute the optimal signaling scheme by the following linear program where variable $\varphi(\theta, o)$ is the probability of sending signal $o$.
\begin{lp}
	\maxi{  \sum_{o \in O}  \sum_{t \in T} \sum_{\theta}  \lambda_{\theta} \varphi(\theta, o) V_t(\pi^o_t(2), \theta)  }
	\st 
	\qcon{  \sum_{\theta}  \lambda_{\theta} \varphi(\theta, o) [ V_t(\pi^o_t(i), \theta) - V_t(\pi^o_t(i+1), \theta) ]  \geq  0    }{i \leq  n-1, o \in O, t \in T}
	\qcon{\sum_{o \in O}\varphi(\theta, o) = 1}{\theta \in \Theta}
	\qcon{\varphi(\theta, o) \geq 0}{\theta \in \Theta, o \in O}
\end{lp}
Proposition \ref{prop:auction-poly} then follows from the fact $|O| \leq \O \bigg( (\frac{n(n-1)}{2}|T|)^{|\Theta|}\bigg)$.
\end{proof}


\newpage 
\section{Omitted Proofs from Section \ref{sec:bicriteria}}\label{sec:bicriteria:app}

\subsection{Proof of Proposition \ref{prop:sub:PTAS:best} }\label{sec:bicriteria:PTAS:best}
Similar to the proof  in \cite{Dughmi2017algorithmic}, we also reduce from the same NP-hard problem described as follows but use a slightly different construction of the persuasion instance to accommodate the relaxation of persuasiveness. In particular, \cite{Khot2012} prove that for any positive integer $k$, any integer $q$ such that $q \geq  2^k+ 1$, and an arbitrarily small constant $\eps > 0$,  given an undirected graph $G$, it is NP-hard to distinguish between the following two cases:
	\begin{itemize}
		\item {\bf Case 1}: There is a $q$-colorable induced subgraph of $G$ containing a  $(1-\eps)$ fraction of all vertices, where each color class contains a $ \frac{1-\eps}{q}$ fraction of all vertices.
		\item{\bf Case 2}: Every independent set in $G$ contains less than a $\frac{1}{q^{k+1}}$ fraction of all vertices.
	\end{itemize} 
	Given graph $G$ with vertices $[n] = \set{1,\ldots,n}$ and edges $E$, we will construct a public persuasion instance so that any desired bi-criteria approximate signaling scheme can be used to distinguish these two cases. Let there be $n$ receivers, and let $\Theta= [n]$. In other words, both receivers and states of nature correspond to vertices of the graph. We fix the uniform prior distribution over states of nature --- i.e., the realized state of nature is a uniformly-drawn vertex in the graph. We define the receiver utilities as follows: $u_{i}(\theta) = \frac{1}{2} - \frac{1}{4n}$ if $i = \theta$;  $u_{i}(\theta) =-1 - \frac{1}{4n}$ if $(i,\theta) \in E$;  and $u_{i}(\theta) = -\frac{1}{2n}$ otherwise. We define the sender's utility function, with range $[0,1]$, to be $f_{\theta}(S) = f(S) =|S|$. 
	
	We claim that for any distribution $x \in \Delta_{\Theta}$, the set $S = \{ i \in [n]: \sum_{\theta \in \Theta} x_{\theta} u_i(\theta) \geq -\frac{1}{4n} \}$ is an independent set of $G$. In particular, for any two adjacent nodes $i,j$, if $x_i \geq x_j$, we have 
	\begin{eqnarray*}
\sum_{\theta} x_{\theta} u_j(\theta) &\leq& x_j u_j(j) + x_i u_j(i) + \sum_{\theta \not = i,j} x_{\theta} (-\frac{1}{2n}) \\
& = & x_j (\frac{1}{2} - \frac{1}{4n}) + x_i (  -1 - \frac{1}{4n} ) - (1-x_i-x_j) \frac{1}{2n} \\
& <  & - \frac{1}{4n}.
	\end{eqnarray*}


Thus, node $i \not \in S$. Therefore, at most one of any two adjacent nodes are in $S$. This shows that upon receiving any public signal with any posterior distribution $x$ over $\Theta$, the players who take action $1$ assuming $\frac{1}{4n}$-persuasiveness always form an independent set of $G$. Therefore, if the graph $G$ is from {\bf Case 2}, the sender's expected utility in any $\frac{1}{4n}$-persuasive public signaling scheme is at most $\frac{n}{q^{k+1}}$.
	

Now supposing that $G$ is from {\bf Case 1}, we fix the corresponding coloring of $(1-\epsilon)n$ vertices with colors $k=1,\ldots,q$, and we use this coloring to construct a public scheme achieving expected sender utility at least $\frac{(1-\epsilon)^2}{q}$. The scheme uses $q+1$ signals, and is as follows: if $\theta$ has color $k$ then deterministically send the signal $k$, and if $\theta$ is uncolored then deterministically send the signal $0$. Given signal $k > 0$, the posterior distribution on states of nature is the uniform distribution over the vertices with color $k$ --- an independent set $S_k$ of size $\frac{1-\epsilon}{q} n$. It is easy to verify that receivers $ i \in S_k$ prefer action $1$ to action $0$, since $\sum_{\theta \in S_k} \frac{1}{|S_k|} u_i(\theta) = \frac{1}{|S_k|}  (\frac{1}{2} - \frac{1}{4n} - \frac{|S_k| - 1}{2n}) = \frac{1}{|S_k|} ( \frac{2n - 1 - 2|S_k|+1}{4n} )  > 0$. Therefore, the sender's utility is $f(S_k) = |S_k| = \frac{n(1-\epsilon)}{q}$ whenever $k>0$. Since signal $0$ has probability  $\epsilon$, we conclude that the sender's expected utility is at least $\frac{n(1-\epsilon)^2}{q}$. It is now easy to see that there is no, e.g., $c$-approximate  $\frac{1}{4n}$-persuasive public signaling scheme for any constant $c$ since it can be used to distinguish {\bf Case 1} (utility at least $\frac{n(1-\epsilon)^2}{q}$) and {\bf Case 2} (utility at most $\frac{n}{q^{k+1}}$) for any constants $q, k, \epsilon$.

	

\subsection{Proof of Proposition \ref{prop:private-PTAS} }\label{sec:bicriteria:private:hard}

Previously, Babichenko and Barman \cite{babichenko2016} proved that  it  is NP-hard to obtain a $(1-\frac{1}{e})$-approximate, exactly persuasive, private signaling scheme for this setting. The main technical challenge of our proof is to ``transform'' this single-criteria approximation hardness result to a bi-criteria approximation hardness. To do so, we will actually instead prove the hardness of additive approximation of sender utility by assuming $f(S) \in [0,1]$. The hardness of additive approximation then implies the hardness of multiplicative approximation. 

We consider the following  persuasion instances. There are $n$ receivers  and two states of nature $\theta, \theta'$, each happening with equal probability $1/2$. At any state, the sender's objective  is the same monotone submodular function $f: 2^{[n]} \to [0,1]$. All the receivers have identical payoff structures. In particular, for any $i$, the utility of receiver $i$ is $a \in (0,1)$  in the state $\theta'$ for some constant $a$ to be chosen and $-1$ in the state $\theta$, i.e., $u_i(\theta') = a$ and $u_i(\theta) = -1$ for all $i$.

We start by formulating the problem of computing the optimal $\epsilon$\emph{-persuasive} private signaling scheme, which is central to our reduction.  Observe that any receiver $i$ strictly prefers action $1$ at the state of nature $\theta'$ since $u_i(\theta') = a$. Since $f(S)$ is monotone, the optimal private scheme can, without loss of generality, always recommend action $1$ to receiver $i$ at the state $\theta'$.  As a result, the  optimal $\epsilon$-persuasive private signaling scheme can be formulated as the following linear program, where the variables $\{ p(S) \}_{S \subseteq [n]}$ describe the signaling scheme for the state $\theta$ whereas the scheme always recommend action $1$ in state $\theta'$. That is, $p(S)$ is the probability of recommending action $1$ to receivers in set $S\subseteq [n]$ at the state of nature $\theta$. 

\begin{lp}\label{lp:private_reduction}
	\maxi{ \frac{1}{2}f([n]) + \frac{1}{2} \sum_{S\subseteq [n]} p(S)  f(S) }
	\st 
	\qcon{ \frac{1}{2}[ a \cdot 1 + (-1) \cdot \sum_{S: i \in S } p(S) ] \geq -\epsilon }{i = 1,...,n}
	\con{\sum_{S \subseteq [n]} p(S) = 1}
	\qcon{ p(S) \geq 0}{S \subseteq [n]}
\end{lp}
Note that LP \eqref{lp:private_reduction} slightly differs from the standard formulation for private persuasion in that it relaxed the persuasiveness constraint for action $1$ by $\epsilon$. Moreover, the (relaxed) persuasiveness constraint for action $0$, which is $\frac{1}{2}\cdot (-1) \cdot \sum_{S: i \not  \in S } p(S)  \leq \epsilon $, trivially holds and thus is omitted.   After algebraic simplifications, LP \eqref{lp:private_reduction} becomes the following equivalent linear  program. 

\begin{lp}\label{lp:concave-closure}
	\maxi{ \sum_{S\subseteq [n]} p(S)  f(S) }
	\st 
	\qcon{ \sum_{S: i \in S } p(S)  \leq a  + 2\epsilon }{i = 1,...,n}
	\con{\sum_{S \subseteq [n]} p(S) = 1}
	\qcon{ p(S) \geq 0}{S \subseteq [n]}
\end{lp}
Observe that LP \eqref{lp:concave-closure} is precisely the formulation for computing  the  \emph{concave closure} of the set function $f(S)$ at vector $(a + 2\epsilon, a+2\epsilon, \cdots, a+2\epsilon)$. The only discrepancy here is that the first set of constraints have ``$\leq$'' while not ``$=$'' as in a standard formulation for computing concave closure. However, this is without loss of generality since $f(S)$ is monotone non-decreasing and thus there always exists an optimal solution which makes the first set of constraints all tight.  For convenience, we use $f^+(x)$ to denote the concave closure of $f$ at the vector with all entries equaling $x \in [0,1]$, i.e., $(x, x, \cdots, x)$.   In our constructed instances, the optimal sender utility among all $\epsilon$-persuasive private signaling schemes  is $\frac{1}{2}[f([n]) + f^+(a+2\epsilon)]$. As a special case, $\frac{1}{2}[f([n]) + f^+(a)]$ is the optimal sender utility with exact persuasiveness constraints. The following lemma shows that it is NP-hard to approximate $f^+(x)$ additively.  

\begin{lemma}\cite{Khot2005,babichenko2016}\label{lem:submodular-PTAS-hard}
There exists a constant $c \in (0, 1) $ and constant $a \in (0, 1)$ such that it is NP-hard to approximate $f^+ (a)$  to within $c$ (additively) for any monotone submodular function $f$.\footnote{This lemma is  used in \cite{babichenko2016} and is a  slight generalization of results from \cite{Khot2005}. Both \cite{Khot2005,babichenko2016} used the multiplicative version of the lemma -- i.e., the concave closure cannot be approximated within a multiplicative factor of $(1-1/e)$. However, the reduction in \cite{Khot2005} can be easily adjusted to also prove the hardness for additive approximation by normalizing their constructed submodular functions to be within $[0,1]$. }
\end{lemma}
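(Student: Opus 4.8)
The plan is to derive the claimed \emph{additive} inapproximability from the known \emph{multiplicative} $(1-1/e)$-inapproximability of the concave closure, and then to make the reduction additive simply by normalizing. The deep ingredient, which I would cite rather than reprove, is the PCP-based construction of \cite{Khot2005} (refined for arbitrary monotone submodular objectives in \cite{babichenko2016}): from a hard gap instance it builds, in polynomial time, an explicitly represented monotone submodular function $f$ on a ground set $[n]$ together with a constant $a\in(0,1)$ such that, for some absolute constants $0<\gamma_{\mathrm N}<\gamma_{\mathrm C}$ (with $\gamma_{\mathrm C}-\gamma_{\mathrm N}$ bounded below by a positive constant; one can in fact push $\gamma_{\mathrm N}/\gamma_{\mathrm C}$ down to $1-1/e$), it is NP-hard to distinguish the completeness case $f^{+}(a)\ge\gamma_{\mathrm C}\cdot f([n])$ from the soundness case $f^{+}(a)\le\gamma_{\mathrm N}\cdot f([n])$. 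The point to highlight is that the completeness value is a \emph{constant} fraction of $\max_S f(S)=f([n])$ --- this is exactly what lets the multiplicative gap be converted into a constant additive gap.

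Second, renormalize: replace $f$ by $\hat f(S)=f(S)/f([n])\in[0,1]$. This preserves monotonicity and submodularity, and since scaling by a positive constant commutes with taking the concave closure, it leaves the point $a$ unchanged and turns the two cases into $\hat f^{+}(a)\ge\gamma_{\mathrm C}$ versus $\hat f^{+}(a)\le\gamma_{\mathrm N}$, which differ additively by the positive absolute constant $\gamma_{\mathrm C}-\gamma_{\mathrm N}$. Setting $c:=\tfrac{1}{2}(\gamma_{\mathrm C}-\gamma_{\mathrm N})$, any polynomial-time algorithm computing $\hat f^{+}(a)$ to within additive $c$ would separate completeness from soundness and hence decide an NP-hard problem, which proves the lemma with these (absolute) constants $c$ and $a$.

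The only real obstacle is the first step, which is precisely the content of \cite{Khot2005,babichenko2016}: establishing a constant multiplicative gap for the concave closure at a fixed uniform point requires the full PCP machinery, so I would invoke it as a black box. Everything after that --- verifying that the completeness value is a constant fraction of $f([n])$, and that rescaling $f$ into $[0,1]$ disturbs neither monotonicity, submodularity, the point $a$, nor the gap --- is routine, which is why moving from the multiplicative to the additive statement is essentially free.
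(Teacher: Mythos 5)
Your proposal is correct and takes essentially the same route as the paper, whose entire justification for this lemma is the footnote: cite the multiplicative $(1-1/e)$-hardness of \cite{Khot2005,babichenko2016} as a black box and normalize the constructed submodular function into $[0,1]$ so that the multiplicative gap becomes a constant additive gap. Your added observation that the completeness value in the Khot et al.\ construction is a constant fraction of $f([n])$ --- which is precisely what makes the normalization preserve a constant gap --- is the one detail the paper leaves implicit, and you are right that everything else (monotonicity and submodularity being preserved under positive scaling, the concave closure commuting with scalar multiplication, and the fixed uniform point $a$ being unaffected) is routine.
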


In the remainder of the reduction, we show that any $(1-\delta)$-approximate $\epsilon$-persuasive private persuasion in the constructed instances can be converted to an additively $(2\epsilon/a+2\delta)$-optimal algorithm  for computing $f^+(x)$. Invoking Lemma \ref{lem:submodular-PTAS-hard}, this implies the NP-hardness of designing  a  $(1-\epsilon)$-approximate $\epsilon$-persuasive private signaling scheme in $\poly(n^{1/\epsilon}, |\Theta|)$ time.

Let $\hat{\mathbf{p}} = \{ \hat{p}(S) \}_{S \subseteq [n]}$ be any $\epsilon$-persuasive and $(1 - \delta)$-approximate private signaling scheme for our constructed instance. By definition, $\hat{\mathbf{p}}$ is a feasible solution to LP \eqref{lp:private_reduction} and thus \eqref{lp:concave-closure}. We only need to argue that the objective value of LP  \eqref{lp:concave-closure} at $\hat{\mathbf{p}}$ --- denoted as $\Ex_{S \sim \hat{\mathbf{p}}} f(S) = \sum_{S\subseteq [n]} \hat{p}(S)  f(S)$ ---  is close to its optimal objective value  $f^+(a + 2\epsilon)$.  Note that the $(1 - \delta)$-approximability of $\hat{\mathbf{p}}$ is with respect to the optimal sender utility under \emph{exact} persuasiveness constraints.  In other words, we have 
$$\frac{1}{2} \big[ f([n]) + \Ex_{S \sim \hat{\mathbf{p}}} f(S) \big]  \geq \frac{1- \delta}{2} \big[f([n]) + f^+(a) \big]  ,$$
or equivalently, $\Ex_{S \sim \hat{\mathbf{p}}} f(S)  \geq  (1 - \delta)f^+(a)  - \delta f([n])$. Notice however, we want to show that $\Ex_{S \sim \hat{\mathbf{p}}} f(S)  $ is close to $f^+ (a+2\epsilon)$ while not $f^+(a)$. This gap is filled by the following lemma which shows that $f^+(a)$ is \emph{not} much smaller than  $f^+ (a+2\epsilon)$.

\begin{lemma}\label{lem:bound-closure}
	$ (1 + 2\epsilon/a) f^+(a) \geq f^+(a + 2\epsilon) $.
\end{lemma}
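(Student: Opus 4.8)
The plan is to derive the inequality from a single application of concavity of $f^+$. Recall that $f^+(x)$ is, by definition, the value of the concave closure of $f$ at the point $(x,\dots,x)$, and the concave closure is a concave function of its argument (it is a maximum of a linear objective over the polytope of distributions with a prescribed marginal, hence concave in the marginal). Restricting to the diagonal segment $\{(x,\dots,x): x\in[0,1]\}$ therefore shows that $x\mapsto f^+(x)$ is concave on $[0,1]$. I would also record the elementary fact $f^+(0)=f(\emptyset)\ge 0$: the only distribution over $2^{[n]}$ with all-zero marginals is the point mass on $\emptyset$, and $f$ is non-negative.

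Given these two facts, the argument is immediate. Assuming $0<a<a+2\epsilon\le 1$ (the regime in which $f^+(a+2\epsilon)$ is defined), write $a$ as the convex combination
\[
a \;=\; \frac{a}{a+2\epsilon}\,(a+2\epsilon) \;+\; \frac{2\epsilon}{a+2\epsilon}\cdot 0 .
\]
Concavity of $f^+$ then gives
\[
f^+(a) \;\ge\; \frac{a}{a+2\epsilon}\, f^+(a+2\epsilon) \;+\; \frac{2\epsilon}{a+2\epsilon}\, f^+(0) \;\ge\; \frac{a}{a+2\epsilon}\, f^+(a+2\epsilon),
\]
where the second inequality uses $f^+(0)\ge 0$. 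Multiplying both sides by $(a+2\epsilon)/a = 1 + 2\epsilon/a$ yields exactly $(1+2\epsilon/a)\,f^+(a)\ge f^+(a+2\epsilon)$.

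There is no real obstacle here; the only point requiring a moment's care is that the LP \eqref{lp:concave-closure} appearing in the reduction uses ``$\le$'' marginal constraints rather than ``$=$''. As already noted in the text, since $f$ is monotone non-decreasing one may take an optimal solution with all marginal constraints tight, so the ``$\le$'' version agrees with the concave closure $f^+$, and the concavity and $f^+(0)\ge 0$ facts apply verbatim. (Monotonicity of $f^+$ itself is not needed for this lemma.) I would present the two preliminary facts as a one-line observation and then give the two-line convex-combination computation above.
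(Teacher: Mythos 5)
Your proof is correct and takes a genuinely simpler route than the paper's. You apply concavity once, writing $a$ as the convex combination $\tfrac{a}{a+2\epsilon}(a+2\epsilon)+\tfrac{2\epsilon}{a+2\epsilon}\cdot 0$ and discarding the $f^+(0)\ge 0$ term. The paper instead proceeds in two steps: first an explicit LP-scaling argument to show $f^+(a-2\epsilon)\ge\tfrac{a-2\epsilon}{a}f^+(a)$ (constructing a feasible solution for parameter $a-2\epsilon$ by moving a $\tfrac{2\epsilon}{a}$ fraction of probability mass onto $\emptyset$), and then a second application of concavity in the form $f^+(a-2\epsilon)+f^+(a+2\epsilon)\le 2f^+(a)$ to transfer the bound from $a-2\epsilon$ to $a+2\epsilon$. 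It is worth noting that the paper's scaling step is itself just your one-shot concavity argument in disguise (with $a$ and $a-2\epsilon$ in place of $a+2\epsilon$ and $a$), so your version compresses the paper's two moves into one. Both approaches rest on the same two facts — concavity of the LP value in the right-hand side $a$, and non-negativity $f^+(0)=f(\emptyset)\ge 0$ — and your handling of the ``$\le$'' versus ``$=$'' marginal constraints matches the paper's remark on that point. The only minor thing I would add is an explicit sentence that the ``$\le$'' LP value $V(a)$ is concave in $a$ directly by the standard argument (take a convex combination of the two optimal distributions), so you do not strictly need to pass through the full concave closure; but this is cosmetic.
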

\begin{proof}
	Directly establishing the relationship between $f^+(a)$ and $f^+(a+2\epsilon)$ turns out to be difficult. We instead establish the relation between	$f^+(a)$ and $f^+(a - 2\epsilon)$, and then utilize the concavity of $f^+(x)$ to relate  $f^+(a), f^+(a - 2\epsilon)$ to $f^+(a + 2\epsilon)$. 
	
	We first prove that for any $a \in (0,1)$ and $\epsilon \in (0,\frac{a}{2})$, $f^+(a -2\epsilon) \geq f^+(a) \cdot (1 - 2\epsilon/a)$. Let $\{ p^*(S) \}_{S \subseteq [n]}$ be the distribution that achieves $f^+(a)$, i.e., the optimal solution to LP \eqref{lp:concave-closure} with $
	\epsilon = 0$.  We use $\{ p^*(S) \}_{S \subseteq [n]}$  to  construct a feasible (not necessarily optimal) solution to LP \eqref{lp:concave-closure}  with parameter $a - 2\epsilon$  instead of $a + 2 \epsilon$ in the first set of constraints. For any $S \not = \emptyset$, let $p(S) = \frac{a - 2\epsilon}{a} p^*(S)$ and $p(\emptyset)  = 1 - \sum_{S \not = \emptyset} p(S)$. This is feasible to LP \eqref{lp:concave-closure} with parameter $a - 2\epsilon$ because the probabilities for all the non-empty sets are scaled down by factor $(a - 2\epsilon)/a$. The objective for this new set of variable values is 
	\begin{eqnarray*}
		\sum_{S \subseteq [n]} p(S) f(S) &=& \sum_{S \subseteq [n]} \frac{a - 2\epsilon}{a} \cdot p^*(S)  f(S) + \frac{2\epsilon}{a} \cdot f(\emptyset) \\
		&\geq &   \frac{a - 2\epsilon}{a}  \sum_{S \subseteq [n]}  p^*(S)  f(S) = \frac{a - 2\epsilon}{a}  f^+(a). 
	\end{eqnarray*}
	
	Since $\{ p(S)\}_{S \subseteq [n]}$ is a feasible solution to LP \eqref{lp:concave-closure} with parameter $a-2\epsilon$, we thus have $f^+(a - 2\epsilon) \geq \sum_{S \subseteq [n]} p(S) f(S)  \geq \frac{a - 2\epsilon}{a}  f^+(a)$. Now, observe that $f^+(x)$ is a concave  function in $x \in (0,1)$. This is a well-known property of linear program maximization problem ( see, e.g., \cite{boyd2004convex}).  The concavity implies $f^+(a - 2\epsilon) + f^+(a + 2\epsilon) \leq 2 f^+(a)$. We thus have
	$$f^+(a + 2\epsilon) \leq 2 f^+(a) - f^+(a - 2\epsilon) \leq (1 + 2\epsilon/a) f^+(a).$$
\end{proof}

Invoking  $\delta$-optimality of $\hat{\mathbf{p}}$ and Lemma \ref{lem:bound-closure},  we have 
\begin{eqnarray*}
	\Ex_{S \sim \hat{\mathbf{p}}} f(S)   &\geq&  (1 - \delta)f^+(a)  - \delta f([n])  \\
	& \geq &  \frac{1 - \delta}{1 + 2\epsilon/a}f^+(a+2\epsilon) - \delta  \\
	& \geq & f^+(a+2\epsilon) - 2\epsilon/a  -  2 \delta  
\end{eqnarray*}

This implies that $\hat{\mathbf{p}}$ is an (additive) $(2\epsilon/a+2\delta)$  approximation to LP \eqref{lp:concave-closure}. Crucially, the loss $2\epsilon/a+2\delta$ is a constant since $a$ is also a constant in Lemma \ref{lem:submodular-PTAS-hard}. These overall establish the NP-hardness of designing a $\poly(n^{1/\epsilon}, |\Theta|) $ time algorithm for computing a $(1-\epsilon)$-optimal $\epsilon$-persuasive private signaling scheme.   

\subsection{Proof of Proposition \ref{thm:bicriteria-approx}}\label{sec:bicriteria:app-framework}

We say a distribution $\tilde{p} \in \Delta_{\Theta}$ is $K$-uniform if each of its entry $p_{\theta}$ is a multiple of $1/K$.  With slight abuse of notation, let $\Delta_K \subseteq \Delta_{\Theta}$ denote the set of all $K$-uniform distributions. Note that  $|\Delta_K| = \O( |\Theta|^K)$.  The main idea of the proof is to convert any optimal signaling scheme to an efficiently computable $\epsilon$-persuasive $\alpha(1 - \beta \delta)$-optimal signaling scheme whose posteriors are all $(2\ln(\frac{2}{\delta})/\epsilon^2)$-uniform distributions. For constant $\epsilon, \delta$, we can then search for such a signaling scheme via a polynomial-size linear program by focusing only on $K$-uniform posterior distributions. Details are presented in Algorithm \ref{alg:bicriterial}. 

	\SetKwInOut{Parameter}{Parameters} 	
	\begin{algorithm}[t] 
		\SetAlgoNoLine
		\Parameter{ A small constant $\epsilon > 0$} 
		\KwIn{ $\{ u_i(\theta) \}_{i \in [n], \theta \in \Theta}$, value oracle access to $f$, an $\alpha$-approximate subroutine for $f$}
		\KwOut{ A public signaling scheme} 
		Set $K = 2\ln(\frac{2}{\epsilon})/\epsilon^2$; Compute set $\Delta_K$, consisting of all $K$-uniform distributions \;
		\For{ $\tilde{p} \in \Delta_K $}{ 
			Compute  sets  $\tilde{A} = \{  i \in [n]:  \sum_{\theta} \tilde{p}_{\theta}\cdot u_i(\theta) > \epsilon \} $, $\tilde{B} = \{  i \in [n]:  \sum_{\theta} \tilde{p}_{\theta}\cdot u_i(\theta) < -\epsilon \} $ and $\tilde{C} = [n] \setminus (A \cup B)$ \; 
			Let $\mathbf{s}_{\tilde{A}}(\tilde{p})= \mathbf{1}$, $\mathbf{s}_{\tilde{B}}(\tilde{p}) = \mathbf{0}$; Employ the $\alpha$-approximate subroutine to compute $\mathbf{s}_{\tilde{C}}(\tilde{p})$ as an $\alpha$-approximation to $$\max_{\mathbf{x}_{\tilde{C}} \in \{ 0,1 \}^{\tilde{C}}} f(\mathbf{s}_{\tilde{A}}(\tilde{p}), \mathbf{s}_{\tilde{B}}(\tilde{p}), \mathbf{x}_{\tilde{C}}).$$
		}
		
		Solve the following linear program with variables $x(\tilde{p})$'s to obtain the public signaling scheme:  
		\begin{lp}\label{lp:bicriteria}
			\maxi{  \sum_{\tilde{p} \in \Delta_K} x(\tilde{p}) \cdot f(\mathbf{s}( {\tilde{p}}))  }
			\st
			\con{ \sum_{\tilde{p} \in \Delta_K } x(\tilde{p}) \cdot \tilde{p}  = \lambda }
			\con{ \sum_{\tilde{p} \in \Delta_K } x(\tilde{p}) = 1}
			\qcon{x(\tilde{p}) \geq 0}{\tilde{q} \in \Delta_K} 
		\end{lp}
		\caption{Bi-Criteria Approximation for Public Persuasion} 
		\label{alg:bicriterial}
	\end{algorithm}

To analyze Algorithm \ref{alg:bicriterial}, the main step is to prove that there always exists an $\epsilon$-persuasive $\alpha(1 - \beta \delta)$-optimal signaling scheme whose posteriors are all $K$-uniform distributions for $K =(2\ln(\frac{2}{\delta})/\epsilon^2)$.    We start by examining the sender's expected utility $U(p)$, as a function of any posterior distribution $p \in \Delta_{\Theta}$. In this proof, it will be convenient to view the sender's objective function $f$ as a function of $\mathbf{s} \in \{ 0,1 \}^n$. Let $\mathbf{x}$ denote the vector of best receiver responses to the posterior distribution $p$, which can be constructed as follows. Let $A = \{  i \in [n]:  \sum_{\theta} p_{\theta}\cdot u_i(\theta) > 0 \} $ denote the set of players whose unique best response  is action $1$ under posterior belief $p$, $B = \{  i \in [n]:  \sum_{\theta} p_{\theta}\cdot u_i(\theta) < 0 \} $ denote the set of players whose unique best response is action $0$, and $C = \{  i \in [n]:  \sum_{\theta} p_{\theta}\cdot u_i(\theta) = 0 \} $ is the set of receivers who are indifferent between action $0$ and $1$. So $\mathbf{x}_A = \mathbf{1}$, $\mathbf{x}_B = \mathbf{0}$ and we have the freedom to choose the actions for receivers in set $C$. Under optimality, we have 
\begin{eqnarray}\label{eq:best-set-def}
U(p) =f(\mathbf{x}_A, \mathbf{x}_B, \mathbf{x}_C), \text{ \quad where \quad  } \mathbf{x}_C = \argmax_{ \mathbf{s}_C \in \{ 0,1 \}^C}  f (\mathbf{x}_A, \mathbf{x}_B, \mathbf{s}_C). 
\end{eqnarray}

Let $K = 2\ln(\frac{2}{\delta})/\epsilon^2$ and $\mathcal{K}$ denote a set of $K$ i.i.d. samples of states from $p$. Let $\tilde{p}$ be the empirical distribution of the samples in $\mathcal{K}$; So $\tilde{p} \in \Delta_K$. Note that any player $i$'s expected utility under $\tilde{p}$ is a random variable $ \sum_{\theta} \tilde{ p_{\theta}} \cdot u_i(\theta) $ (depending on the samples), with mean $ \sum_{\theta} p_{\theta}\cdot u_i(\theta) $.  Let $\tilde{A} = \{  i \in [n]:  \sum_{\theta} \tilde{p}_{\theta}\cdot u_i(\theta) > \epsilon \} $ denote the set of players whose unique best response  is action $1$ under $\epsilon$-persuasiveness, $\tilde{B} = \{  i \in [n]:  \sum_{\theta} \tilde{p}_{\theta}\cdot u_i(\theta) < -\epsilon \} $, and $\tilde{C} = \{  i \in [n]:  \sum_{\theta} \tilde{p}_{\theta}\cdot u_i(\theta) \in [-\epsilon, \epsilon] \} $ is the set of receivers who are indifferent between action $0$ and $1$ under $\epsilon$-persuasiveness.   Note that $\tilde{p}$ and $\tilde{A}, \tilde{B}, \tilde{C}$ are also random, with randomness from the samples $\mathcal{K}$.  By standard concentration bound and our choice of $K$, we know that  $$\Pr \bigg( \bigg| \sum_{\theta} \tilde{ p_{\theta}} \cdot u_i(\theta) - \sum_{\theta} p_{\theta}\cdot u_i(\theta) \bigg| \leq   \epsilon \bigg)  \geq 1- \delta, \qquad  \forall i \in [n].$$

 Therefore, for any player $i \in C$, meaning $\sum_{\theta} p_{\theta}\cdot u_i(\theta) = 0$,  we know that $i \in \tilde{C}$ with probability at least $1 - \delta$.  For any $i \in A$, $i \in \tilde{A} \cup \tilde{C}$ with probability at least $1 - \delta$ and for any $i \in B$, $i \in \tilde{B} \cup \tilde{C}$ with probability at least $1 - \delta$. Let $\mathbf{s} \in \{ 0,1 \}^n$ denote the vector of best receiver responses under $\epsilon$-persuasiveness. We have  $\mathbf{s}_{\tilde{A}} = \mathbf{1}$, $\mathbf{s}_{\tilde{B}} = \mathbf{0}$ and the sender has the freedom to choose $\mathbf{s}_{\tilde{C}} \in \{ 0,1 \}_{\tilde{C}}$.   Now, for any fixed $\tilde{A}, \tilde{B}, \tilde{C}$, we define an auxiliary variable $\mathbf{y} \in \{ 0,1 \}^n$ based on the $\mathbf{x}$ described in Equation \eqref{eq:best-set-def}, as follows: 
 \begin{enumerate}
 	\item For any $i \in A$: if $i \in \tilde{A} \cup \tilde{C}$, let $y_i = x_i = 1$; otherwise $i \in \tilde{B}$ and let $y_i = 0$. 
 	\item For any $i \in B$: if $i \in \tilde{B} \cup \tilde{C}$, let $y_i = x_i = 0$;  otherwise $i \in \tilde{A}$ and let $y_i = 1$. 
 	\item for any $i \in C$: if $i \in \tilde{C}$, let $y_i = x_i$; if $i \in \tilde{A}$, let $y_i = 1$; if $i \in \tilde{B}$, let $y_i = 0$. 
 \end{enumerate}
 
Note that $\mathbf{y}$ is a random variable where the randomness comes from sets $\tilde{A}, \tilde{B}, \tilde{C}$.  By definition, $\mathbf{y}$ is a valid vector of receiver best responses under $\epsilon$-persuasiveness. Moreover, we have $y_i = x_i$ with probability at least $1 - \delta$ for any $i \in [n]$.  
  
 By $\alpha$-approximability of $f$, we can efficiently find $\mathbf{s}^*_{\tilde{C}} $ for any fixed $\tilde{A}, \tilde{B}, \tilde{C}$ such that 
 $$f(\mathbf{s}_{\tilde{A}}, \mathbf{s}_{\tilde{B}}, \mathbf{s}^*_{\tilde{C}}) \geq \alpha \cdot  \max_{\mathbf{s}_{\tilde{C} } \in \{ 0,1 \}^{\tilde{C}} }   f (\mathbf{s}_{\tilde{A}}, \mathbf{s}_{\tilde{B}}, \mathbf{s}_{\tilde{C}}) \geq \alpha \cdot   f (\mathbf{y}). $$ 
 where the last inequality is because $\mathbf{y}$ is one valid vector of receiver best responses. Taking expectation over the random samples $\mathcal{K}$ on both sides of the above inequality, we have
\begin{eqnarray*}\label{eq:noise-to-signal}
\Ex_{\mathcal{K}} \bigg[  f (\mathbf{s}_{\tilde{A}}, \mathbf{s}_{\tilde{B}}, \mathbf{s}^*_{\tilde{C}}) \bigg]   & \geq &  \alpha \cdot \Ex_{\mathcal{K}}  \bigg[ f(\mathbf{y})  \bigg]   \geq    \alpha \cdot (1 - \beta \delta)f(\mathbf{x})  =  \alpha (1 - \beta \delta) U(p) 
\end{eqnarray*}
where the second inequality is due to the $\beta$-noise stability of $f$ and the first equality is due to linearity of expectation and $\Ex_{\mathcal{K}}  (\tilde{p}_{\theta}) = p_{\theta}$.   

Note that $\Ex_{\mathcal{K}}  \tilde{p} = p$, i.e., any $p$ can be converted to the expectation of distributions in $\Delta_K$.  Moreover, the above derivation shows that such a  conversion decreases the sender's utility to at least its $\alpha(1- \beta\sigma)$ fraction, assuming $\epsilon$-persuasiveness.  Therefore, given the optimal signaling scheme, we can substitute each of its posterior distribution $p$ by a distribution over $\Delta_K$ as described above, and obtain a signaling scheme which is $\alpha(1 - \beta\delta)$-optimal and $\epsilon$-persuasive with $K =2\ln(\frac{2}{\delta})/\epsilon^2$. We can then search for this signaling scheme by solving  Linear Program \eqref{lp:bicriteria}.  


Note that the above proof also goes through  if both the $\alpha$-approximability of $f$ and $\beta$-stability are additive. That is, when Equation \eqref{eq:alpha-approx} is $$
f(\mathbf{s}^*_{T}, \mathbf{s}^0_{-T})  \geq  \max_{  \mathbf{s}_{T} \in \{ 0,1 \}^{T} } f(\mathbf{s}_{T}, \mathbf{s}^0_{-T}) - \alpha$$
and Equation \eqref{eq:def-stable} is 
\begin{equation*}
\Ex_{T \sim \mathbf{p}} f(T) \geq f(S) - \beta \epsilon,  
\end{equation*} 
the above argument yields an $(\alpha + \beta \sigma)$-optimal and $\epsilon$-persuasive public signaling scheme.

\subsection{Noise Stability of Other Set Functions}\label{append2:0}
In this subsection, we show that 	any set function $f: 2^{[n]} \to \RR_+$ is $n$-noise-stable.  Moreover, there exist submodular functions that are \emph{not} $\beta$-stable for any $\beta < n$. 

We first prove the $n$-noise-stability for any function $f$. Note that any $\epsilon$-noisy distribution $\mathbf{p}$ around $S$ must satisfy $p(S) \geq 1 - n \epsilon$, due to the union bound:  $p(S) \geq 1 - \sum_{i \in S} \Pr(i \not \in T) - \sum_{i \not \in S} \Pr(i \in T) \geq 1- n\epsilon$. This implies $\Ex_{T \sim \mathbf{p}} f(T) \geq p(S)f(S) \geq f(S) - n\epsilon$ for any $f : 2^{[n]} \to [0,1]$, as desired. 

Next we show that supermodular functions do not admit any non-trivial upper bound for noise stability.  Consider the supermodular function $f$ defined as follows: $f(T) = 1$ when $T = [n]$ and $f(T) = 0$ otherwise. We show that for any $\beta < n$, $f$ is not $\beta$-stable. Consider $\epsilon \leq 1/n$ and the following $\epsilon$-noisy distribution around  $S = [n]$: $p(T) = \epsilon$ for any $T$ such that $|T| = n-1$, $p([n]) = 1 - n \epsilon$ and    $p(T) = 0$ otherwise. For this $\epsilon$-noisy distribution, we have $\Ex_{T \sim \mathbf{p}} f(T) = f([n])\cdot p([n]) = 1 - n\epsilon =  f([n]) - n \epsilon$. Thus, $f$ cannot be $\beta$-stable for any $\beta < n$.

\newpage
\section{Proof of Theorem \ref{thm: pub-max-equiv}}\label{sec:relax:app}
We start by formulating the problem of computing the optimal cce-persuasive public scheme in \eqref{lp:optPub-cce}, an exponential-size linear program. The variable $\varphi(\theta, S)$ is the probability of recommending action $1$ to receivers in set $S \subseteq [n]$  at the state of nature $\theta$. 
\begin{lp}\label{lp:optPub-cce}
	\maxi{\sum_{\theta \in \Theta} \lambda(\theta) \sum_{S\subseteq [n]} \varphi(\theta,S)  f_{\theta}(S) }
	\st 
	\qcon{\sum_{\theta \in \Theta}\sum_{S:i\in S} \varphi(\theta,S) \lambda(\theta) x_{\theta,i} u_i(\theta) \geq C_i}{i = 1,...,n}
	\qcon{\sum_{S \subseteq [n]} \varphi(\theta,S) = 1}{\theta \in \Theta}
	\qcon{ \varphi(\theta,S) \geq 0}{\theta \in \Theta; S \subseteq [n]}
\end{lp}
Here $C_i = \max  \{ \sum_{\theta} \lambda_{\theta} u_i(\theta), 0 \} $ is  the optimal utility of receiver $i$ by best responding under the prior belief $\lambda$. The following lemma proves one direction of Theorem \ref{thm: pub-max-equiv}, namely, any efficient algorithm for maximizing $f \in \F$ minus a linear function  can be converted to an efficient algorithm for computing the optimal cce-persuasive public scheme.  

\begin{lemma}\label{lem:comb-to-persuasion}
	If there is a polynomial time algorithm that maximizes $f(S) - \sum_{i \in S} w_i$ for any $f \in \F$ and any weights $w_i  \in \RR$, then there is a polynomial time algorithm that computes the optimal cce-persuasive public scheme for any instance in  $\I(\F)$. 
\end{lemma}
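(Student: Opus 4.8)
The plan is to solve the exponential-size linear program LP~\eqref{lp:optPub-cce} in polynomial time by running the ellipsoid method on its dual, with the assumed combinatorial oracle implementing the separation step. First I would form the LP dual of LP~\eqref{lp:optPub-cce}: assign a variable $y_i \ge 0$ to each cce-persuasiveness constraint and a free variable $z_\theta$ to each normalization constraint $\sum_{S}\varphi(\theta,S)=1$. A routine Lagrangian computation (requiring, for each pair $(\theta,S)$, that the coefficient of $\varphi(\theta,S)$ be nonpositive) yields the dual
\[
\min_{y \ge 0,\, z}\ \ \sum_{\theta \in \Theta} z_\theta \ -\ \sum_{i \in [n]} C_i\, y_i
\qquad\text{s.t.}\qquad
z_\theta \ \ge\ \lambda(\theta)\Bigl( f_\theta(S) + \sum_{i \in S} u_i(\theta)\, y_i \Bigr),\quad \forall\, \theta \in \Theta,\ S \subseteq [n],
\]
where $C_i = \max\{\sum_{\theta}\lambda_\theta u_i(\theta),\,0\}$. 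This dual has only $n+|\Theta|$ variables but exponentially many constraints, one per pair $(\theta,S)$.

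The key point is that separating over these constraints is \emph{exactly} the combinatorial problem in the hypothesis. Given a candidate dual point $(y,z)$ with $y \ge 0$, for each state $\theta$ I would set weights $w_i := -\,u_i(\theta)\,y_i \in \RR$ and call the assumed polynomial-time algorithm to compute $S_\theta^\star \in \argmax_{S \subseteq [n]}\bigl( f_\theta(S) - \sum_{i \in S} w_i\bigr) = \argmax_{S \subseteq [n]}\bigl( f_\theta(S) + \sum_{i \in S} u_i(\theta)\,y_i\bigr)$. If $z_\theta < \lambda(\theta)\bigl( f_\theta(S_\theta^\star) + \sum_{i \in S_\theta^\star} u_i(\theta)\,y_i\bigr)$ for some $\theta$, the constraint indexed by $(\theta, S_\theta^\star)$ is violated and is returned as a separating inequality; otherwise $(y,z)$ is dual-feasible. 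This is a polynomial-time separation oracle (at most $|\Theta|$ oracle calls per query, each with explicitly computable weights), and it is available precisely because the hypothesis supplies the combinatorial optimizer for \emph{arbitrary} real weights.

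With separation in hand I would invoke the equivalence between separation and linear optimization~\cite{GLSbook}. The primal LP~\eqref{lp:optPub-cce} is feasible --- the fully-informative scheme that reveals $\theta$ and recommends $S = \{\, i : u_i(\theta) \ge 0 \,\}$ satisfies \eqref{eq:cce-constraint}, since its left-hand side equals $\sum_{\theta}\lambda_\theta \max\{u_i(\theta),0\} \ge C_i$ --- and its objective is bounded above by $\max_{\theta, S} f_\theta(S) < \infty$, so by LP duality the dual optimum is finite and attained. Running the ellipsoid method on the dual therefore terminates in polynomially many iterations, querying the separation oracle only polynomially often and thereby exposing a polynomial-size family $\mathcal{S} \subseteq \Theta \times 2^{[n]}$ of dual constraints over which the dual has the same optimal value as over all constraints. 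By LP duality again, the original primal LP~\eqref{lp:optPub-cce} restricted to the variables $\{\varphi(\theta,S) : (\theta,S) \in \mathcal{S}\}$ (keeping all normalization and nonnegativity constraints) attains the same optimum as the full primal; this restricted program has polynomial size, and solving it produces an optimal cce-persuasive public signaling scheme.

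The step I expect to require the most care is making the ellipsoid argument rigorous in the value-oracle model: I would need an a priori polynomial bound on the bit-complexity of the dual optimum --- equivalently a bound on the $y_i$'s in terms of $\max_{\theta,S} f_\theta(S)$ and the magnitudes of $\lambda$ and the receiver payoffs --- so that ellipsoid can be run inside a known bounding ball and so that the passage to the polynomial-size restricted primal is valid. Verifying the signs in the dual derivation and confirming that the free variables $z_\theta$ do not obstruct the reduction of separation to the single displayed inequality are further points to check, though I expect these to be routine.
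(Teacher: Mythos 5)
Your proposal is correct and follows essentially the same route as the paper: dualize LP~\eqref{lp:optPub-cce}, observe that separating the exponentially many dual constraints reduces (per state $\theta$, with $w_i=-u_i(\theta)y_i$) to the assumed combinatorial oracle, and invoke the separation--optimization equivalence of~\cite{GLSbook}. Your added remarks on primal feasibility, recovering the primal from the polynomially many violated constraints, and bit-complexity in the value-oracle model are sound supplementary details that the paper leaves implicit.
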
 
\begin{proof}
The proof examines the dual program of LP \eqref{lp:optPub-cce} and  shows that any algorithm for maximizing $f\in \F$ minus a linear function can be employed to construct a separation oracle for the dual. Specifically, the dual of LP \eqref{lp:optPub-cce} is the following LP with variables  $x_{\theta}$ for any $\theta \in \Theta$ and $y_i$ for any $i \in [n]$. 
	\begin{lp}\label{lp:CCBP_dual}
		\mini{ -\sum_{i=1}^n C_i \cdot y_i + \sum_{\theta \in \Theta } x_{\theta}   }
		\st
		\qcon{ x_{\theta} -  \lambda_{\theta} \sum_{i \in S}  u_i(\theta)  y_i \geq \lambda_{\theta} f_{\theta} (S) }{  \theta \in \Theta, S \subseteq [n] }
		\qcon{ y_i \geq 0}{i=1,...,n}
	\end{lp}
	
	Note that LP \eqref{lp:CCBP_dual} has polynomially many variables but exponentially many constraints. To solve LP \eqref{lp:CCBP_dual}, it suffices to design an efficient separation oracle for its feasible region, denoted as $\P$. For any given variable values $\{ \hat{x}_{\theta} \}_{\theta \in \Theta} \cup \{ \hat{y}_i \}_{i \in [n]}$, the second set of constraints of  LP \eqref{lp:CCBP_dual} (i.e., $y_i \geq 0$) is straightforward to check.  For the first set of constraints, for any $\theta \in \Theta$, we need to check whether  $\hat{x}_{\theta} \geq   \lambda_{\theta} \big[ \sum_{i \in S}^n  \hat{y}_i u_i(\theta) +  f_{\theta} (S) \big] $  holds for all $S \subseteq [n]$ or not. This  is equivalent to maximizing $F(s) = \sum_{i \in S}^n  \hat{y}_i u_i(\theta) +  f_{\theta} (S)$ over all $S \subseteq [n]$ and then check whether its optimal objective $OPT$ satisfies $\hat{x}_{\theta} \geq \lambda_{\theta} OPT$ or not.  If so, then constraint $x_{\theta} - \lambda_{\theta} \sum_{i \in S}^n  y_i u_i(\theta) \geq \lambda_{\theta} f_{\theta} (S) $  holds for any $S \subseteq [n]$.   Otherwise, the optimal solution $S^* = \arg \max_{S \subseteq [n]} F(S)$ corresponds to a separating hyperplane $ x_{\theta} -  \lambda_{\theta} \sum_{i\in S^*}   u_i(\theta)  y_i =  \lambda_{\theta} f_{\theta} (S^*) $ since  $\hat{x}_{\theta}  - \lambda_{\theta}  \sum_{i \in S^*}^n   u_i(\theta) \hat{y}_i <  \lambda_{\theta}  f_{\theta} (S^*) $ but any point in  $\P$ must satisfy $x_{\theta}  - \lambda_{\theta}  \sum_{i \in S^*}^n   u_i(\theta) y_i \geq \lambda_{\theta}  f_{\theta} (S^*) $ by definition.  
	
	As a result, any polynomial-time algorithm for maximizing any $f \in \F$ minus a linear function can be used to construct a polynomial time separation oracle for $\P$. By the computational equivalence between separation and optimization \cite{GLSbook}, we can solve LP \eqref{lp:CCBP_dual}, thus its dual LP \eqref{lp:optPub-cce}, in polynomial time.
	
\end{proof}

We now prove the converse, which is the more involved direction.  Namely, given any efficient algorithm for computing the optimal cce-persuasive public scheme for instances form $\I(\F)$, we design an efficient algorithm for the maximizing $f(S) -  \sum_{i \in S} w_i$ for any   $f \in \F$ and weights $\{w_i \}_{i \in [n]}$.  This is also where our proof diverges from that of \cite{Dughmi2017algorithmic} which only applies to the restricted case with monotone non-decreasing sender objectives.  

Given any $f \in \F $ and any weight $\{ \bar{w}_i \}_{ i \in [n]}$, we seek to maximize $\bar{f}(S) = f(S) - \sum_{i \in S} \bar{w}_i$ over $S \subseteq [n]$.  First, let $S_+ = \{ i: \bar{w}_i \geq 0 \}$ and $ S_- = \{ i: \bar{w}_i < 0 \}$ denote the set of indexes of non-negative [resp. negative] $\bar{w}_i$'s. Note that $S_+ \cup S_- = [n]$ and $S_+ \cap S_- = \emptyset$.  

The starting point of our reduction is  the following  linear program (a repetition of LP \eqref{lp:WM_Dual_0} stated here just for convenience), with linear coefficients $\{ \beta_i \}_{ i \in [n]}$ and $\alpha \in \RR$ as parameters, and $\{ w_i\}_{ i \in [n]}$ and $v \in \RR$ as variables:   
\begin{lp}\label{lp:WM_Dual}
	\mini{\sum_{ i \in  [n]} \beta_i \cdot w_i  + \alpha \cdot v} 
	\st
	\qcon{ v + \sum_{i \in S} w_i  \geq f(S) }{ S \subseteq [n]}  
	\qcon{w_i \geq 0}{i \in S_+}
	\qcon{w_i \leq 0 }{i \in S_-} 
\end{lp}

The first main step of our proof is to reduce maximizing $F$ to solving LP \eqref{lp:WM_Dual} for  all possible linear coefficients, as formally stated in the following lemma. 

\begin{lemma}\label{lem:WM_to_dual}
	Maximizing $\bar{f}(S)=f(S) - \sum_{i \in S} \bar{w}_i$ over $S \subseteq [n]$ reduces in polynomial time to solving LP \eqref{lp:WM_Dual} for all possible combinations of objective coefficients $\{ \beta_i \}_{ i \in [n]}$ and $\alpha \in \RR$. 
\end{lemma}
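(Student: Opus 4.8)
The plan is to recast $\max_{S \sse [n]}\bar f(S)$ as a one–dimensional optimization problem over the feasible region of LP \eqref{lp:WM_Dual}, and then to solve it by combining the separation–optimization equivalence of \cite{GLSbook} with a binary search on the value. Let $Q \sse \RR^{n+1}$ denote the feasible set of LP \eqref{lp:WM_Dual}: the set of $(\mathbf{w},v)$ with $v + \sum_{i \in S} w_i \geq f(S)$ for every $S \sse [n]$, and $w_i \geq 0$ for $i \in S_+$, $w_i \leq 0$ for $i \in S_-$. The starting observation is that the given vector $\bar{\mathbf{w}}$ satisfies all the sign constraints of $Q$, by the very definition of $S_+$ and $S_-$, and that once we substitute $\mathbf{w} = \bar{\mathbf{w}}$ the remaining constraints become $v \geq f(S) - \sum_{i \in S}\bar w_i = \bar f(S)$ for all $S$. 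Hence $(\bar{\mathbf{w}}, v) \in Q$ if and only if $v \geq \max_{S}\bar f(S)$, so the quantity we want equals the smallest $v$ with $(\bar{\mathbf{w}}, v) \in Q$.

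Next I would use the fact that an algorithm solving LP \eqref{lp:WM_Dual} for every choice of objective coefficients $\{\beta_i\}_{i\in[n]}$ and $\alpha \in \RR$ is exactly a linear optimization oracle for $Q$. Truncating $Q$ to a large but polynomially-bounded box (to deal with unboundedness), the equivalence of \cite{GLSbook} then provides a polynomial-time separation oracle for $Q$. Given such an oracle, compute $v^* := \max_S \bar f(S)$ by binary search on a candidate value $t$: query the separation oracle at the point $(\bar{\mathbf{w}}, t)$. If it reports membership, then $t \geq v^*$; otherwise it returns a violated defining inequality of $Q$, and since $\bar{\mathbf{w}}$ satisfies every sign constraint (after an innocuous perturbation placing it in their relative interior, if needed), this inequality must be one of the form $v + \sum_{i \in S_0} w_i \geq f(S_0)$, so $S_0$ certifies $\bar f(S_0) > t$ and $t < v^*$. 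Because $f$ (via its value oracle) and the weights $\bar w_i$ have polynomially bounded bit complexity, distinct values among $\{\bar f(S)\}_S$ differ by at least some $\gamma \geq 2^{-\poly(n)}$; hence running the search to resolution below $\gamma$ pins down $v^*$ exactly in polynomially many rounds, and querying the oracle at any $t$ in the interval $(v^* - \gamma, v^*)$ returns a set $S_0$ with $\bar f(S_0) > t$, i.e.\ an exact maximizer of $\bar f$.

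I expect the main obstacle to be the careful invocation of the separation–optimization equivalence: $Q$ has exponentially many facets and is unbounded, so one must (i) exhibit a polynomially-sized box still containing an optimal point $(\bar{\mathbf{w}}, v^*)$ and work inside it, and (ii) note that \cite{GLSbook} only guarantees a weak separation oracle in general, and argue — again through the bit-complexity and gap estimates above — that a sufficiently accurate weak oracle suffices to answer our membership queries and to round the returned hyperplanes to genuinely violated combinatorial constraints. Everything else — reducing ``maximize $\bar f$'' to ``smallest feasible $v$ at $\mathbf{w}=\bar{\mathbf{w}}$'' and the binary search itself — is then routine. Worth noting, and relevant for the rest of the proof of Theorem \ref{thm: pub-max-equiv}: the sign constraints of LP \eqref{lp:WM_Dual} are not decorative but precisely what guarantees $\bar{\mathbf{w}} \in Q$ (for the right $v$) and hence that the separation oracle's certificates are always the combinatorial inequalities $v + \sum_{i \in S} w_i \geq f(S)$.
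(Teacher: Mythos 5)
Your proposal matches the paper's proof: both convert the assumed optimization oracle for LP \eqref{lp:WM_Dual} into a separation oracle for its feasible region via the GLS equivalence, observe that $\bar{\mathbf{w}}$ automatically satisfies the sign constraints so any violated inequality at $(\bar{\mathbf{w}}, v)$ must be a combinatorial one of the form $v + \sum_{i\in S} w_i \geq f(S)$, and binary-search the threshold $v$ to recover a maximizer of $\bar f$. The additional caveats you flag (truncating the unbounded polyhedron to a box, weak versus strong separation) are standard GLS bookkeeping that the paper silently elides, not a difference in approach.
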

\begin{proof}
	First, we show that maximizing $\bar{f}(S) =f(S) - \sum_{i \in S} \bar{w}_i$ reduces in polynomial time to a separation oracle for the feasible region of LP \eqref{lp:WM_Dual}, denoted as $\P$ for convenience.  In particular, given any separation oracle for $\P$, we can check whether  $\{ \bar{w}_i \}_{i \in [n]} \cup \{ v \}$ is in $\P$ or not, for different $v\in \RR$.  Since $\{ \bar{w}_i \}_{i \in [n]}$ always satisfy the second and the third set of constraints by definition, so  $\{ \bar{w}_i \}_{i \in [n]} \cup \{ v \}$ is  feasible if and only if  $v \geq  f(S) - \sum_{i \in S} \bar{w}_i$ for all $S \subseteq [n]$, or equivalently, $v \geq  \max_{S \subseteq [n]}  [ f(S) - \sum_{i \in S} \bar{w}_i ]$. As a result, we can binary search for the $\bar{v}$ which makes  $\{ \bar{w}_i \}_{i \in [n]} \cup \{ \bar{v} \}$ almost feasible, but not quite. More precisely, let $B$ denote the bit complexity of $f(S)$ and $\bar{w}_i$, then the binary search returns the exact optimal objective of $F$ after $O(B)$ steps. By setting $\bar{v}$ equaling the optimal objective minus $2^{-B}$,   $\{ \bar{w}_i \}_{i \in [n]} \cup \{ \bar{v} \}$ will be infeasible and the returned separating hyperplane corresponds to the optimal solution to the problem of maximizing $F$. 
	
	Due to the polynomial time equivalence between separation and optimization \cite{GLSbook}, we know that maximizing $F$ also reduces to solving LP \eqref{lp:WM_Dual} for any parameters $\{ \beta_i \}_{ i \in [n]}$ and $\alpha \in \RR$, as desired. 
\end{proof}

We remark that the key difference between our reduction and that of \cite{Dughmi2017algorithmic} is in the construction of LP \eqref{lp:WM_Dual}. In particular, \cite{Dughmi2017algorithmic}  constructed a similar LP, but our LP \eqref{lp:WM_Dual} is more carefully constrained. This seems necessary for the case with general sender objectives. However, the challenge is that our more constrained LP is also arguably more difficult to solve. This is why our proof requires a more carefully crafted construction of the  persuasion instances in order to take care of the  intricacies arising from these additional constraints, as we will do next.  

In particular, the second main step of our reduction is to reduce solving LP \eqref{lp:WM_Dual} to computing the optimal cce-persuasive public scheme for instances in $\I (\F)$. We start by ruling out some situations where  LP \eqref{lp:WM_Dual} can be directly solved without employing persuasion problems as a subroutine, as stated in the following lemma. This will simplify our reduction later.    
\begin{lemma}\label{lem:WM_dual_simplification}
	The optimal solution to LP \eqref{lp:WM_Dual} can be computed in polynomial time when $\alpha \leq 0$.  
\end{lemma}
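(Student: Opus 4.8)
The plan is to split on the sign of $\alpha$ into the two sub‑cases $\alpha<0$ and $\alpha=0$ and solve LP \eqref{lp:WM_Dual} by inspection in each. The one structural fact both cases rely on is that the variable $v$ appears only in the constraints $v+\sum_{i\in S}w_i\ge f(S)$, and always with coefficient $+1$; hence enlarging $v$ never destroys feasibility. In particular, taking $w=\mathbf{0}$ and $v$ equal to the (finite) number $\max_{S\subseteq[n]}f(S)$ shows that the feasible region $\P$ of LP \eqref{lp:WM_Dual} is \emph{nonempty} for every choice of parameters --- even though this particular $v$ may not be computable from the value oracle, its mere existence is all that is needed.

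For $\alpha<0$: the direction $(0,\dots,0,1)$ lies in the recession cone of the nonempty polyhedron $\P$ (adding it to any feasible point only relaxes the first family of constraints), and its objective value is $\alpha<0$, so LP \eqref{lp:WM_Dual} is unbounded below and has no optimal solution. This is reported in constant time, with no oracle calls.

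For $\alpha=0$: the objective $\sum_{i\in[n]}\beta_i w_i$ no longer involves $v$, and by the observation above every $w$ satisfying the sign constraints extends to a feasible point; hence LP \eqref{lp:WM_Dual} has the same optimal value as the trivial orthant program $\min\{\sum_i\beta_i w_i:\ w_i\ge 0\ \forall i\in S_+,\ w_i\le 0\ \forall i\in S_-\}$, which decomposes coordinatewise. It is unbounded below iff $\beta_i<0$ for some $i\in S_+$ or $\beta_i>0$ for some $i\in S_-$ (certified by the recession direction $e_i$, resp.\ $-e_i$); otherwise every summand is nonnegative, the optimum is $0$, attained at $w^\star=\mathbf{0}$ (with $v$ set to any sufficiently large value, which is immaterial since $\alpha=0$). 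All of this is decided in $O(n)$ time by reading off the signs of the $\beta_i$'s.

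The step needing the most care is bookkeeping rather than computation: ensuring $\P$ is never empty (so that $\alpha<0$ gives ``unbounded'' and not ``infeasible''), and being content, in the $\alpha=0$ bounded case, to output the optimal value together with an optimal $w$, the $v$‑coordinate being unconstrained by the objective. Neither is a genuine obstacle. Since LP \eqref{lp:WM_Dual} is thereby solved directly whenever $\alpha\le 0$, only the case $\alpha>0$ remains, which is where the persuasion construction will be invoked.
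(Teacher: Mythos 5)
Your proof is correct and follows essentially the same route as the paper's: split on $\alpha<0$ versus $\alpha=0$, use $v\to\infty$ along the recession direction to get unboundedness when $\alpha<0$, and for $\alpha=0$ reduce to a coordinatewise sign check on the $\beta_i$'s, reporting unboundedness or the optimum $0$ at $w=\mathbf{0}$ accordingly. Your recession-cone and orthant-decomposition phrasing is a slightly tidier packaging, but the underlying argument and case structure are the same.
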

\begin{proof}
	The proof follows a case analysis. When $\alpha <0$, we can set $w_i = 0$ for all $i$ and let $v$ to be arbitrarily large. It is easy to verify that this variable assignment will always be feasible to LP \eqref{lp:WM_Dual} and leads to an objective of $\alpha v$, which tends to $-\infty$  as $v \to \infty$. 
	
	Next, we consider the case of $\alpha = 0$, with another level of case analysis. 
	\begin{enumerate}
		\item  If there exists $i \in S_+$ such that $\beta_i < 0$, we set $w_i = v/2$ and $w_j = 0 $ for all $j \not = i$. By letting $v \to \infty$, it is easy to verify that this variable assignment will be feasible to LP \eqref{lp:WM_Dual} and lead to objective value $\beta_i v/2$, again tending to $-\infty$ as $v \to \infty$. 
		\item Similarly,  if there exists $i \in S_-$ such that $\beta_i > 0$, we can set $w_i = -v/2$ and $w_j = 0 $ for all $j \not = i$, resulting in an objective of $-\infty$ as $v \to \infty$.
		\item Otherwise, we must have $\beta_i \geq 0$ for all $i \in S_+$ and $\beta_i \leq 0$ for all $i \in S_-$.  This implies that  the objective value equals $\sum_{i \in S_+} \beta_i w_i + \sum_{i \in S_-} \beta_i w_i + \alpha v \geq 0$ for any feasible solution. Now we can set $w_i = 0$ for all $i \in [n]$ and $v$ to be sufficiently large. This  will lead to a feasible and optimal solution to LP \eqref{lp:WM_Dual}, with optimal objective value $0$.  
	\end{enumerate}
	
	Therefore, when $\alpha \leq 0$, LP \eqref{lp:WM_Dual} can be directly solved in polynomial time. 
\end{proof}

As a result of Lemma  \ref{lem:WM_dual_simplification}, to solve LP \eqref{lp:WM_Dual} we only need to focus on the situation $\alpha>0$.  This case turns out to reduce to persuasion over $\I (\F)$, as formally stated in the following lemma. Note that, Lemma \ref{lem:VM_dual} and \ref{lem:WM_dual_simplification},  together with Lemma \ref{lem:WM_to_dual}, prove the other direction of Theorem \ref{thm: pub-max-equiv}.  

\begin{lemma}\label{lem:VM_dual}
	Solving LP \eqref{lp:WM_Primal} for $\alpha > 0$ and arbitrary $\{ \beta_i \}_{i=1}^n$ reduces in polynomial time to computing the optimal cce-persuasive public schemes for instances in $\I (\F)$. 
\end{lemma}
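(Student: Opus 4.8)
The plan is to prove Lemma~\ref{lem:VM_dual} by realizing LP~\eqref{lp:WM_Primal} --- the maximization LP dual to LP~\eqref{lp:WM_Dual}, with variables $\{z(S)\}_{S\subseteq[n]}$, constraints $z\ge 0$, $\sum_S z(S)=\alpha$, $\sum_{S\ni i}z(S)\le\beta_i$ for $i\in S_+$ and $\sum_{S\ni i}z(S)\ge\beta_i$ for $i\in S_-$, and objective $\max\sum_S z(S)f(S)$ --- as (the $\theta_0$-projection of) the cce-persuasion program LP~\eqref{lp:optPub-cce} of a carefully constructed instance in $\I(\F)$. First I would normalize: rescaling the objective of LP~\eqref{lp:WM_Dual} by a positive scalar changes neither its feasible region nor its optimal solutions, so I may assume $\alpha\in(0,1)$. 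Then a cheap preprocessing step: if some $i\in S_+$ has $\beta_i<0$, or some $i\in S_-$ has $\beta_i>\alpha$, then LP~\eqref{lp:WM_Primal} is infeasible and LP~\eqref{lp:WM_Dual} is unbounded below, so we report that without calling the persuasion oracle; otherwise LP~\eqref{lp:WM_Primal} is feasible (e.g.\ put mass $\alpha$ on the set $S_-$) and obviously bounded.

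The instance I would build has two states $\Theta=\{\theta_0,\theta_1\}$ with prior $\lambda(\theta_0)=\alpha$, $\lambda(\theta_1)=1-\alpha$; sender objectives $f_{\theta_0}=f\in\F$ and $f_{\theta_1}\equiv 0$ (the trivial function, which is in $\F$ by hypothesis); and receiver payoffs $u_i(\theta_0)=-1$, $u_i(\theta_1)=\beta_i/(1-\alpha)$ for $i\in S_+$, and $u_i(\theta_0)=1$, $u_i(\theta_1)=(\beta_i-\alpha)/(1-\alpha)$ for $i\in S_-$. In LP~\eqref{lp:optPub-cce} for this instance the objective is $\alpha\sum_S\varphi(\theta_0,S)f(S)$, so under the substitution $z(S):=\alpha\,\varphi(\theta_0,S)$ it becomes exactly the objective of LP~\eqref{lp:WM_Primal} and $\sum_S z(S)=\alpha$. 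The crucial observation is that the scheme in the dummy state $\theta_1$ never appears in the objective, so the LP is free to pick the marginals $q_i:=\sum_{S\ni i}\varphi(\theta_1,S)$ to be any values in $[0,1]$ independently (realized by a product distribution); hence a scaled distribution $\{z(S)\}$ extends to a feasible point of LP~\eqref{lp:optPub-cce} if and only if, for every $i$, $u_i(\theta_0)\sum_{S\ni i}z(S)+(1-\alpha)\max\{u_i(\theta_1),0\}\geq C_i$, where $C_i=\max\{\alpha u_i(\theta_0)+(1-\alpha)u_i(\theta_1),0\}$. I would then check, by a short case analysis, that with the chosen payoffs this condition is precisely $\sum_{S\ni i}z(S)\le\beta_i$ when $i\in S_+$ (there $C_i=0$ and $\max\{u_i(\theta_1),0\}=\beta_i/(1-\alpha)$) and $\sum_{S\ni i}z(S)\ge\beta_i$ when $i\in S_-$ (there $\max\{u_i(\theta_1),0\}=0$ and $C_i=\max\{\beta_i,0\}$, which equals $\beta_i$ on the feasible range). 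Thus LP~\eqref{lp:optPub-cce} and LP~\eqref{lp:WM_Primal} have equal optima, and from an optimal cce-persuasive public scheme $\varphi^*$ one reads off $z^*(S)=\alpha\,\varphi^*(\theta_0,S)$, an optimal solution of LP~\eqref{lp:WM_Primal}.

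To close the reduction I would note that the feasible polytope of LP~\eqref{lp:WM_Primal} has only $n+1$ inequalities beyond nonnegativity, so $z^*$ sparsifies in polynomial time to an optimal basic solution of support at most $n+1$; by strong duality and complementary slackness the corresponding polynomially many constraints of LP~\eqref{lp:WM_Dual} may be taken tight, and solving the resulting small linear system recovers an optimal solution of LP~\eqref{lp:WM_Dual}. Combined with Lemma~\ref{lem:WM_dual_simplification} for $\alpha\le 0$, this gives an optimization oracle over the feasible region of LP~\eqref{lp:WM_Dual} for every objective, which by Lemma~\ref{lem:WM_to_dual} lets us maximize $\bar{f}(S)=f(S)-\sum_{i\in S}\bar{w}_i$, and together with Lemma~\ref{lem:comb-to-persuasion} completes Theorem~\ref{thm: pub-max-equiv}.

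I expect the main obstacle to be exactly the step where the cce constraints must reproduce the dual constraints of LP~\eqref{lp:WM_Primal}. Unlike the monotone setting of \cite{Dughmi2017algorithmic}, where all $w_i\ge 0$, the dual has only upper-bound constraints $\sum_{S\ni i}z(S)\le\beta_i$, and a single receiver-payoff profile suffices, here $S_-\ne\emptyset$ forces both upper- and lower-bound constraints; since each receiver contributes exactly one cce inequality of the rigid form ``expected utility under the scheme $\ge$ prior-optimal utility $C_i$'', we must simultaneously encode an upper bound for $i\in S_+$ and a lower bound for $i\in S_-$ while keeping $\lambda$ a genuine distribution and keeping each $C_i$ consistent with the chosen $u_i$. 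This is what dictates the opposite signs of $u_i(\theta_0)$ on $S_+$ versus $S_-$, the rescaling $\alpha\in(0,1)$, the use of the dummy state's marginals $q_i$ to supply one-sided slack, and the need to peel off out-of-range values of $\beta_i$ and check the boundary cases $\beta_i=0$ and $\beta_i=\alpha$ in the preprocessing step.
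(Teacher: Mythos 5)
Your construction is essentially the paper's: both encode LP~\eqref{lp:WM_Primal} as the cce-persuasion LP of a two-state instance where one state is a dummy (sender objective $\equiv 0$), the other carries $f$, and receiver payoffs are tuned so that each cce constraint collapses to the corresponding marginal constraint $\sum_{S\ni i}z(S)\lessgtr\beta_i$; your observation that the dummy-state marginals $q_i$ may be chosen freely to slacken the cce constraints plays exactly the role of the paper's normalization $\varphi^*(\theta_0,S_+)=1$. The only real differences are cosmetic (you absorb $\alpha$ into the prior $\lambda$, while the paper first rescales to $\alpha=1$ and uses the uniform prior and symmetric payoffs) plus one useful addition: you spell out the complementary-slackness step for recovering an optimal solution of LP~\eqref{lp:WM_Dual} from the dual optimum, which the paper leaves implicit inside the appeal to GLS equivalence. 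One minor imprecision: your claims ``$C_i=0$ for $i\in S_+$'' and ``$C_i=\beta_i$ for $i\in S_-$'' additionally require $\beta_i\le\alpha$ and $\beta_i\ge 0$, respectively, whereas your preprocessing only clips the opposite sides; this is harmless because in the uncovered cases both the cce constraint and the matching LP~\eqref{lp:WM_Primal} constraint are vacuous, but the paper avoids the issue cleanly by clipping every $\beta_i$ into $[0,1]$ before constructing the instance.
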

\begin{proof}
	Since $\alpha > 0$ and re-scaling all the coefficients of a linear program by a strictly positive factor will not change its optimal solution, we can rescale  coefficients of  LP \eqref{lp:WM_Dual} by the factor $1/\alpha$ to obtain an equivalent LP with $\alpha = 1$. We instead consider solving the  dual program of LP \eqref{lp:WM_Dual} (assuming $\alpha=1$), as follows: 
	\begin{lp}\label{lp:WM_Primal}
		\maxi{\sum_{ S \subseteq [n] } p(S) \cdot f(S)} 
		\st
		\con{\sum_{ S \subseteq [n]} p(S)  = 1 }
		\qcon{\sum_{S: i \in S} p(S)  \leq  \beta_i}{i \in S_+}
		\qcon{\sum_{S: i \in S} p(S)  \geq  \beta_i}{i \in S_-}
		\qcon{p(S) \geq 0}{ S \subseteq [n]}
	\end{lp} 
	where $ p(S)$ for all $S \subseteq [n]$ are variables. 
	
	We claim that, to solve LP \eqref{lp:WM_Primal}, it is without loss of generality to focus on the case $\beta_i \in [0,1]$ for any  $i \in [n]$. This is because the constraints $p(S) \geq 0$ and $\sum_{S \subseteq [n]} p(S) = 1$ imply $\sum_{S: i \in S} p(S) \in [0,1]$. Therefore, for any $i \in S_+$,  $\beta_i <0$ will imply infeasibility of LP \eqref{lp:WM_Primal} while $\beta_i > 1$ will lead to the same feasible space as $\beta_i = 1$. In other words, it is without loss to only consider $\beta_i \in [0,1]$ for $i \in S_+$. Similar argument shows that we can also w.l.o.g. focus on $\beta_i \in [0,1]$ for any $i \in S_-$.   
	
	Assuming $\beta_i \in [0,1]$ for all $ i \in [n]$, we now construct a persuasion instance and show that its optimal cce-persuasive public scheme  informs an  optimal solution to LP \eqref{lp:WM_Primal}. There are $n$ receivers and two states of nature $\theta_0, \theta_1$ with $\lambda(\theta_0) = \lambda(\theta_1) = 1/2$. The sender's utility function satisfies  $f_{\theta_0} \equiv 0$ (i.e., the trivial function) and   $f_{\theta_1} = f$ (i.e., the function of our interest). Define $u_i(\theta_0) = \beta_i$ and $u_i (\theta_1) = -1$ for any $i \in S_+$, while define $u_i(\theta_0) = -(1-\beta_i)$ and $u_i(\theta_1) = 1$  for any $i \in S_-$.   
	
	Let $\varphi^*$ be an optimal cce-persuasive public scheme, in particular an optimal solution to the instantiation of LP \eqref{lp:optPub-cce}  for our instance. Without loss of generality, we can adjust $\varphi^*$ so that at the state $\theta_0$ it always recommends action $1$ to any receiver $i \in S_+$ and recommends action $0$ to any receiver $i \in S_-$  (i.e., setting $\varphi^*(\theta_0, S_+) = 1$).  This will maintain the feasibility of $\varphi^*$  because the adjustment does not violate any cce-persuasiveness constraints  ---  it recommends each receiver's true optimal action at the state $\theta_0$ and this would only strengthen each receiver's incentive. Moreover, the adjustment maintains optimality of $\varphi^*$ because the sender's utility satisfies $f_{\theta_0} \equiv 0$ anyway. 
	
	We now instantiate LP \eqref{lp:optPub-cce} for our constructed instance, w.l.o.g., after restricting $\varphi^*(\theta_0, S_+) = 1 $. This leads to the following linear program, for which we know $\varphi^*$ is an optimal solution. 
	\begin{lp}\label{lp:optInstance}
		\maxi{\frac{1}{2} \cdot 0 + \frac{1}{2}  \sum_{S\subseteq [n]} \varphi(\theta_1,S)  f(S) }
		\st 
		\con{\sum_{S \subseteq [n]} \varphi(\theta_1,S) = 1}
		\qcon{1 \cdot (\beta_i) + \sum_{S:i\in S} \varphi(\theta_1,S) \cdot (-1) \geq  \max \{ \beta_i - 1, 0 \}}{i \in S_+}
		\qcon{ 0 \cdot (-1 + \beta_i) + \sum_{S:i\in S} \varphi(\theta_1,S) \cdot 1 \geq  \max \{ - 1+ \beta_i  +1, 0 \}}{i \in S_-}
		\qcon{ \varphi(\theta_1,S) \geq 0}{\theta \in \Theta; S \subseteq [n]}
	\end{lp}  
	After simplifications, it is easy to see that the second set of constraints is  precisely  $\sum_{S:i\in S} \varphi(\theta_1,S)  \leq \beta_i$ for any $i \in S_+$ and the third set of constraints is $\sum_{S:i\in S} \varphi(\theta_1,S)  \geq \beta_i$ for any $i \in S_-$. It is now clear that setting $p(S) = \varphi^*(\theta_1, S)$ yields an optimal solution to LP~\eqref{lp:WM_Primal}. 
\end{proof}

\end{document}